\numberwithin{equation}{section}
\newtheorem{theorem}{Theorem}[section]
\newtheorem{lemma}[theorem]{Lemma}
\newtheorem{proposition}[theorem]{Proposition}
\newtheorem{rem}[theorem]{Remark}
\newcommand{\ens}[2]{\left\{#1\,;\,#2\right\}}
\newcommand{\R}{\mathbb{R}}
\renewcommand{\tilde}{\widetilde}
\newcommand{\cA}{{\ensuremath{\mathcal A}} }
\newcommand{\cB}{{\ensuremath{\mathcal B}} }
\newcommand{\cF}{{\ensuremath{\mathcal F}} }
\newcommand{\cE}{{\ensuremath{\mathcal E}} }
\newcommand{\cC}{{\ensuremath{\mathcal C}} }
\newcommand{\cN}{{\ensuremath{\mathcal N}} }
\newcommand{\cM}{{\ensuremath{\mathcal M}} }
\newcommand{\cD}{{\ensuremath{\mathcal D}} }
\newcommand{\cU}{{\ensuremath{\mathcal U}} }
\newcommand{\cV}{{\ensuremath{\mathcal V}} }
\newcommand{\cR}{{\ensuremath{\mathcal R}} }
\newcommand{\bC}{{\ensuremath{\mathbf C}} }
\DeclareMathSymbol{\leqslant}{\mathalpha}{AMSa}{"36} % nicer `smaller or equal'
\DeclareMathSymbol{\geqslant}{\mathalpha}{AMSa}{"3E} % nicer `larger or equal'
\DeclareMathSymbol{\eset}{\mathalpha}{AMSb}{"3F}     % nicer `emptyset'
\renewcommand{\leq}{\;\leqslant\;}                   % redef. of < or =
\renewcommand{\geq}{\;\geqslant\;}                   % redef. of > or =
\newcommand{\dd}{\,\text{\rm d}}             % a straight d for differentials
\DeclareMathOperator*{\Supp}{Supp}
\newcommand{\suptwo}[2]{\sup_{\substack{#1 \\ #2}}} % sup with 2 lines
\newcommand{\bbC}{{\ensuremath{\mathbb C}} }
\newcommand{\bbR}{{\ensuremath{\mathbb R}} }
\newcommand{\bbS}{{\ensuremath{\mathbb S}} }
\newcommand{\ga}{\alpha}
\newcommand{\gb}{\beta}
\newcommand{\gd}{\delta}
\newcommand{\gep}{\varepsilon}       % \ge already exists...
\newcommand{\gp}{\varphi}
\newcommand{\gD}{\Delta}
\newcommand{\go}{\omega}
\newcommand{\gl}{\lambda}
\newcommand{\gs}{\sigma}
\newcommand{\gt}{\vartheta}
\newcommand{\gtta}{\theta}
\def\captionfont@{\footnotesize}
\def\captionheadfont@{\scshape}
\long\def\@makecaption#1#2{%
  \vspace{2mm}
  \setbox\@tempboxa\vbox{\color@setgroup
    \advance\hsize-6pc\noindent
    \captionfont@\captionheadfont@#1\@xp\@ifnotempty\@xp
        {\@cdr#2\@nil}{.\captionfont@\upshape\enspace#2}%
    \unskip\kern-6pc\par
    \global\setbox\@ne\lastbox\color@endgroup}%
  \ifhbox\@ne % the normal case
    \setbox\@ne\hbox{\unhbox\@ne\unskip\unskip\unpenalty\unkern}%
  \fi
  \ifdim\wd\@tempboxa=\z@ % this means caption will fit on one line
    \setbox\@ne\hbox to\columnwidth{\hss\kern-6pc\box\@ne\hss}%
  \else % tempboxa contained more than one line
    \setbox\@ne\vbox{\unvbox\@tempboxa\parskip\z@skip
        \noindent\unhbox\@ne\advance\hsize-6pc\par}%
\fi
  \ifnum\@tempcnta<64 % if the float IS a figure...
    \addvspace\abovecaptionskip
    \moveright 3pc\box\@ne
  \else % if the float IS NOT a figure...
    \moveright 3pc\box\@ne
    \nobreak
    \vskip\belowcaptionskip
  \fi
\relax
}
\def\writefig#1 #2 #3 {\rlap{\kern #1 truecm
\raise #2 truecm \hbox{#3}}}
\newcommand{\gap}{\gl_{K}}
\newcommand{\cro}[2]{\ensuremath{\left\langle#1\, ,\, #2\right\rangle}} 
\newcommand{\crosqmu}[2]{\ensuremath{\left\langle#1\, ,\, #2\right\rangle_{ -1, {q_0},\mu}}}
\newcommand{\crosq}[2]{\ensuremath{\left\langle#1\, ,\, #2\right\rangle_{ -1, {q_0}}}}
\newcommand{\croLqmu}[2]{\ensuremath{\left\langle#1\, ,\, #2\right\rangle_{ 2, {q_0},\mu}}} 
\newcommand{\croLq}[2]{\ensuremath{\left\langle#1\, ,\, #2\right\rangle_{ 2, {q_0}}}}
\newcommand{\Nsqmu}[1]{\ensuremath{\left\|\,#1\,\right\|_{ -1, {q_0}, \mu}}}
\newcommand{\NLqmu}[1]{\left\|\,#1\,\right\|_{2, {q_0}, \mu}}
\newcommand{\NLmu}[1]{\left\|\,#1\,\right\|_{2, \mu}}
\newcommand{\Ninf}[1]{\left\|\,#1\,\right\|_{\infty}}
\newcommand{\intS}{\int_{\bbS}}
\newcommand{\intR}{\int_{\bbR}}
\newcommand{\intSR}{\int_{\bbR\times{\bbS }}}
\newcommand{\In}{\ensuremath{I_n}}
\newcommand{\Jn}{\ensuremath{J_n}}
\begin{document}

\title[Coupled noisy oscillators 
and random natural frequencies]{Coherence stability  
and effect of random natural frequencies in populations of coupled oscillators}

\author{Giambattista Giacomin}
\address{
  Universit{\'e} Paris Diderot (Paris 7) and Laboratoire de Probabilit{\'e}s et Mod\`eles Al\'eatoires (CNRS),
U.F.R.                Math\'ematiques, Case 7012 (site Chevaleret)
             75205 Paris Cedex 13, France
}
%\email{giacomin\@@math.jussieu.fr}
\author{Eric Lu\c{c}on}
\address{Universit{\'e} Paris 6 -- Pierre et Marie Curie and Laboratoire de Probabilit{\'e}s et Mod\`eles Al\'eatoires (CNRS U.M.R. 7599), U.F.R.
Mathematiques, Case 188, 4 place
Jussieu, 75252 Paris cedex 05, France
}

\author{Christophe Poquet}
\address{
  Universit{\'e} Paris Diderot (Paris 7) and Laboratoire de Probabilit{\'e}s et Mod\`eles Al\'eatoires (CNRS),
U.F.R.                Math\'ematiques, Case 7012 (site Chevaleret)
             75205 Paris Cedex 13, France
}

\date{\today}

\begin{abstract}
We consider the (noisy) Kuramoto model, that is   a population of $N$ oscillators, or rotators,
 with mean-field interaction. Each oscillator has its own randomly chosen natural frequency (quenched disorder)  
 and it is stirred 
by Brownian motion. In the limit $N \to \infty$    this model is accurately described
by a (deterministic) Fokker-Planck equation. We study this equation and obtain quantitatively
sharp results in the limit of weak disorder. 
 We show that, in general, even when the natural frequencies have zero mean the oscillators 
synchronize  (for sufficiently strong interaction) around a common rotating phase, whose frequency is sharply estimated. We also establish  the
stability properties of these solutions (in fact, limit cycles). 
These results are obtained by identifying the stable hyperbolic manifold of stationary 
solutions of an associated non disordered model and by exploiting the robustness
of hyperbolic structures under suitable perturbations. 
When the disorder distribution is symmetric the speed vanishes and  there is a
one parameter family of stationary solutions, as pointed out by H.~Sakaguchi  \cite{cf:Sakaguchi}: in this case we provide more precise  stability  estimates.
The methods we use apply beyond the Kuramoto model and we develop here
the case of active rotator models, that is the case in which the dynamics 
of each rotator in absence of interaction and noise  is not simply a rotation.  
  \\[10pt]
  2010 \textit{Mathematics Subject Classification: 37N25, 82C26, 82C31,  92B20}
  \\[10pt]
  \textit{Keywords:  Coupled oscillator systems, Kuramoto model, Fokker-Planck PDE,  Normally hyperbolic manifolds, coherence stability, rotating waves}
\end{abstract}

\maketitle

\section{Introduction}
\label{sec:intro}
\subsection{Collective phenomena in noisy coupled oscillators}
Coupled oscillator models are omnipresent in the scientific literature because 
the emergence of coherent behavior in large families of interacting units that have 
a periodic behavior, that we generically call {\sl oscillators}, is an extremely common 
phenomenon (crickets chirping, fireflies flashing, planets orbiting, neurons firing,...).
It is impossible to properly account for the literature and the various models
proposed for
 this kind of phenomena, but while a precise description of each of the different 
 instances in which synchronization emerges demands specific, possibly very complex, 
 models,    the {\sl Kuramoto
 model} has emerged as capturing some of the
 fundamental aspects of synchronization \cite{cf:acebron}.
 It can be introduced  
 via
 the system of $N$ stochastic differential equations
\begin{equation}
\label{eq:Kmod}
 \dd\gp^\go_j(t)\, =\,
  \go_j \dd t - \frac KN \sum_{i=1}^N \sin(\gp^\go_j(t)-\gp^\go_i(t)) \dd t + \gs \dd B_j(t)\, ,
\end{equation}
for $j=1, \ldots, N$, where
\begin{enumerate}
 \item $\{B_j\}_{j=1, \ldots,  N}$ is a family of standard independent Brownian motions: in physical terms,
 this is a {\sl thermal  noise};
 \item $\{\go_j\}_{j=1\cdots N}$ is a family of independent identically distributed random variables of law $\mu$: they are are the {\sl natural
frequencies} of the oscillators and, in physical terms,
they can be viewed as a   {\sl quenched disorder};
 \item $K$ and  $\gs $ are non-negative parameters, but one should think of them as 
 positive parameters since the cases in which they vanish have only a marginal role 
 in the what follows. 
 \end{enumerate}

\medskip

 The variables $\gp^\go_j$ are meant to be angles (describing the position of rotators on the circle
$\bbS$), so we focus on $\gp^\go_j\;\text{mod}\;2\pi$ and 
\eqref{eq:Kmod} defines, once an initial condition is supplied, a 
diffusion process on $\bbS^N$. Note that if 
$\{ \gp^\go _j (\cdot)\}_{j =1, \ldots, N}$ solves \eqref{eq:Kmod}, also
$\{ \gp^\go _j (\cdot)+ \gp\}_{j =1, \ldots, N}$, with $\gp\in \bbS$,  is a solution: this is the 
rotation symmetry of the system that will repeatedly make surface
in the remainder of the paper.

\medskip 

Some of the main features  \eqref{eq:Kmod} are easily grasped:
each oscillator rotates at its own speed, it is perturbed by independent 
noise and it interacts with all the other oscillators: the interaction tends to
align the rotators.   It may be helpful at this stage to point out 
that if $\mu = \gd_0$, that is the natural frequencies are just zero, then the dynamics is reversible with invariant probability measure that, up to normalization, is
\begin{equation}
\label{eq:mfr}
\exp\left(
\frac{K}{\gs^2}
\sum_{i,j=1}^N \cos \left( \gp_i -\gp_j\right)
\right)  \gl_N (\dd \gp)\, ,
\end{equation}
where $\gl_N$ is the uniform measure on $\bbS^N$. 
The Gibbs measure in \eqref{eq:mfr} is a well known statistical mechanics model --
it is the classical XY spin mean field model or rotator mean field model --
treated analytically in \cite{cf:SFN,cf:Pearce} in the $N \to \infty$ limit. In particular, 
the model exhibits a phase transition at $K=K_c:= 1/\gs^2$, that is effectively 
a {\sl synchronization transition}:  in the $N \to \infty$ limit we have that  for $K\le  K_c$ 
the rotators become independent and 
uniformly distributed over $\bbS$, while for $K>K_c$ the limit measure is obtained by choosing a phase $\theta$ uniformly in $\bbS$ and by choosing the values of the 
phase of each oscillator by drawing it at random following a  
suitable distribution that concentrates around $\theta$.
However, in \cite[Prop.~1.2]{cf:BGP}, it is shown that, unless $\mu=\gd_0$, 
the model is  not reversible (for $\mu$ almost surely all the realization of $\go$)
and one effectively steps into the domain of non-equilibrium statistical mechanics.

Our approach actually relies on a sharp control of the reversible case and works when
the system is not too far from reversibility, that is for weak disorder. Our approach 
actually applies well beyond  \eqref{eq:Kmod}: here we will treat explicitly
the case $\go_j$ is replaced by $U(\gp_j^\go, \go_j)$, that is the natural frequency
$\go_j$ is replaced by a {\sl natural dynamics} that can be substantially different
from one oscillator to another. This model is a disordered version of the active
rotator model considered for example in \cite{cf:shinomoto1986a}.

Since we will focus on $\gs >0$, from now on, for ease of exposition, we set $\gs :=1$. 
 
 \subsection{The Fokker-Planck or McKean-Vlasov limit}
 An efficient way to tackle  \eqref{eq:Kmod} is to consider the empirical probability
 on $\bbS\times \bbR$
 \begin{equation}
 \label{eq:empP}
 \nu_{N, t}^\go (\dd \theta, \dd \go )\, :=\, \frac 1N \sum_{j=1}^N 
 \gd_{(\gp_j^\go(t), \go_j)} (\dd \theta, \dd \go)\, .
 \end{equation}
In fact, in  the  $N\to \infty$ limit, the sequence of measures 
$\{ \nu_{N, t}^\go\}_{N=1,2 , \ldots}$  converges to a limit measure whose density (with respect
to $\gl_1 \otimes \mu$) solves  the nonlinear
Fokker-Planck equation
\begin{equation}
\label{FKP kuramoto disorder}
 \partial_t p_t(\gtta,\go)\, =\, \frac{1}{2} \gD p_t(\gtta,\go) -\partial_\gtta \Big(p_t(\gtta,\go)(\langle
J*p_t\rangle_\mu(\gtta) +\go)\Big),
\end{equation}
where $J(\gtta)=-K\sin(\gtta)$, $\ast$ denotes the convolution and $\langle \cdot\rangle_\mu$ is a notation for the integration with respect to $\mu$, so $\langle J\ast u\rangle_\mu(\gtta) = \intR\intS{J(\gp)
u(\gtta-\varphi, \go)\dd\gp\mu(\dd\go)}$ is the convolution of $J$ and $u$, averaged with respect to the
disorder. 
Here and throughout the whole paper  $\gD$ means $\partial^2 _\gt$.
The Fokker-Planck PDE \eqref{FKP kuramoto disorder} appears repeatedly in the physics and biology literature, see e.g. \cite{cf:acebron,cf:Sakaguchi,cf:StrogatzMirollo}, and a mathematical 
proof (and precise statement) of the result we just stated can be found
in \cite{cf:dPdH,cf:eric}. Notably, in \cite{cf:eric} the result is established 
under the assumption that $\int \vert \go \vert \mu( \dd \go)< \infty$ and emphasis 
is put on the fact that the result holds for almost every realization of the disorder 
sequence $\{ \go_j\}_{j=1,2, \ldots}$. Let us point out that in \eqref{FKP kuramoto disorder}  $\go$ is a one dimensional real variable, while in \eqref{eq:Kmod} the superscript 
$\go$ is a short for the whole sequence of natural frequencies. Since what follows 
is really about  \eqref{FKP kuramoto disorder} this abuse of notation will be of limited impact. 

In Appendix \ref{sec:appendix regularity pt with disorder}, we detail the fact that
 \eqref{FKP kuramoto disorder} generates an evolution
semigroup in  suitable spaces. Here we want to stress that 
 \eqref{FKP kuramoto disorder} can be viewed as a family of coupled
 PDEs, one for each value of $\go$ in the support of $\mu$:  $p_t(\cdot, \go)$
 is the distribution of phases in the population of oscillators with natural
 frequency $\go$.
 
 \subsection{About stationary solutions to  \eqref{FKP kuramoto disorder}}
 Remarkably (\cite{cf:Sakaguchi}, see also \cite{cf:dH}),
 if $\mu$ is symmetric all the 
 stationary solutions to  \eqref{FKP kuramoto disorder} can be written in a 
 semi-explicit way as ${q}(\gtta+\gtta_0, \go)$ ($\gtta_0$ is an arbitrary constant 
 that reflects the rotation symmetry) where
\begin{equation}
\label{eq:qhatom}
{q}(\gtta, \go) \,:=\, \frac{S(\gtta, \go, 2Kr)}{Z(\go,2Kr)}\, ,
\end{equation}
with 
\begin{equation}
\label{eq:Sq}
 S(\gtta, \go, x) \,=\, e^{G(\gtta, \go, x)}\left[ (1-e^{4\pi\go})
\int_{0}^{\gtta}{e^{-G(u,
\go, x)}\dd u} + e^{4\pi\go}\int_{0}^{2\pi}{e^{-G(u, \go, x)} \dd u} \right],
\end{equation}
and $G(u, y, x)= x\cos(u) + 2y u$, $Z(\go,x)= \intS S(\gtta, \go, x) \dd\gtta$ is the 
normalization constant and
$r\in[0,1]$ satisfies the fixed-point relation
\begin{equation}
 \label{eq:fixedpointom}
r\,=\, \Psi^\mu(2Kr), \ \ \  \text{where} \ \ \  \Psi^\mu(x)\,:=\, \intR\frac{\intS\cos(\gtta)S(\gtta, \go,
x)\dd\gtta}{Z(\go,x)}\mu(\dd\go)\, .
\end{equation}
A series of remarks are in order:
\begin{enumerate}
\item $r=0$ solves \eqref{eq:fixedpointom} and this corresponds to the 
fact that $q(\cdot)\equiv\frac1{2\pi}$ is a stationary solution. It is the only
one  as long as $K$ does not exceed 
critical value $K_c$ which is in any case not larger than
\begin{equation}
\label{eq:Ktilde}
\widetilde{K} \, :=\, 
 \left( \int_\bbR \frac{\mu(\dd \go)}{1+4\go^2} \right)^{-1}\, ,
 \end{equation} 
 as one can easily see   by computing (see e.g. \cite{cf:dH}) the  derivative of 
$\Psi^\mu(2K\cdot)$ at the origin and noticing that is larger than one if and only if
$K >\tilde K$ and that  $\Psi^\mu(\cdot) <1$, see Figure~\ref{fig:fixed point Psi}.
\item When \eqref{eq:fixedpointom} admits a fixed point $r>0$, and this 
is certainly the case if $K>\widetilde K$, a nontrivial 
stationary solution is present and in fact, by rotation symmetry, a circle 
of non-trivial stationary solutions. Such solutions correspond to a synchronization phenomenum, since the distribution of the phases is no longer
trivial.
\item As explained in Figure~\ref{fig:fixed point Psi} and its caption, in general there can be
more than one fixed point $r>0$: in absence of disorder there is only one
positive fixed point (when it exists, that is for $K>1$), but this fact is non-trivial
even in this case (see below). Uniqueness  is expected for $\mu$ which is unimodal, but
this has not been established. 
\item While the local stability of $\frac 1{2\pi}$ is understood 
\cite{cf:StrogatzMirollo} and it holds only if $K \le \widetilde{K}$, the stability properties of the non-trivial solutions are a
more delicate issue. 
\end{enumerate}

\begin{figure}
\centering
\subfloat[Function $\Psi^\mu(2K\cdot)$ for $\mu=\delta_0$, $K=2$.]{\includegraphics[width=6cm]{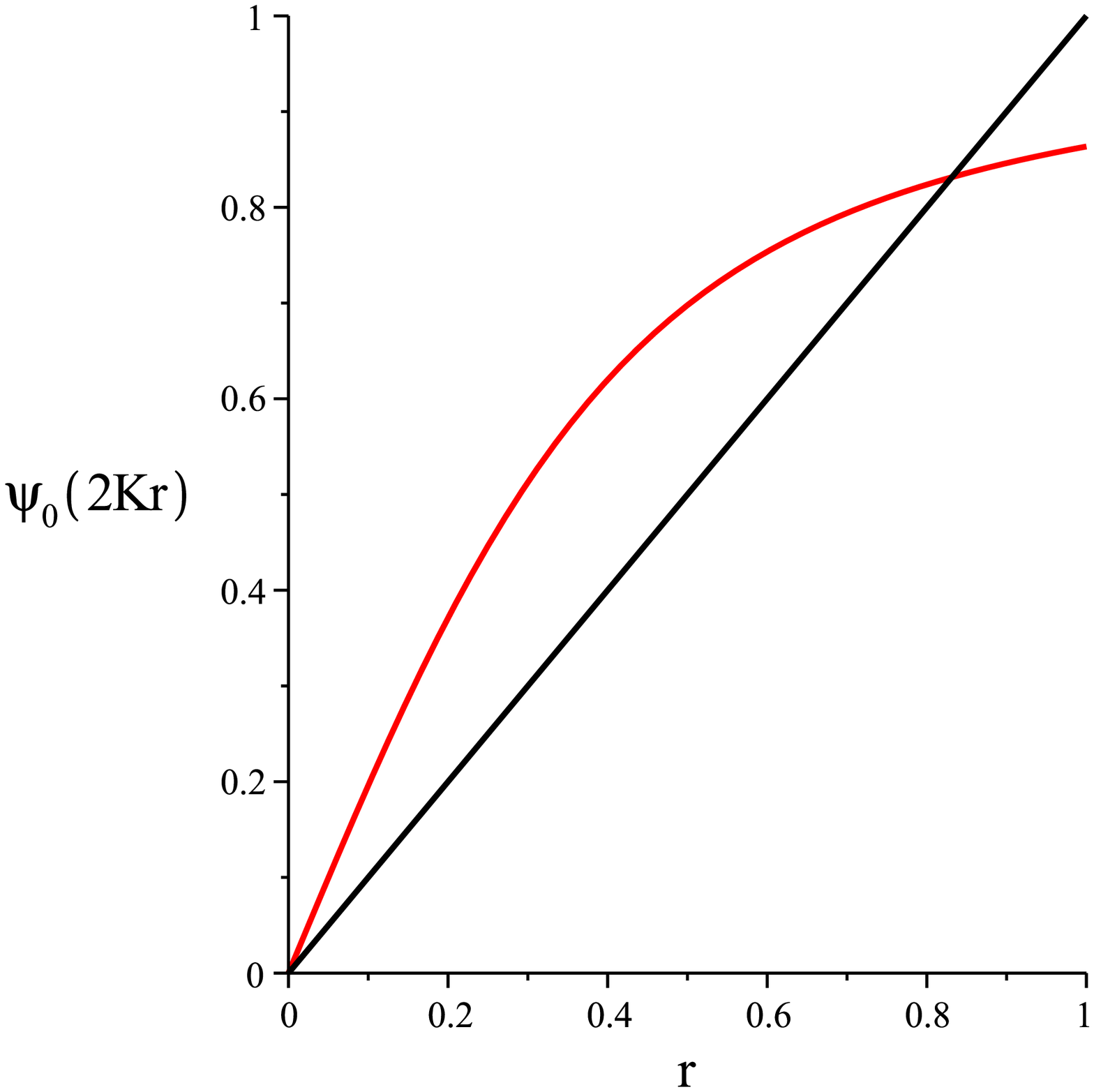}
\label{subfig:fixedpointPsi mu delta0}}
\quad\subfloat[Function $\Psi^\mu(2K\cdot)$ for $\mu=\frac12\left( \gd_{-1} + \gd_1\right)$,
$K=3.5$.]{\includegraphics[width=6cm]{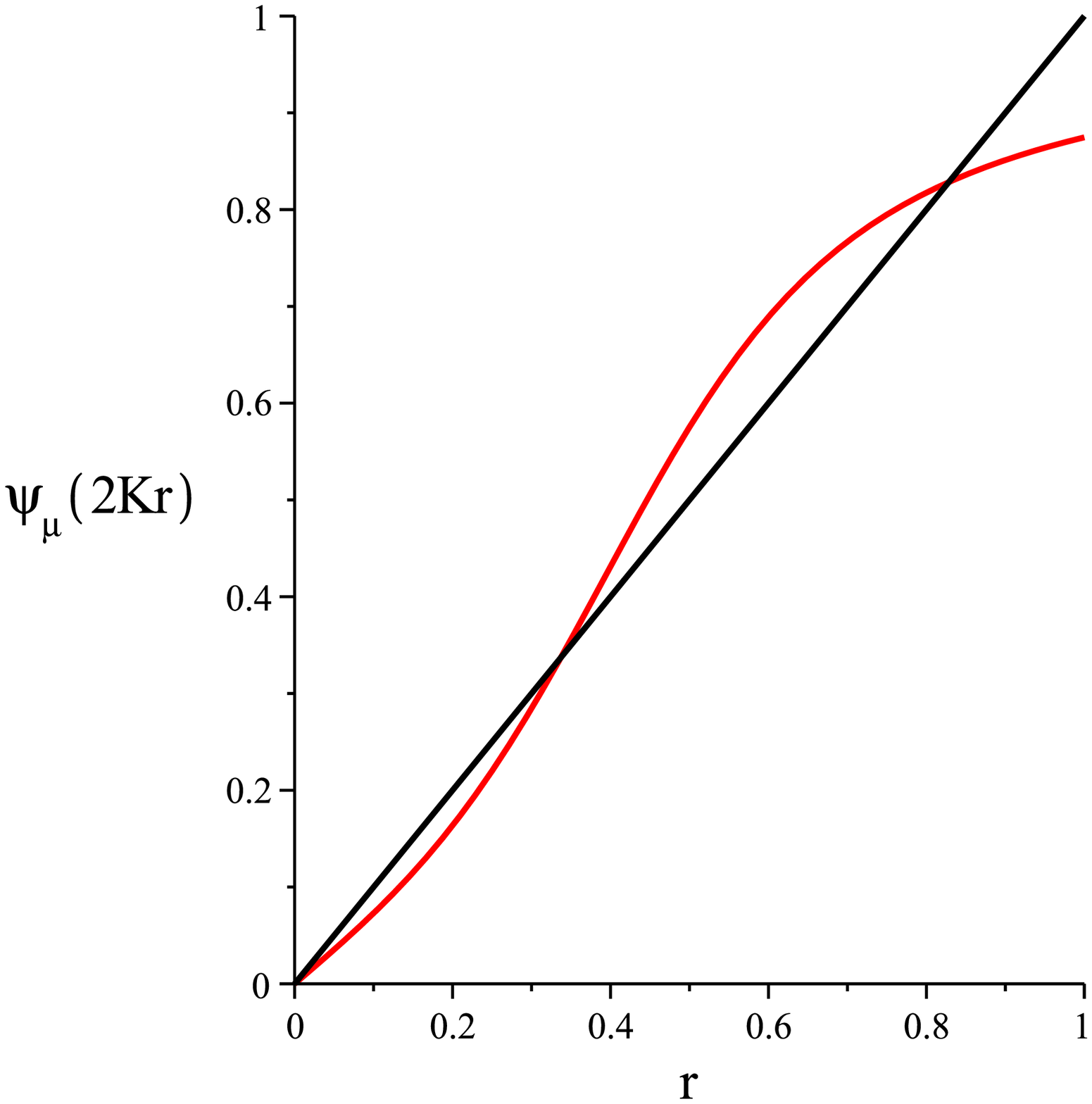}
\label{subfig:fixedpointPsi mu}}
\caption{Plot of the fixed-point function $\Psi^\mu(2K\cdot)$ for two choices of $K$ and $\mu$. 
$\Psi^{\delta_{0}}(\cdot) $ is strictly concave with derivative at the origin equal to $1/2$ (Fig. \ref{subfig:fixedpointPsi mu delta0}) but 
 even for a simple instance of $\mu$ 
(Fig. \ref{subfig:fixedpointPsi mu}) concavity is lost and there are
 several non-trivial fixed-points, each of them corresponding to one circle of non trivial stationary solutions.
Note that in the case of Fig. \ref{subfig:fixedpointPsi mu}, $K < \widetilde{K}=5$ so that the phase transition is not given by
the derivative of $\Psi^\mu(2K\cdot)$ at the origin.}%
\label{fig:fixed point Psi}%
\end{figure}

\subsection{An overview of the results we present}
Here are two natural questions:
\begin{itemize}
\item What are  the stability properties of the   non-trivial stationary solutions?
\item What happens if $\mu$ is not symmetric?  
\end{itemize}

Our work addresses these two questions and provides complete answers for weak disorder. 
The precise set-up of our work is better understood if we remark from now that we can 
assume $m_\go:=\int \go \mu (\dd \go)=0$. In fact, if this is not the case we can map the model
to a model  with $m_\go=0$  by putting ourselves on the frame that rotates with speed $m_\go$,
that is if we consider the diffusion $\{\gp^\go _{j}(t)- m_\go t\}_{j=1, \ldots, N}$. So, we assume henceforth $m_\go=0$
and we  rewrite the natural frequencies as $\gd \go$, with $\gd$ a non-negative parameter. We assume moreover that 
\begin{equation}
\label{eq:suppHyp}
\Supp(\mu)\subseteq [-1, 1]\, .
\end{equation}
In this set-up, \eqref{FKP kuramoto disorder} becomes 
%, for $p_{t}^{\gd}(\gtta, \go):=p_t(\gtta, \gd\go)$,
\begin{equation}
\label{FKP kuramoto disorder delta}
 \partial_t p_t^\gd(\gtta,\go)\, =\, \frac{1}{2} \gD p_t^\gd(\gtta,\go) -\partial_\gtta \Big(p_t^\gd(\gtta,\go)(\langle
J*p_t^\gd\rangle_\mu(\gtta) +\gd\go)\Big)\, .
\end{equation}
Note that this leads to  (obvious) changes to \eqref{eq:qhatom}-\eqref{eq:fixedpointom}.
We have introduced this parameterization because the results that we present are for 
small values of $\gd$. 
In particular we are going to show that for any $K>1$, there exists $\gd_0>0$ such that for $\gd \in [0, \gd_0]$
\begin{itemize}
\item there exists a solution $p_t^\gd(\gtta,\go)$ 
to  \eqref{FKP kuramoto disorder delta} of the form $q( \gtta - c_\mu (\gd) t)$, we show that $c_\mu (\gd) =O(\gd^3)$ and we 
actually give an expression
for
$\lim_{\gd \searrow 0}c_\mu (\gd) / \gd^3$: this is a rotating wave (or limit-cycle) for the dynamical system \eqref{FKP kuramoto disorder delta}
and we  establish its stability under perturbations; 
\item when  $\mu$ is  symmetric and $K> \widetilde K$ we show that there is, up to rotation symmetry,  only one non-trivial solution and that it is (linearly and non-linearly) stable.
\end{itemize}

\medskip

The results we obtain are based on the rather good understanding that we have of the case  $\gd=0$ that, 
as we have already explained, is reversible and the corresponding Fokker-Planck PDE is of gradient flow type (e.g. \cite{cf:Otto} and references therein).
These properties have been exploited in \cite{cf:BGP} in order to extract a number of properties of 
the Fokker-Planck PDE (denoted from now on: reversible PDE)
\begin{equation}
\label{eq:revK}
 \partial_t p_t(\gtta)\, =\, \frac{1}{2} \gD p_t(\gtta) -\partial_\gtta \Big(p_t(\gtta)( J*p_t)(\gtta) \Big) \, ,
\end{equation}
and notably the linear stability of the non-trivial stationary solutions. In fact one can find in \cite{cf:BGP}
an analysis of the  evolution operator linearized around the non-trivial stationary solutions. Some of the results 
in \cite{cf:BGP} are recalled in the next section, but they are not directly applicable because 
the $\gd=0$ case that corresponds to what interests us is rather
\begin{equation}
\label{FKP kuramoto disorder without drift}
 \partial_t p_t(\gtta, \go)\, =\, \frac{1}{2} \gD p_t(\gtta, \go) -\partial_\gtta \Big(p_t(\gtta, \go)(\langle J*p_t\rangle_\mu(\gtta) \Big) \, ,
\end{equation}
which we call {\sl non-disordered PDE}.
%This is the Fokker Planck equation corresponding to the large $N$ limit of the particle system
%\begin{equation}
 %\label{particle system disorder without drift} \dd\gp^\go_j(t)\, =\,- \frac KN \sum_{i=1}^N \sin(\gp^\go_j(t)-\gp^\go_i(t)) \dd t + \gs \dd
% B_j(t) \, . \end{equation}
%In this case the dynamics is the non-disordered Kuramoto's one, that is
%\begin{equation}\label{particule system kuramoto}\dd\gp_j(t)\, =\,- \frac KN \sum_{i=1}^N \sin(\gp_j(t)-\gp_i(t)) \dd t + \gs \dd B_j(t) \, . \end{equation}
So the {\sl natural frequencies} have no effective role
beyond separating the various rotators into populations with given natural (ineffective) frequency
that now are just labels. 
But    in order to set-up a proper 
 perturbation procedure we need to control \eqref{FKP kuramoto disorder without drift} and, in particular,
 we need (and establish) a spectral gap inequality for the evolution \eqref{FKP kuramoto disorder without drift} linearized around the non-trivial solutions. 
 %As we will see in what follows that  this organization into a populations of the solutions of the non-disordered Kuramoto model does not change the set of stationary solutions and the linearized evolution around this set. 

This spectral analysis is going to be central both for the general and for the symmetric disorder case. In the general set-up we are going to exploit the
{\sl normally hyperbolic structure} \cite{cf:HPS,cf:SellYou} of the manifold of stationary solutions of \eqref{FKP kuramoto disorder without drift}
and the robustness of such structures (like in \cite{cf:GPPP}). In the case of 
symmetric $\mu$ we can get more precise results by 
ad hoc estimates, made possible by the explicit expressions
\eqref{eq:qhatom}-\eqref{eq:fixedpointom}, and use results in the general theory of operators \cite{Pazy1983} and
perturbation theory of self-adjoint operators \cite{Kato1995}. 

The normal hyperbolic manifold approach  allows to treat cases that are substantially more general and notably the case
of 
\begin{equation}
\label{eq:AR}
 \partial_t p_t(\gtta,\go)\, =\, \frac{1}{2} \gD p_t(\gtta,\go) -\partial_\gtta \Big(p_t(\gtta,\go)(\langle
J*p_t\rangle_\mu(\gtta) +\gd U(\theta, \go))\Big)\, ,
\end{equation}
which is the large $N$ limit of \eqref{eq:Kmod} with the term $\go_j \dd t$ replaced by 
$U(\gp^\go_j (t), \go_j) \dd t$, with $U \in C^1( \bbS\times \bbR; \, \bbR)$. 
In this case each oscillator has its own  non-trivial dynamics which may be very different from
the dynamics of other oscillators: consider for example
\begin{equation}
U(\gp, \go)\, =\,b+ \go + a\sin (\gp)\, , \ \ \ a, b \in \bbR \, ,
\end{equation} 
and $\mu$ uniform over $[-1,1]$. For $a \in (-1,1)$ there are some  {\sl active rotators} 
\cite{cf:shinomoto1986a,cf:GPPP}
 that in absence of noise and interaction ($\gs=K=0$) rotate (this happens if $\vert b+\go\vert >\vert a\vert$ and of course
 the direction of rotation depends on the sign of $b+\go$) 
 and others that instead are stuck at a fixed point (this happens if $\vert b+\go\vert \le \vert a\vert$). 
 Our approach allows us to establish that there is a synchronization regime  
 for $K>1$ and $\gd$ small and to describe the dynamics of the system in this regime.
 This is going to be detailed
 in Section~\ref{sec:AR}. 

\medskip

The two questions raised at the beginning of this section have been already repeatedly approached but looking at synchronized solutions as
bifurcation from incoherence. The results are hence for $K$ close to the critical value corresponding to the breakdown of linear stability of $1/2\pi$:
one can find a detailed review of the vast literature on this issue in \cite[Sec. III]{cf:acebron}.
Our results are instead for arbitrary $K>1$, but $\gd$ smaller than $\gd_0(K)$ and of course $\gd_0(K)$ vanishes as $K$ approaches $1$.

\section{Mathematical set-up and main results}
\label{sec:mainresults}
\subsection{The reversible  and the non-disordered PDE}
\label{subsec:organizednondisorderedcase}
We first recall some results about the reversible PDE \eqref{eq:revK}. 
The stationary solutions $q_0(\theta)=q(\theta,0)$  are, up to rotation invariance, given by
\eqref{eq:qhatom}-\eqref{eq:fixedpointom}, but formulas get simpler, namely
\begin{equation}
\label{eq:defstationarysolution nodisorder}
 q_0(\gtta)\, =\, \frac 1{Z_0(2Kr_0)} \exp(2Kr_0 \cos(\gtta))\, ,
\end{equation}
where $Z_0(x):= Z(0, x)^\frac 12$ and 
this time we have the more explicit expression $Z_0(x)\, =\, \intS e^{x\cos(\gtta)} \dd\gtta = 2\pi I_0(x)$ is the normalization
constant and $r_0$ is a solution of
the fixed-point problem
\begin{equation}
\label{eq:deffixedpoint nodisorder}
 r_0\, =\, \Psi_0 (2Kr_0) \qquad \text{where} \qquad \Psi_0(x)\, :=\, \frac{I_1(x)}{I_0(x)}\, ,
\end{equation}
where we used standard notations for the modified Bessel functions
\begin{equation}
 \label{def bessel}
I_i(x)\, =\, \frac 1{2\pi}\int_{\bbS} (\cos(\gtta))^i\exp(x\cos(\gtta))\dd\gtta\, \qquad i=0,1\, .
\end{equation}
The mapping $\Psi_0$ is increasing, concave (see \cite{cf:Pearce}) and with derivative at $0$ equal to $\frac 12$. Consequently if
$K\leq
1$, $r_0=0$ is
the unique solution of the fixed-point problem, and $q(\cdot)\equiv\frac{1}{2\pi}$ is the only stationary solution of \eqref{eq:revK}. 
If $K>1$, we get in addition a circle (because of the rotation invariance) of nontrivial stationary solutions
\begin{equation}
 M_{\text{rev}}\, :=\, \{q_{\psi, 0}(\cdot):=q_0(\cdot-\psi):\,  \psi\in\bbS\} \qquad \text{with} \qquad q_0(\gtta)\,
:=\,\frac{\exp(2Kr_0\cos(\gtta))}{\int_{\bbS }\exp(2Kr_0\cos(\gtta))}
\end{equation}
where $r_0=r_0(K)$ is the unique non trivial fixed-point \eqref{eq:deffixedpoint nodisorder}.

\medskip

Let us now focus on the non-disordered PDE \eqref{FKP kuramoto disorder without drift}
and let us insist on the fact that we are interested in solutions such
that $\gp^\gd_t(\cdot, \go)$ is a probability density.
Observe then that if $q(\gtta,\go)$ is a stationary solution of \eqref{FKP kuramoto disorder without drift}, we see (Appendix
\ref{sec:appendix regularity pt with disorder}) that $q$ is
$C^\infty$ with respect to $\gtta$ and that $\langle q \rangle_\mu$ is a stationary solution for \eqref{eq:revK}. So there
exists $\psi\in\bbS$ such that $\langle q\rangle_\mu = q_\psi$ and a short computation leads to
\begin{equation}
\langle J*q \rangle_\mu(\gtta)\, =\, -K\sin(\gtta-\psi)\, ,
\end{equation}
and, since 
$\intS q(\gtta,\go) \dd \gtta =1$ for almost all $\go$, we obtain
that $q(\cdot,\go)=q_{ \psi}(\cdot)$ for almost all $\go$. In conclusion, with some abuse of notation, we can say  the stationary solutions of \eqref{eq:revK} and
\eqref{FKP kuramoto disorder without drift} are the same: of course in  the second case 
the function space includes the dependence on $\go$, so 
we choose a different notation, that is $M_0$,  for the corresponding
circle of non-trivial stationary solutions.

\medskip

An important issue for us is the stability of $M_0$ (for its existence  we are assuming $K>1$) and for this
%and for this   we rely  on the results  in \cite{cf:BGP}, we show that $M$ shares the same properties when considered in the context of \eqref{FKP kuramoto disorder without drift}. 
we denote by $A$ the linearized evolution operator of
\eqref{FKP kuramoto disorder without drift} around $q_0$ 
\begin{equation}
\label{def A}
Au(\gtta,\go)\, :=\, \frac 12 \gD u(\gtta,\go) - \partial_\gtta \Big(q_0(\gtta) \langle J*u \rangle_\mu(\gtta) + u(\gtta,\go)
J*q_0(\gtta)\Big)
\end{equation}
with domain
\begin{equation}
\label{eq:defdomain A}
\cD(A)\, :=\, \left\{u\in C^2(\bbS \times\bbR,\bbR):\, \intS u(\gtta,\go)\dd\gtta=0\,  \text{ for all }\go\right\}\, .
\end{equation}
For any smooth positive function $k:\bbS \mapsto \bbR$, we introduce the Hilbert space $H^{-1}_{k,\mu}$ defined by the closure of
$\cD(A)$ for the norm $\Vert \cdot \Vert_{-1,k,\mu}$ associated with the scalar product
\begin{equation}
\label{def scalar product sobolev with wage and mu}  
\langle u,\, v \,\rangle_{-1,k,\mu}\, :=\, \intSR \frac{\cU(\gtta,\go)\cV(\gtta,\go)}{k(\gtta)}\dd\gtta \mu(\dd\go),
\end{equation}
where $\go$ a.s., $\cU(\cdot,\go)$ is the primitive of $u(\cdot,\go)$ such that
$\int_{\bbS }\frac{\cU(\gtta,\go)}{k(\gtta)}\dd\gtta =0$, and $\cV(\cdot,\go)$ is defined in the analogous  fashion. Let us remark (see
\cite[Sec.~2]{cf:GPPP})
immediately that
\begin{equation}
\Vert u \Vert_{-1,k_1,\mu}^2\,  \leq \, \frac{\Vert k_2 \Vert_\infty}{ \Vert k_1 \Vert_\infty}\Vert u \Vert_{-1,k_2,\mu}^2\, ,
\end{equation}
so that all the norms we have introduced are equivalent. For the case $k(\cdot) \equiv
1$ we use the notations $H^{-1}_\mu$ and
$\Vert\cdot\Vert_{-1,\mu}$.
We will prove the following result, which is just technical, but it will be of help
to understand our main results:

\medskip

\begin{proposition}
\label{th:spectral gap A} 
 A is essentially self-adjoint in $H^{-1}_{{q_0},\mu}$. Moreover the spectrum lies in $(-\infty,0]$, $0$ is a
simple eigenvalue, with eigenspace spanned by $\partial_\gtta q_0$, 
and there is a spectral gap, that is 
the distance $\gap$ between $0$ and the rest of the spectrum is positive.
%\begin{equation}
%\label{eq:deflambda0 spectral gap A}
%\gl_0(K)\, :=\, \frac{(1-K(1-r^2))(1-(I_0(2Kr))^{-2})}{2Kr^2\exp(8Kr)+\exp(4Kr)(1-(I_0(2Kr))^{-2})}.
%\end{equation}
\end{proposition}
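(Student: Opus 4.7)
My plan is to exploit an orthogonal decomposition of $H^{-1}_{q_0,\mu}$ induced by averaging against $\mu$. Write any $u \in \cD(A)$ as $u = \bar u + u^\perp$, where $\bar u(\gtta) := \langle u\rangle_\mu(\gtta)$ is $\go$-independent and $u^\perp := u - \bar u$ satisfies $\langle u^\perp\rangle_\mu \equiv 0$. Thanks to the fibrewise normalization $\intS \cU(\gtta,\go)/q_0(\gtta)\dd\gtta = 0$ built into the definition of $\langle \cdot,\cdot\rangle_{-1, q_0, \mu}$, the $\gtta$-primitive of $u^\perp$ has zero $\mu$-average, and a short computation then gives $\langle \bar u, u^\perp\rangle_{-1, q_0, \mu} = 0$, so the splitting is orthogonal. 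Crucially, $A$ preserves it: since $\langle J\ast u\rangle_\mu = J\ast \bar u$, one finds $Au = L\bar u + B u^\perp$, where $L$ is the reversible linearized operator \eqref{def A} restricted to $\go$-independent functions, and $B$ acts fibrewise on each $u^\perp(\cdot, \go)$ as $\frac12 \partial_\gtta^2 u^\perp - \partial_\gtta(u^\perp (J\ast q_0))$.

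On the $\bar u$-part the claim then reduces to the analogous statement for the reversible PDE \eqref{eq:revK}, which is established in \cite{cf:BGP}: $L$ is essentially self-adjoint in $H^{-1}_{q_0}$, with spectrum in $(-\infty, 0]$, zero a simple eigenvalue with eigenvector $\partial_\gtta q_0$, and a strictly positive spectral gap. Since $\partial_\gtta q_0$ is $\go$-independent, it lies in the $\bar u$-subspace and will be the only contribution to $\ker A$.

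For the $u^\perp$-part I would first use the stationarity identity $q_0 (J\ast q_0) = \frac12 \partial_\gtta q_0$ (which follows by integrating the stationary equation for $q_0$ and noting that $\intS q_0 (J\ast q_0)\dd\gtta = 0$ by the oddness of $J\ast q_0$) to rewrite the fibrewise operator in divergence form $B_0 u = \frac12 \partial_\gtta[q_0 \partial_\gtta(u/q_0)]$. Integration by parts on the circle (boundary terms vanishing by periodicity of $u/q_0$) then yields, for $u,v$ of zero $\gtta$-mean,
\begin{equation*}
\langle B_0 u, v\rangle_{-1, q_0} \,=\, -\frac12 \intS \frac{uv}{q_0} \dd\gtta\, ,
\end{equation*}
which is manifestly symmetric and non-positive. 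Essential self-adjointness follows via the Friedrichs extension associated with the quadratic form $\frac12 \|u\|_{L^2(1/q_0)}^2$, and the kernel on zero-$\gtta$-mean functions is trivial since $B_0 u = 0$ forces $u \propto q_0$. The spectral gap of $B_0$ reduces to the Poincaré-type comparison $\|u\|_{L^2(1/q_0)}^2 \geq c \|u\|_{-1, q_0}^2$ for mean-zero $u$, which holds on the compact circle because $q_0$ is smooth and bounded away from zero. Integrating against $\mu$ transfers these properties to $B$ on the $u^\perp$-subspace of $H^{-1}_{q_0,\mu}$, and assembling the two components yields the proposition, with $\gap$ bounded below by the smaller of the two individual gaps.

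The main technical point I anticipate is not the spectral estimates themselves, which are clean once the decomposition is in place, but rather verifying that the formal splitting $A = L \oplus B$ is consistent at the level of closures: one must check that the two self-adjoint extensions obtained separately on the $\bar u$- and $u^\perp$-subspaces combine to give the closure of $A$ on $\cD(A)$. This is a careful but routine domain argument, leveraging the fact that $\cD(A)$ itself respects the decomposition and that the two subspaces are mutually orthogonal in $H^{-1}_{q_0,\mu}$.
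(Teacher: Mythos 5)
Your approach is correct, and it takes a genuinely different route from the paper's. You split $H^{-1}_{q_0,\mu}$ into the orthogonal sum of the $\go$-independent subspace and the $\mu$-centered subspace $\{\langle u\rangle_\mu=0\}$, observe that $A$ respects this splitting (the only nonlocal, $\go$-coupling term, $q_0\langle J\ast u\rangle_\mu=q_0\,J\ast\bar u$, sees only the averaged part), and so obtain $A=L_{q_0}\oplus B$ with $B$ acting fibrewise as $B_0u=\tfrac12\partial_\gtta[q_0\partial_\gtta(u/q_0)]$. The $\bar u$-piece is then exactly the BGP operator, the $u^\perp$-piece has the transparent Dirichlet form $-\crosq{B_0u}{u}=\tfrac12\croLq{u}{u}$, and the spectrum of $A$ is the union of the two component spectra. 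The paper instead works with the full disordered Dirichlet form $\cE_\mu(u)=-\crosqmu{Au}{u}$: it rewrites $\cE_\mu(u)$ as the $\mu$-average of the fibrewise reversible Dirichlet form plus a non-negative cross term in $\tilde J\ast[u(\cdot,\go)-u(\cdot,\go')]$, controls that cross term via the explicit eigenspace decomposition of $L^2_{q_0}$ under $\tilde J\ast\cdot$ (Lemma~\ref{lem:decomposition L2q}), and finally transfers the $L^2$ estimate to $H^{-1}$ via Lemmas~\ref{lem:comparaison norm projections}--\ref{lem:comparaison norm} and obtains self-adjointness through Lax--Milgram. Your direct-sum picture is structurally cleaner and makes the invariant-subspace content of the statement immediate; the paper's Dirichlet-form route is more computational but has the virtue that the lower bound it produces coincides with the explicit $\gl(K)$ of \cite{cf:BGP}, which the paper re-uses quantitatively in Section~\ref{sec:sym}. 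As you yourself flag, the one point you should still pin down is the identification of the closure of $A$ with the direct sum of the two closures; since $\cD(A)$ respects the splitting and each summand is essentially self-adjoint (BGP for $L_{q_0}$, and elliptic regularity/Friedrichs for $B_0$ on the circle), this is a routine argument and does not threaten the result.
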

\medskip

The proof of this result builds on \cite[Th.~1.8]{cf:BGP} that deals with the
reversible case and the (lower) bound 
on the spectral gap $\gap$ that we obtain coincides with the quantity $\gl(K)$
in \cite[Th.~1.8]{cf:BGP} (this bound  can be improved
 as explained in in \cite[Sec.~2.5]{cf:BGP} and sharp estimates on the spectral gap
 can be obtained in the limit $K \searrow 1$ and $K \nearrow \infty$).
 For the reversible evolution, the
linear operator $L_{q_0}$ is defined by
\begin{equation}
 \label{def Lq0}
L_{q_0}u(\gtta)\, :=\, \frac 12 \gD u(\gtta) -\partial_\gtta\Big( q_0(\gtta)J*u(\gtta)+u(\gtta)J*q_0(\gtta)\Big),
\end{equation}
with domain $D(L_{q_0})$ given by the $C^2(\bbS ,\bbR)$ functions with zero integral.
%\begin{equation}
 %\label{def domain Lq0} D(L_{q})\, :=\, \left\{u\in C^2(\bbS ,\bbR),\, \int_{\bbS }u(\gtta)=0\right\}\, .
%\end{equation}

%\begin{question}
%The spectrum of $A$ includes the spectrum of $L_{q_0}$. Are they different? It is trivial to see that $\sigma(L_{q_0})\subseteq \sigma_p(A)$. Conversely, for each $\lambda$ in the point spectrum $\sigma_p(A)$ of $A$ which is associated to the eigenvector $u(\gtta, \go)$ then $\langle u \rangle_\mu(\gtta):=\intR u(\gtta, \go)\mu(\dd\go)$ solves $L_{q_0}\langle u \rangle_\mu(\gtta) = \lambda \langle u \rangle_\mu(\gtta)$. Hence, $\sigma_p(A)$ can be decomposed into $\sigma_p(A)=\sigma_1(A)\cup\sigma_2(A)$ with  $$ \sigma_1(A):=\ens{\lambda\in\sigma_p(A)}{\exists u,\ Au=\lambda u,\ \text{and}\ \langle u \rangle_\mu\not\equiv 0},$$ 
%and $$\sigma_2(A):=\ens{\lambda\in\sigma_p(A)}{\forall u| Au=\lambda u, \langle u \rangle_\mu\equiv 0}.$$ 
%The above argument shows that $\sigma_1(A)\subseteq\sigma(L_{q_0})$ and it is easy to see that $\sigma_2(A)$ is precisely equal to the countable set $\{\lambda_i, i\geq1\}$ of solutions of the following periodic Sturm-Liouville eigenvalue problem:\[\frac12 u''(\gtta) -\partial_\gtta\left( (J\ast q_0)u(\gtta) \right)=\lambda u(\gtta),\] so that \[\sigma_p(A)= \sigma(L_{q_0})\cup \{\lambda_i\}.\] Is $\{\lambda_i\}\subseteq \sigma(L_{q_0})$ ?\end{question}

%+ADD INTERPOLATION SPACES $V^\ga_\mu$ OF A

\subsection{Synchronization: the main result without symmetry assumption}
Proposition~\ref{th:spectral gap A} is a key ingredient for our main results and 
  the functional space $H^{-1}$ appears in it, but an important role
is played also by $L^2(\gl \otimes \mu)$, $\gl$ is the Haar measure on $\bbS$,
whose norm is denoted by $\Vert \, \cdot \, \Vert_{2, \mu}$. For $C>0$ and $M \subset L^2(\gl \otimes \mu)$ we set
$\cN_{2, \mu}(M, C):= \{u: \,$there exists $v\in M$ such that $\Vert u-v \Vert_{2, \mu} \le C\}$.
In the statement below $q\in M_0$ is the element of the manifold such that $q(\cdot,\go)=
q_0(\cdot)$, cf. \eqref{eq:defstationarysolution nodisorder}, with $r_0(K)>0$ (hence $K>1$).

\medskip

\begin{theorem}
\label{th:expansion speed}
 For every $K>1$
there exists $\gd_0=\gd_0(K)>0$ such that for $\vert \gd \vert \le \gd_0$
there exists $\widetilde q _\gd \in  L^2(\gl \otimes \mu)$, satisfying
$\Vert \widetilde q _\gd -q \Vert_{2, \mu}=O(\gd)$ and a value 
$c_\mu (\gd) \in \bbR$ such that if we set
\begin{equation}
\label{eq:main1.1}
q^{(\psi)}_t (\theta, \go)\, :=\, \widetilde q _\gd(\theta -c_\mu (\gd) t-\psi)\, ,
\end{equation}
then $q^{(0)}_t$ solves \eqref{FKP kuramoto disorder delta}. Moreover 
\begin{enumerate}
\item
the family of solutions $\{q^{(\psi)}_\cdot\}_\psi$
is stable in the sense that there exist two positive constants $\gb=\gb(K)$ and $C=C(K)$
such that if $p_0^\gd \in \cN_{2, \mu}(M_0,\gd)$, and $\int_\bbS p_0^\gd
(\theta, \go) \dd \theta=0$ $\mu(\dd \go)$-a.s., then there exists
$\psi_0\in \bbS$ such that for all $t\ge 0$
\begin{equation}
\Vert q_t^{(\psi_0)}-p_t^\gd \Vert_{2, \mu}\, \le \, 2 C\exp(-\gb t)\, .
\end{equation}
\item
we have 
\begin{equation}
\label{eq:main1.2}
 c_\mu(\gd)\,  =\, %\gd \intR \go\dd\mu  
  \gd^3
   \frac{
  \left\langle \go\partial_\gtta n^{(2)} , \partial_\theta q_0  \right \rangle_{-1, {q_0},\mu} }
  { \left\langle \partial_\theta q_0 ,\partial_\theta q_0
  \right \rangle_{-1, {q_0}} } + O(\gd^5)\, , 
\end{equation}
where $n^{(2)}$ is the unique solution of
\begin{equation}
An^{(2)}\, =\, %\left(\go-\intS\go\dd\mu\right)
\go
\partial_\gtta n^{(1)} \quad\text{ and } \quad  \left\langle n^{(2)}, \partial_\theta q_0 \right \rangle_{-1, {q_0},\mu}
\, =\, 0\, ,
\end{equation}
and $n^{(1)}$ is the unique solution of
\begin{equation}
A n^{(1)}\, =\, 
%\left(\go-\intS\go\dd\mu\right)
\go \, \partial_\theta q_0 \quad\text{ and } \quad \crosqmu {n^{(1)}}{ \partial_\theta q_0 }\, =\, 0\, .
\end{equation}
\end{enumerate}
\end{theorem}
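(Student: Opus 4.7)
The plan is to construct the rotating wave $\widetilde q_\delta$ and the speed $c_\mu(\delta)$ by a Lyapunov--Schmidt / implicit function argument anchored at $q_0 \in M_0$, and to derive the stability claim (1) from the persistence of $M_0$ as a normally hyperbolic invariant manifold under the $\delta$-perturbation, the whole construction being controlled by Proposition~\ref{th:spectral gap A}.

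Since \eqref{FKP kuramoto disorder delta} is invariant under rotations in $\theta$, one looks for solutions of the form $p_t^\delta(\theta,\omega) = \widetilde q_\delta(\theta - c t, \omega)$; substitution converts the PDE into the stationary equation $\Phi(\widetilde q, c;\delta) = 0$, where
$$\Phi(\widetilde q, c;\delta) \,:=\, \tfrac12 \Delta \widetilde q - \partial_\theta\!\left(\widetilde q\,\langle J\ast \widetilde q\rangle_\mu\right) + (c - \delta\omega)\,\partial_\theta \widetilde q,$$
and $(\widetilde q, c) = (q_0, 0)$ solves $\Phi = 0$ at $\delta = 0$. The Fr\'echet derivative $D_{\widetilde q,c}\Phi|_{(q_0,0,0)}[u,\gamma] = A u + \gamma\,\partial_\theta q_0$ has, by Proposition~\ref{th:spectral gap A}, one-dimensional kernel $\Span(\partial_\theta q_0)$ (the rotation mode) and the same cokernel. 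Imposing $h_\delta := \widetilde q_\delta - q_0$ orthogonal to $\partial_\theta q_0$ in $H^{-1}_{{q_0},\mu}$ kills this degeneracy (the rotation freedom is recovered via the free parameter $\psi$ in \eqref{eq:main1.1}); the projection of $\Phi = 0$ onto the $H^{-1}_{{q_0},\mu}$-orthogonal complement of $\Span(\partial_\theta q_0)$ is then inverted through the spectral gap and solved for $h_\delta$ by the IFT, while the scalar projection onto $\partial_\theta q_0$ determines $c = c_\mu(\delta)$. This yields $\widetilde q_\delta$ and $c_\mu(\delta)$, smooth in $\delta$ on some $[-\delta_0,\delta_0]$, with $\Vert\widetilde q_\delta - q_0\Vert_{2,\mu} = O(\delta)$.

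For the expansion in (2), write $h_\delta = \sum_{k\geq 1}\delta^k n^{(k)}$ and $c_\mu(\delta) = \sum_{k\geq 1} c_k\delta^k$ and match powers of $\delta$. At order $\delta$, the scalar equation reads $c_1\,\langle\partial_\theta q_0,\partial_\theta q_0\rangle_{-1,{q_0}} = m_\omega\,\langle\partial_\theta q_0,\partial_\theta q_0\rangle_{-1,{q_0}} = 0$, so $c_1 = 0$ and $A n^{(1)} = \omega\,\partial_\theta q_0$, which forces $n^{(1)}(\theta,\omega) = \omega\,\tilde n^{(1)}(\theta)$ by linearity in $\omega$. Two consequences of this structure drive the subsequent vanishings: first, $\langle n^{(1)}\rangle_\mu = 0$, so the self-interaction term $-\partial_\theta(n^{(1)}\langle J\ast n^{(1)}\rangle_\mu)$ vanishes identically (via $\langle J\ast u\rangle_\mu = J\ast\langle u\rangle_\mu$); second, the divergence-form identity $\tfrac12\partial_\theta(q_0\,\partial_\theta(\tilde n^{(1)}/q_0)) = \partial_\theta q_0$ combined with the $\theta$-evenness of $q_0(\theta)\propto e^{2K r_0\cos\theta}$ yields $\intS \tilde n^{(1)}/q_0\,d\theta = 0$, which is equivalent to $\langle\partial_\theta\tilde n^{(1)},\partial_\theta q_0\rangle_{-1,{q_0}} = 0$. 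Combined, these force $c_2 = 0$ at the order-$\delta^2$ scalar equation and leave $A n^{(2)} = \omega\,\partial_\theta n^{(1)}$ as in the statement. At order $\delta^3$, the cross-quadratic term $\widetilde Q(n^{(1)},n^{(2)}) = -\tfrac12\partial_\theta(n^{(1)}\langle J\ast n^{(2)}\rangle_\mu + n^{(2)}\langle J\ast n^{(1)}\rangle_\mu)$ contributes $0$ to the scalar projection against $\partial_\theta q_0$ (its second summand vanishes because $\langle J\ast n^{(1)}\rangle_\mu = 0$, and the first summand carries the factor $\omega\,\tilde n^{(1)}(\theta)\,\langle J\ast n^{(2)}\rangle_\mu(\theta)$ which averages to zero against the $\omega$-independent $\partial_\theta q_0$ via $m_\omega = 0$), giving directly \eqref{eq:main1.2}. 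The same combination of $m_\omega = 0$ and $\theta$-evenness of $q_0$ kills $c_4$ and produces the $O(\delta^5)$ remainder.

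For the stability claim (1), Proposition~\ref{th:spectral gap A} identifies $M_0$ as a normally hyperbolic invariant manifold for the non-disordered PDE, with transverse attraction rate $\lambda_K > 0$. The added drift $-\delta\omega\,\partial_\theta$ is a smooth $\delta$-small perturbation uniformly bounded in $\omega$ because $\Supp\mu\subset[-1,1]$, and Appendix~\ref{sec:appendix regularity pt with disorder} provides the required smooth dependence of the perturbed semigroup; the persistence/robustness theorem for normally hyperbolic invariant manifolds \cite{cf:HPS,cf:SellYou,cf:GPPP} then produces the locally attracting invariant manifold $M_\delta := \{\widetilde q_\delta(\cdot - \psi,\cdot):\psi\in\bbS\}$, $O(\delta)$-close to $M_0$, with exponential attraction rate $\beta = \beta(K) > 0$ (arbitrarily close to $\lambda_K$ for $\delta_0$ small); the constant $\psi_0$ is the $\psi$-coordinate of the projection of $p_0^\delta$ onto $M_\delta$. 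The main obstacle is to bridge the two functional-analytic frameworks: Proposition~\ref{th:spectral gap A} is cleanest in $H^{-1}_{{q_0},\mu}$ where self-adjointness and the spectral gap are available, whereas the theorem is stated in $L^2(\lambda\otimes\mu)$, so one must exploit parabolic smoothing and the norm comparisons indicated in Section~\ref{subsec:organizednondisorderedcase} to transport the exponential decay between the two norms.
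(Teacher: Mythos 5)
Your construction of the rotating wave by a Lyapunov--Schmidt / implicit function argument is a genuine departure from the paper. The paper never runs such an IFT: it obtains the perturbed invariant manifold $M_\gd$ directly from the persistence theorem for normally hyperbolic manifolds (Theorem~\ref{th:M}, imported from \cite{cf:GPPP}) and then uses rotation symmetry (Remark~\ref{rem:un}) to conclude that $M_\gd$ is a circle carrying a uniform-speed flow, which is precisely the rotating wave $\widetilde q_\gd(\cdot - c_\mu(\gd) t)$. Your IFT route is legitimate and, if set up carefully on $\cD(A)\times\bbR \to H^{-1}_{{q_0},\mu}$, yields $\delta$-smoothness of $(\widetilde q_\gd, c_\mu(\gd))$ for free, which makes the formal expansion $h_\gd=\sum \gd^k n^{(k)}$ rigorous. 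But note that what you buy here the paper pays for anyway: you still have to invoke the NHM persistence machinery for the nonlinear stability in part (1), so the IFT construction is largely redundant with the paper's single-shot argument. The NHM route also transfers to the non-symmetric active-rotator setting (Section~\ref{sec:AR}), where no travelling-wave ansatz is available; your IFT would not.

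For the expansion you follow essentially the same scheme as the paper (bootstrap/order-matching using the orthogonality $\crosqmu{n_\gd}{q_0'}=\crosqmu{An_\gd}{q_0'}=0$ together with the identities $\int n^{(1)}\dd\mu=0$ and $\langle J*n^{(1)}\rangle_\mu=0$). The intermediate vanishings are argued slightly differently: the paper kills $c_2$ by observing that $A$ preserves parity in $\gtta$, so $n^{(1)}$ is odd; you invoke the explicit divergence form $\tfrac12\partial_\gtta(q_0\partial_\gtta(\tilde n^{(1)}/q_0))=q_0'$ (correct, since $q_0'/q_0 = 2 J*q_0$) to read off oddness. Both are fine. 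Two small imprecisions worth noting: (i) the linearization $[u,\gamma]\mapsto Au + \gamma\,\partial_\gtta q_0$ does \emph{not} have ``the same cokernel'' as kernel --- it is surjective, precisely because $\gamma\partial_\gtta q_0$ spans the one-dimensional cokernel of $A$; this is what makes the IFT go through and should be stated as such. (ii) The asserted equivalence between $\intS \tilde n^{(1)}/q_0 \dd\gtta = 0$ and $\langle\partial_\gtta\tilde n^{(1)},\partial_\gtta q_0\rangle_{-1,q_0}=0$ is not a bare equivalence: one computes $\langle\partial_\gtta\tilde n^{(1)},\partial_\gtta q_0\rangle_{-1,q_0} = \intS \tilde n^{(1)}\dd\gtta - c\intS\tilde n^{(1)}/q_0\dd\gtta$ for the appropriate constant $c$, so one also needs $\intS \tilde n^{(1)}\dd\gtta=0$ (which holds, since $\tilde n^{(1)}\in\cD(A)$). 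Finally, for part (1), invoking NHM persistence alone gives linear stability; the nonlinear convergence in \eqref{eq:main1.1} requires the additional input the paper spells out in Remark~\ref{rem:stability} (\cite[Th.~8.1.1]{cf:Henry} in the rotating case, or the argument of \cite[Th.~4.8]{cf:GPP} in the stationary case), which you should cite rather than leave implicit.
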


\medskip

In the  proof  of Theorem~\ref{th:expansion speed} one finds also further estimates, in
particular (see \eqref{eq:n_2}) that one has
\begin{equation}
\label{eq:qgddevel}
\tilde q_\gd \, =\, q_0 + \gd n^{(1)} + \gd^2 n^{(2)} + O_{L^2}(\gd^3)\, .
\end{equation}
Actually, see Remark~\ref{rem:pa}, the argument of proof can be pushed  farther
to obtain arbitrarily many terms in development \eqref{eq:qgddevel}, as well as in 
\begin{equation}
 c_\mu (\gd)\, =\, c_3 \gd^3+ c_5 \gd^5 +\ldots \,  .
\end{equation}
In Table~\ref{tab:1} we report a comparison between 
the $c_\mu(\gd)$ obtained by solving numerically \eqref{FKP kuramoto disorder delta}
and by evaluating the leading order $c_3$, i.e. by using \eqref{eq:main1.2}.

\begin{table}[h!]

\begin{center}
  
   \begin{tabular}{ | c | c | c | c | }

    \hline \phantom{m}
     $\gd$  & $K=2$                    & $ K=1.5 $                 & $ K=1.1 $                  \\ \hline
     $0.5$  & $-1.56300 \cdot 10^{-2}$ & $-8.59626 \cdot 10^{-2}$  & $-3.01064 \cdot 10^{-1}$   \\ \hline
     $0.1$  & $-1.23998 \cdot 10^{-2}$ & $-6.84835 \cdot 10^{-2}$  & $-2.72117 \cdot 10^{-1}$   \\ \hline
     $0.05$ & $-1.23072 \cdot 10^{-2}$ & $-6.79553 \cdot 10^{-2}$  & $-2.69460 \cdot 10^{-1}$   \\ \hline
     $0.01$ & $-1.22776 \cdot 10^{-2}$ & $-6.77921 \cdot 10^{-2}$  & $-2.68603 \cdot 10^{-1}$   \\ \hline
     $0.005$& $-1.22767 \cdot 10^{-2}$ & $-6.77869 \cdot 10^{-2}$  & $-2.68576 \cdot 10^{-1}$   \\ \hline
     \multicolumn{4}{c}{} \\ \hline
     $c_3$  & $-1.22764 \cdot 10^{-2}$ & $-6.77851 \cdot 10^{-2}$  & $-2.68567 \cdot 10^{-1}$   \\ \hline

   \end{tabular}

 \caption{\label{tab:1} For the case
$
 \mu= p \gd_{1-p} + (1-p) \gd_{-p}$, $p=0.2$, we have computed (numerically)
  $c_\mu(\gd)/\gd^3$ for three values of $K$
 and five values of $\gd$. In the last line we report the value 
 $c_3= \lim_{\gd\searrow 0} c_\mu(\gd)/\gd^3$ that one obtains by using \eqref{eq:main1.2}. }

\end{center}

\end{table}

\subsection{Symmetric disorder case}

Let us focus on the case in which the 
 distribution of the disorder $\mu$ is symmetric. 
% To fix ideas, one may think of $\mu=\frac 12(\gd_{-\go} + \gd_{\go})$ as a toy example, but everything that follows works for
% general symmetric distribution $\mu$ with small compact support.
In this case, at least for small disorder,
Theorem~\ref{th:expansion speed} is just telling us that the leading order 
in the development for the speed $c_\mu(\gd)$ is  zero: one can actually work harder and show
that such a  development yields zero terms  to all orders. 
In reality  in this case we already know, see  \eqref{eq:qhatom}-\eqref{eq:fixedpointom},
that for $K$ sufficiently large there is at least a non-trivial stationary profile, hence, by
rotation symmetry, at least one whole circle of stationary solutions. 
Actually, we can show that for $\gd$ small there is just one circle,
that we call $M_\gd$, of non-trivial stationary solutions
and this circle converges to $M_0$ as $\gd \searrow 0$ (in $C^j$, for every $j$) so 
the rotating solutions found in Theorem~\ref{th:expansion speed} must be the stationary solutions
in $M_\gd$. 

In order to be precise about this issue, we point out that
 \eqref{eq:qhatom}-\eqref{eq:fixedpointom} are written for 
\eqref{FKP kuramoto disorder}
while we work rather with
\eqref{FKP kuramoto disorder delta}. The changes are obvious, but we introduce a notation
for the analog of \eqref{eq:fixedpointom}:
\begin{equation}
 \label{eq:deffixedpoint disorder}
r_\gd\,=\, \Psi_\gd^\mu(2Kr_\gd), \qquad \text{where,}\, \Psi_\gd^\mu(x)\,:=\, \intR\frac{\intS\cos(\gtta)S(\gtta, \gd\go,
x)\dd\gtta}{Z(\gd\go, x)}\mu(\dd\go)\, .
\end{equation}

\medskip

\begin{lemma}
 \label{th: Psimu concave}
For all $K_{\min}<K_{\max}$, there exists $\gd_1=\gd_1(K_{\min}, K_{\max})>0$ such that, for all $0<K_{\min}<K<K_{\max}$
and  all $\gd\leq\gd_1$ the function $\Psi_\gd^\mu$ is strictly concave on $[0,1]$. Therefore for 
\eqref{eq:fixedpointom} has only a positive solution $r_\gd=r_\gd(K, \mu)$. Moreover $
\lim_{\gd \searrow 0} r_\gd= r_0$.
\end{lemma}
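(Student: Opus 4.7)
The plan is to prove strict concavity of $\Psi_\gd^\mu$ by perturbation from $\Psi_0$, and deduce the remaining claims from elementary properties of strictly concave functions. Throughout I interpret the statement with the implicit assumption $K_{\min}>1$, since the claimed convergence $r_\gd\to r_0>0$ requires $K>1$ at $\gd=0$.

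First, at $\gd=0$ the $\go$-dependence disappears: since $G(u,0,x)=x\cos u$, formula \eqref{eq:Sq} reduces to $S(\gtta,0,x)=2\pi I_0(x)\,e^{x\cos\gtta}$ and $Z(0,x)=(2\pi)^2 I_0(x)^2$, so that $\Psi_0^\mu=\Psi_0=I_1/I_0$. By \cite{cf:Pearce}, $\Psi_0$ is strictly concave on $[0,\infty)$, and its continuous second derivative satisfies $\Psi_0''(x)\leq -\kappa$ on the compact interval $[0,2K_{\max}]$ for some $\kappa>0$.

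Next I establish the $C^2$-convergence $\Psi_\gd^\mu\to\Psi_0$ uniformly on $[0,2K_{\max}]$ as $\gd\to 0$. Using \eqref{eq:Sq} and the standing assumption $\Supp(\mu)\subset[-1,1]$, one checks that $(y,x)\mapsto S(\gtta,y,x)/Z(y,x)$ and its first two $x$-derivatives are continuous on $[-1,1]\times[0,2K_{\max}]$ and bounded by integrable functions of $\gtta$. Dominated convergence (in both the $\gtta$-integral and the $\mu$-integral) yields joint continuity of $(\gd,x)\mapsto\Psi_\gd^\mu(x)$ and its first two $x$-derivatives; in particular $(\Psi_\gd^\mu)''(x)\to\Psi_0''(x)$ uniformly in $x\in[0,2K_{\max}]$. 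Choosing $\gd_1$ small enough yields $(\Psi_\gd^\mu)''(x)\leq -\kappa/2$ for $|\gd|\leq\gd_1$, which gives the claimed strict concavity on $[0,1]$ (and in fact on the larger interval $[0,2K_{\max}]$).

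For the fixed-point count, a direct calculation using \eqref{eq:Sq} shows that $S(\gtta,y,0)=(e^{4\pi y}-1)/(2y)$ is independent of $\gtta$, so $\Psi_\gd^\mu(0)=0$ for every $\gd$. The function $h_\gd(r):=\Psi_\gd^\mu(2Kr)-r$ is then strictly concave on $[0,1]$, vanishes at $r=0$, and satisfies $h_\gd(1)\leq 0$ since $\Psi_\gd^\mu\leq 1$. Moreover $h_0'(0)=2K\Psi_0'(0)-1=K-1>0$, and by $C^1$-convergence $h_\gd'(0)>0$ for $\gd$ small. A strictly concave function vanishing at the left endpoint with positive right derivative there and non-positive value at the right endpoint has exactly one zero in $(0,1]$, which is the unique positive fixed point $r_\gd$. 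Convergence $r_\gd\to r_0$ follows from the implicit function theorem applied to $(\gd,r)\mapsto h_\gd(r)$ at $(0,r_0)$, where $\partial_r h_0(r_0)=2K\Psi_0'(2Kr_0)-1\neq 0$ by strict concavity of $\Psi_0$.

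The main obstacle is the joint regularity of $(y,x)\mapsto S(\gtta,y,x)/Z(y,x)$ near $y=0$: the prefactor $(1-e^{4\pi y})$ in \eqref{eq:Sq} vanishes at $y=0$ while the primitive it multiplies has a matching behaviour, and one must verify that these cancellations produce a function of $(y,x)$ that is $C^\infty$ near $y=0$, with derivatives bounded uniformly for $\gtta\in\bbS$ and $x$ in the relevant compact range. This is a purely computational check (using expansions such as $(1-e^{4\pi y})/y=-4\pi+O(y)$ and similar ones for the primitives), but it is the only step that is not purely conceptual.
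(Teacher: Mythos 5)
Your argument contains a genuine gap at the heart of the concavity claim. You write that since $\Psi_0$ is strictly concave on $[0,\infty)$, its continuous second derivative satisfies $\Psi_0''(x)\leq-\kappa<0$ on the compact interval $[0,2K_{\max}]$; you then propagate this uniform negative bound to $(\Psi_\gd^\mu)''$ by $C^2$-convergence. But strict concavity does not imply a uniform negative bound on the second derivative, and in fact $\Psi_0''(0)=0$ here: $\Psi_0=I_1/I_0$ is an odd function of $x$, so all its even-order derivatives vanish at the origin. Consequently no $\kappa>0$ with $\Psi_0''\leq-\kappa$ on $[0,2K_{\max}]$ exists, and the perturbation argument as you set it up yields $(\Psi_\gd^\mu)''<0$ only on compacts of the form $[\eta,2K_{\max}]$ with $\eta>0$ fixed before $\gd$ is chosen. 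It says nothing about $(\Psi_\gd^\mu)''$ near $x=0$, which is precisely where strict concavity could fail.

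The paper's proof explicitly flags this issue ("an easy calculation shows that $\Psi_0''(0)=0$; in that sense one has to treat the concavity in a neighborhood of $0$ as a special case") and handles it with a finer argument: a Taylor expansion $(\Psi_\gd^\mu)''(x)=-6D^\gd(\mu)K^3x+\epsilon(x)$ with $\epsilon(x)=o(x)$ uniformly in $\mu$ and $K\le K_{\max}$, where $D^\gd(\mu)=\int h(\gd\go)\,\mu(\dd\go)$ for an explicit $h$ with $D^\gd(\mu)>\tfrac14$ for $\gd$ small. That pins $(\Psi_\gd^\mu)''<0$ on some fixed $(0,\eta_0]$, and the compactness/uniform-convergence argument you use (which matches the paper's) then covers $[\eta_0,2K_{\max}]$. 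To repair your proof you would need to add an analogue of this near-$0$ analysis; the uniform-$C^2$ convergence alone is not enough. Your deduction of the fixed-point uniqueness and $r_\gd\to r_0$ from strict concavity and the implicit function theorem is fine once the concavity is established, and is essentially what the paper leaves implicit.
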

\medskip

\medskip
We point out that in spite of the fact that $\Psi^\mu$ is explicit (cf. \eqref{eq:deffixedpoint nodisorder}), it is not so straightforward to
show that it is concave. We show that $\Psi_\gd^\mu$ remains strictly concave for a small $\gd$ via a perturbation argument. But the conjecture
(see \cite{cf:dH} and \cite{cf:dPdH}) that $\Psi^\mu$ is strictly concave for unimodal distributions $\mu$ is still an open issue.

\begin{rem}
\rm
 A direct computation  shows that the derivative of $\Psi^\mu_\gd$ at the origin is $1/(2\widetilde K _\gd)$, for 
$\widetilde K _\gd\,:=\,
\left(\intR \frac{\mu(\dd\go)}{1+4\gd^2\go^2}\right)^{-1}$
(of course $\widetilde K_1$ coincides with $\widetilde K$,
introduced in \eqref{eq:Ktilde}). 
Under the hypothesis of Lemma \ref{th: Psimu concave}, one therefore  
sees that there is a synchronization transition at $K=\widetilde K _\gd$ in the sense
that for $K\le \widetilde K _\gd$ the only stationary solution is $\frac 1{2\pi}$
while for $K>\widetilde K _\gd$ also the manifold of non-trivial stationary solutions
appears (and there is no other stationary solution).
\end{rem}
\medskip

Theorem~\ref{th:expansion speed} provides a  
  stability statement for $M_\gd$. This result can be sharpened  and for this let us introduce
% linearized evolution around the stationary solution $q$ defined in\eqref{eq:defstationarysolution disorder}, which is governed by the following equation:
%\begin{equation}\label{eq:McKlin}
% \partial_t h_t(\gtta, \go) \, =\,  L_{\hat{q}}h_t(\gtta, \go),
%\end{equation}
the linear operator
\begin{equation}
\label{eq:defLqmu}
L^\go _{{q}}u(\gtta, \go)\, :=\,  \frac 12 \Delta u(\gtta, \go) - \partial_{\gtta} \left(
u(\gtta, \go)\left(
\langle J \ast {q}\rangle_\mu(\gtta) + \gd\go\right) + {q}(\gtta, \gd\go) \langle J\ast u\rangle_\mu(\gtta)
\right),
\end{equation}
The domain $\cD(L^\go_{{q}})$ of the operator $L^\go_{{q}}$ is chosen to be the same as for  $A$, cf. \eqref{eq:defdomain A}.
%For simplicity we write $L$ instead of $L^\go _{{q}}$. 
\medskip

We place ourselves within the framework of Lemma \ref{th: Psimu concave}, in the sense that $\delta$ is small enough to ensure
the uniqueness of a non-trivial stationary solution
(of course existence requires $K> \tilde K_\gd$ and this is implied 
by $K>1$ if $\gd$ is sufficiently small). 
We prove a number of properties of the linear operator \eqref{eq:defLqmu}, saying notably that
it has a simple eigenvalue at zero and 
the rest of spectrum  is at a positive distance from zero and it is  in a cone in that lies in the negative complex half plane. We summarize in the next statement the qualitative features 
of our results on $L_q^\go$, but what we really prove are quantitative explicit estimates:
the interested reader finds them in Section~\ref{sec:sym}. 
\medskip

\begin{theorem}
\label{th:spectral prop L disorder}
The operator $L^\go_{{q}}$ has the following spectral properties:
$0$ is a simple eigenvalue for $L^\go_{{q}}$, with eigenspace spanned by $(\gtta, \go)\mapsto q'(\gtta, \go)$.
Moreover, for all $K>1$, $\rho\in(0,1)$, $\alpha\in(0,\pi/2)$, there exists $\gd_2=\delta_2(K, \rho, \alpha)$ such that for all $0\leq
\delta\leq \delta_2$, the
following is true:
\begin{itemize}
\item $L^\go_{{q}}$ is closable  and its closure has the same domain as the domain of the self-adjoint extension of $A$;
\item The spectrum of $L^\go_{{q}}$ lies in a cone $C_{\alpha}$ with vertex $0$ and angle $\alpha$
\begin{equation}
 C_{\alpha}\, :=\, \ens{\lambda\in\bbC}{\frac{\pi}{2} + \alpha\leq
\arg(\lambda)\leq \frac{3\pi}{2}-\alpha}\subseteq \ens{z\in\bC}{\Re(z)\leq 0}\, ;
\end{equation}
\item There exists $\alpha'\in(0, \frac\pi2)$ such that $L^\go_{{q}}$ is the infinitesimal generator of an analytic semi-group defined on a
sector $\{\lambda\in\bbC,\, |\arg(\lambda)|< \alpha'\}$;
\item The distance between $0$ and the rest of the spectrum is strictly positive and is at least equal to
$\rho\gap$, where $\gap$ is the spectral gap of the operator $A$ introduced in Proposition \ref{th:spectral
gap A}.
\end{itemize}
\end{theorem}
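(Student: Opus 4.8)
The plan is to treat $L^\go_q$ as a bounded perturbation of the self-adjoint operator $A$ from Proposition~\ref{th:spectral gap A} and to leverage both perturbation theory of sectorial operators (as in \cite{Pazy1983}) and analytic perturbation theory of isolated eigenvalues (as in \cite{Kato1995}). First I would write $L^\go_q = A + \gd B_\gd$, where $B_\gd$ collects all the $\gd$-dependent extra terms: the advection term $-\partial_\gtta(\go\, u)$ coming from the drift $\gd\go$, together with the difference between the coefficients built from $q(\cdot,\gd\go)$ and those built from $q_0$ (recall from Lemma~\ref{th: Psimu concave} that $q(\cdot,\gd\go)\to q_0$ in every $C^j$ as $\gd\searrow 0$, with an $O(\gd)$ rate, so the operator-norm of the correction is $O(\gd)$ relative to the graph norm of $A$). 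The operator $B_\gd$ involves one spatial derivative, hence is $A$-bounded with relative bound $0$ (since $A$ contains $\frac12\gD$); this is the input needed to apply the classical sectoriality-stability theorem: a small relatively bounded perturbation of a self-adjoint-bounded-above operator remains sectorial and generates an analytic semigroup on a (slightly narrower) sector. This gives the closability statement, the fact that the closure has the same domain as the self-adjoint extension of $A$, and the analytic-semigroup conclusion.

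Next I would handle the eigenvalue at $0$. The identity $L^\go_q q' = 0$ is a direct computation: differentiating the stationarity equation for $q$ in $\gtta$ shows $q'(\cdot,\go)$ is in the kernel (this is just the infinitesimal form of rotation invariance). Simplicity of this eigenvalue for small $\gd$ follows from analytic perturbation theory: at $\gd=0$ the operator is $A$, which by Proposition~\ref{th:spectral gap A} has a simple isolated eigenvalue $0$ separated from the rest of its spectrum by $\gap>0$; the eigenvalue and the spectral projection depend continuously (indeed analytically, once one sets up the family $L^\go_q$ as an analytic family of type (A) in $\gd$) on the perturbation, so for $\gd$ small the perturbed operator has exactly one eigenvalue near $0$, it is algebraically simple, and — because we already exhibited $q'$ as a genuine eigenvector with eigenvalue exactly $0$ — that eigenvalue is precisely $0$ with eigenspace $\Span(q')$.

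The remaining point, and the one I expect to be the main obstacle, is the \emph{quantitative} separation: showing that the rest of $\sigma(L^\go_q)$ lies in the cone $C_\alpha$ and at distance at least $\rho\gap$ from $0$, uniformly in $\gd\le\gd_2(K,\rho,\alpha)$. The cone statement will come from a numerical-range / form estimate: decomposing $\langle L^\go_q u, u\rangle_{-1,q_0,\mu}$ into the self-adjoint part (controlled by $A$, hence $\le 0$ and $\le -\gap\|u\|^2$ on the orthogonal complement of $q'$) plus the skew-type contribution of $\gd B_\gd$, and bounding the latter by $C\gd$ times the full graph norm; choosing $\gd$ small forces the numerical range, and hence the spectrum, into an arbitrarily thin cone about the negative real axis. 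The distance estimate is more delicate because one must rule out eigenvalues drifting toward $0$ along the imaginary direction: here I would combine the resolvent bound $\|(A-\lambda)^{-1}\|\le C/\mathrm{dist}(\lambda,\sigma(A))$ valid on the relevant sector with a Neumann-series argument $(L^\go_q-\lambda)^{-1} = (A-\lambda)^{-1}(I + \gd B_\gd(A-\lambda)^{-1})^{-1}$, which is invertible whenever $\gd\|B_\gd(A-\lambda)^{-1}\|<1$; estimating $\|B_\gd(A-\lambda)^{-1}\|$ on the annular region $\{0<|\lambda|<\rho\gap\}\cap C_\alpha$ (using that $B_\gd$ loses only one derivative and $A-\lambda$ controls two, with the loss of a factor that stays bounded away from this region since we avoid both $\sigma(A)$ and, via the cone, the imaginary axis) shows this region is in the resolvent set for $\gd\le\gd_2$. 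Assembling these three pieces — sectoriality, the simple kernel, and the uniform spectral-gap/cone estimate — yields all four bullet points, with the explicit constants recorded (as promised in the statement) in Section~\ref{sec:sym}.
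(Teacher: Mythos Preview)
Your overall strategy---write $L^\go_q=A+B$ with $B$ a relatively $A$-bounded perturbation of order $O(\gd)$, transfer sectoriality and the analytic-semigroup property from $A$ to $L^\go_q$, then use Kato-type eigenvalue perturbation around the simple eigenvalue $0$ of $A$ to deduce that $0$ remains simple and isolated for $L^\go_q$---is exactly the paper's. The paper's Section~\ref{sec:sym} carries this out with explicit constants: Proposition~\ref{prop:BAboundedmu} gives the $A$-bound for $B$, Proposition~\ref{prop:semigroupAB} the sectorial/analytic-semigroup statement, and \S\ref{subsubsec:loczero} the localization near $0$ via Kato's decomposition theorem (IV-3.18) applied on a circle of radius $\tfrac{1+\rho}{2}\gap$.

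Two technical points where the paper differs from your sketch are worth noting. First, for the cone and analytic-semigroup statements the paper does \emph{not} use a numerical-range argument; it works directly with resolvents, bounding $\Nsqmu{BR(\lambda,A)}$ via the $A$-boundedness and the explicit estimate $\Nsqmu{R(\lambda,A)}\le (1-\sin\alpha)^{-1}|\lambda|^{-1}$ on $\Sigma_\alpha$, and then inverting $1-BR(\lambda,A)$ by Neumann series. Your numerical-range route would require \emph{form}-boundedness of $B$ relative to $A$ (i.e.\ control of $|\langle Bu,u\rangle|$ by $\|u\|^2$ and $|\langle Au,u\rangle|$), which is strictly stronger than the operator $A$-boundedness you invoke; it does hold here because $B$ is first order and the Dirichlet form of $A$ controls $\NLqmu{u}$, but as written (``$C\gd$ times the full graph norm'') the step does not close. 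Second, for the spectral gap the paper uses Kato's spectral-separation theorem on a circle rather than a direct Neumann estimate on an annulus; the two are close in spirit, but the paper's version automatically yields the dimension count $\dim\tilde H_0=1$ that pins the perturbed eigenvalue at exactly $0$.
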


\subsection{Organization of remainder of the paper}
In Section~\ref{sec:hyperbolic structure}
we introduce the notion of stable normally hyperbolic manifold,
we recall its robustness properties, 
and show that $M_0$ is in this class of manifolds. 
The essential ingredient is Proposition~\ref{th:spectral gap A} 
that, directly or indirectly, plays a role in each subsequent section. 
Section~\ref{sec:hyperbolic structure} is also devoted to 
the proof of Proposition~\ref{th:spectral gap A}.
The proof of 
Theorem~\ref{th:expansion speed} is then completed 
in Section~\ref{sec:pa}, that is mainly devoted to perturbation
arguments. The case of the active rotators is treated in 
Section~\ref{sec:AR}, while 
Section~\ref{sec:sym} deals with the case symmetric disorder distribution and,
notably, with the proof of Theorem~\ref{th:spectral prop L disorder}
and of a number of related quantitative estimates.

\section{Hyperbolic structures and periodic solutions}
\label{sec:hyperbolic structure}
In this section we present the arguments proving the existence of the periodic solution of Theorem \ref{th:expansion speed}. We rely on the fact
that the circle of stationary solutions $M_0$ is a stable normally hyperbolic manifold, and on the robustness of this kind of structure : adding
the perturbation term $-\gd\partial_\gtta(p_t(\gtta,\go)\go)$ in \eqref{FKP kuramoto disorder without drift}, this manifold $M_0$ is deformed into
another manifold $M_\gd$, and thanks to the rotation invariance of the problem, $M_\gd$ is a circle too. The spectral gap of operator $A$
(Property \ref{th:spectral gap A}) which induces the hyperbolic property of $M_0$ is proved at the end of this section. 

\subsection{Stable normally hyperbolic manifolds}
We start by quickly reviewing
 the notion of of stable normally hyperbolic manifold (SNHM).
% 
% In the rest of the paper, we will sample the disorder according to $\mu$ which is supposed to be of compact support, namely, 
% \[\Supp(\mu)\subseteq [-1, 1],\]
% 
% but is renormalized by////////////////////
% the size-parameter $\gd>0$.
% The law $\mu$ will remain constant, and $\gd$ will tend to $0$. So we consider the PDE
% \begin{equation}
% \label{FKP kuramoto disorder delta}
%  \partial_t p_t(\gtta,\go)\, =\, \frac{1}{2} \gD p_t(\gtta,\go)-\partial_\gtta\Big(p_t(\gtta,\go)(\langleJ*p_t\rangle_\mu(\gtta)
% +\gd\go)\Big)\, .
% \end{equation}
The evolution of \eqref{FKP kuramoto disorder delta} will be studied in the space $X^1_\mu$ defined by
\begin{equation}
 X^1_\mu\, :=\, \left\{u\in L^2(\gl\otimes\mu),\, \int_{\bbS }u(\gtta,\go)\dd\gtta=1\quad \go\text{ a.s.}\right\}
\end{equation}
where $\gl$ denotes the Lebesgue measure on $\bbS $. This is made possible by the conservative character of the dynamics. The
$L^2$-norm with respect to the measure $\gl\otimes\mu$ will be denoted by $\Vert \cdot\Vert_{2,\mu}$. We will also use the space
$X^0_\mu$ defined by
\begin{equation}
 X^0_\mu\, :=\, \left\{u\in L^2(\gl\otimes\mu),\, \int_{\bbS }u(\gtta,\go)\dd\gtta=0\quad \go\text{ a.s.}\right\}\, .
\end{equation}
% The term $-\partial_\gtta(p_t(\gtta,\go)\go)$ in \eqref{FKP kuramoto disorder delta} will be considered as a perturbation term,
% and we will study how the manifold $M_0$ is transformed by this perturbation. In fact, Theorem \ref{th:spectral gap A} implies
% that
% $M_0$ is a stable normally hyperbolic manifold (SNHM), and this kind of structure is rather robust. In the following we review
% quickly the notion of of stable normally hyperbolic manifold. 

To define a SNHM, we need a dynamics: we have in mind \eqref{FKP kuramoto disorder delta}
 but for the moment let us just think of an evolution semigroup in $X^1_\mu$ that gives rise to $\{ u_t\}_{t\ge 0}$, with $u_0=u$,
to which we can associate a linear evolution semigroup $\{\Phi(u, t)\}_{t \ge 0}$ in $X^0_\mu$, satisfying  
 $\partial_t  \Phi(u, t)v =L(t)  \Phi(u, t)v$ and $\Phi(u, 0)v=v$, where $L(t)$ is the operator
 obtained by linearizing the evolution around $u_t$. 

 For us a SNHM $M\subset X^1_\mu$ (in reality we are interested only
 in $1$-dimensional manifolds, that is curves, but at this stage this does not really play a role)
 of characteristics $\gl_1$, $\gl_2$ ($0\le \gl_1< \gl_2$) and $C>0$ is  
 a $C^1$ compact connected manifold which is invariant under the dynamics 
 and for every $u \in M$ there exists a projection $P^o(u)$ on the 
 tangent space of $M$ at $u$, that is $\cR(P^o(u))=:T_uM$, which, for $v \in L^2_0$, satisfies the following properties:
 \begin{enumerate}
 \item  for every  $t\ge 0$ we have 
 \begin{equation}
 \Phi(u,t) P^o (u_0)v\, =\, P^o (u_t)\Phi(u,t) v\, ,
 \end{equation}
 \item we have  
 \begin{equation}
 \label{eq:shyp1}
 \Vert \Phi(u,t) P^o(u_0) v\Vert_{2,\mu} \, \le \, C \exp( \gl_1 t)   
 \Vert v \Vert_{2,\mu}\, ,
 \end{equation}
 and, for 
 $P^s\, :=\, 1-P^o$, we have 
 \begin{equation}
 \label{eq:shyp2}
 \Vert \Phi(u,t) P^s(u_0) v\Vert_{2,\mu} \, \le \, C \exp( -\gl_2 t)   
 \Vert v \Vert_{2,\mu}\, ,
 \end{equation}
 for every $t\ge 0$;
 \item 
 there exists a negative continuation of 
 the dynamics  $\{ u_t\}_{t \le 0}$ and of the linearized
 semigroup
 $\{ \Phi(u, t) P^o(u_0) v \}_{t \le 0}$ and for any such continuation
 we have
 \begin{equation}
 \label{eq:shyp3}
 \Vert \Phi(u,t) P^o(u_0) v\Vert_{2,\mu} \, \le \, C \exp( -\gl_1 t)   
 \Vert v \Vert_{2,\mu}\, ,
 \end{equation}
 for $t \le 0$.
 \end{enumerate}

\medskip

\subsection{$M_0$ is a SNHM}
First of all: the dynamics on $M_0$ is trivial. For $q_\psi\in M_0$, the projection $P^o_{q_\psi}$ on the tangent space is the
projection on the subspace spanned by $q^\prime_\psi$:
\begin{equation}
 \label{def P^o}
P^o_{q_\psi} u\, =\, \frac{\left\langle u,q^\prime_\psi\right\rangle_{-1,q_\psi,\mu}}{\left\langle q^\prime_\psi,q^\prime_\psi\right\rangle_{-1,q_\psi}}q^\prime_\psi
\end{equation}
and since the dynamic on the manifold is trivial, we are allowed to choose for the parameters $\gl_1=0$ and $\gl_2=\gap$
(where we recall that $\gap$ is given by Proposition \ref{th:spectral gap A}).

\medskip

We are in the same situation as in \cite{cf:GPPP}. For a suitable perturbation and if $\gd$ is small enough, the circle $M_0$ is
smoothly
transformed into another SNHM $M_\gd$, which is close to $M_0$. The proof is the same as in \cite[Sec.~5]{cf:GPPP}, which, in turn builds 
on results in \cite{cf:SellYou}): the spaces we are working in are  more general since we have to deal with the disorder. Here suitable
perturbation means being an element of $C^1(X^0_\mu,H^{-1}_{\mu})$, but it is clearly the case for the perturbation $u\mapsto
-\gd\, \go\, \partial_\gtta u$ when $\mu$ is of compact support. The following theorem works for all $C^1(X^0_\mu,H^{-1}_{\mu})$
perturbations:

\medskip

\begin{theorem}
\label{th:M} 
\cite[Sec.~5]{cf:GPPP}
For every $K>1$ there exists $\gd_0>0$ 
such that if $\gd \in [0,  \gd_0]$ there exists a stable  normally hyperbolic  
manifold  $M_\gd$ in $X^1_\mu$ for the perturbed equation \eqref{FKP kuramoto disorder delta}. Moreover 
%$M_\gd$ is a $C^1$-manifold in $L^2_0$ and 
we can write 
\begin{equation}
\label{eq:M}
M_\gd \, =\, \left\{q_\psi + \phi_\gd \left(q_\psi\right):\, \psi  \in \bbS \right\}\, ,
\end{equation}
for a suitable  function $\phi_\gd\in C^1(M_0,X^0_\mu)$ with the properties that 
\begin{itemize}
\item $\phi_\gd (q) \in \cR (A)$;
\item 
there exists 
$C>0$ such that 
$\sup_{\psi }(\Vert \phi _\gd \left(q_\psi\right)\Vert_{2,\mu}+
\Vert \partial_\psi\phi_\gd(q_\psi)\Vert _{2,\mu}) \le C \gd$.
\end{itemize}
\end{theorem}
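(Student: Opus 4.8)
\textbf{Proof plan for Theorem~\ref{th:M}.}
The plan is to invoke the general persistence theory for stable normally hyperbolic manifolds (SNHM) under perturbation, as developed in \cite{cf:SellYou} and applied in \cite[Sec.~5]{cf:GPPP}, adapting it to the disordered setting where the phase space carries the extra variable $\go$. The starting point is the observation, established in the previous subsection, that $M_0$ is a SNHM for the non-disordered evolution \eqref{FKP kuramoto disorder without drift} with characteristics $\gl_1=0$, $\gl_2=\gap$ and a constant $C$; here $\gl_2>0$ is exactly the spectral gap of $A$ from Proposition~\ref{th:spectral gap A}, and the hyperbolic splitting is given by the projection $P^o_{q_\psi}$ of \eqref{def P^o} together with $P^s=1-P^o$. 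The key structural facts we use about the perturbation are: (i) the perturbed equation \eqref{FKP kuramoto disorder delta} is obtained from \eqref{FKP kuramoto disorder without drift} by adding the term $u\mapsto -\gd\,\go\,\partial_\gtta u$, which, because $\mu$ has compact support (see \eqref{eq:suppHyp}), is a bounded map $C^1(X^0_\mu,H^{-1}_\mu)$ with norm $O(\gd)$; and (ii) the rotation symmetry of \eqref{FKP kuramoto disorder delta} is preserved, so whatever manifold we produce is automatically $\bbS$-invariant, hence a circle.

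First I would set up the functional-analytic framework: the semiflow of \eqref{FKP kuramoto disorder delta} on $X^1_\mu$, its smoothing properties (analyticity of the semigroup generated by $A$, from Appendix~\ref{sec:appendix regularity pt with disorder}), and the fact that the linearization at points of $M_0$ is the operator $A$ acting diagonally in $\go$ plus a bounded $O(\gd)$ correction. The crucial quantitative input is Proposition~\ref{th:spectral gap A}: it gives the exponential dichotomy \eqref{eq:shyp1}--\eqref{eq:shyp3} with uniform constants, and it is precisely this dichotomy with a \emph{gap} ($\gl_1=0<\gl_2=\gap$) that makes the manifold normally hyperbolic and persistent. Then I would quote the abstract persistence theorem (the $C^1$ version suffices for the stated conclusion, though the references give $C^k$): for $\gd$ below a threshold $\gd_0(K)$ controlled by $\gap$, $C$, and the Lipschitz norm of the perturbation, there is a perturbed SNHM $M_\gd$ lying in an $O(\gd)$-neighborhood of $M_0$, obtained as the graph of a $C^1$ map $\phi_\gd$ over $M_0$. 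The representation \eqref{eq:M} and the estimate $\sup_\psi(\Vert\phi_\gd(q_\psi)\Vert_{2,\mu}+\Vert\partial_\psi\phi_\gd(q_\psi)\Vert_{2,\mu})\le C\gd$ come out of the fixed-point construction (a Lyapunov--Perron or graph-transform argument) in which $\phi_\gd$ solves an equation whose right-hand side is $O(\gd)$ and which contracts with rate governed by $\gap$. The property $\phi_\gd(q)\in\cR(A)$ follows because the construction naturally produces a correction orthogonal (in $H^{-1}_{q_0,\mu}$) to $\Ker A=\Span(\partial_\gtta q_0)$, i.e. in the range of the self-adjoint operator $A$; I would make this explicit by noting that the fixed-point equation for $\phi_\gd$ can be posed directly on $\cR(A)$ using the complementary projection $P^s$.

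The main obstacle, and the only place where real work beyond citation is needed, is checking that the abstract theory of \cite{cf:SellYou,cf:GPPP} genuinely applies in the present \emph{disordered} spaces $X^1_\mu$, $X^0_\mu$, $H^{-1}_\mu$ — that is, that all the hypotheses (admissibility of the spaces, sectoriality of the generator, compactness of $M_0$, uniform hyperbolicity, smoothness of the nonlinearity as a map between the interpolation spaces) survive when $\go$ is carried along as a parameter. The point is that the disorder enters only as an inert label in the unperturbed dynamics (as stressed after \eqref{FKP kuramoto disorder without drift}): the operator $A$ acts on the $\gtta$-variable and merely integrates against $\mu$ in the nonlocal term, so all estimates are obtained by fibering over $\go$ and using that $\mu$ is a probability measure of compact support; the norm-equivalences recorded after \eqref{def scalar product sobolev with wage and mu} guarantee that the choice of weight $k$ is immaterial. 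Concretely I would verify that the perturbation $u\mapsto-\gd\,\go\,\partial_\gtta u$ maps $X^0_\mu$ into $H^{-1}_\mu$ with a bound proportional to $\gd\,\Ninf{\go}\le\gd$ (using \eqref{eq:suppHyp}) and is $C^1$ (in fact linear) in $u$, so that the hypotheses of \cite[Sec.~5]{cf:GPPP} are met verbatim. Given this, the conclusion — including the quantitative $O(\gd)$ control on $\phi_\gd$ and its $\psi$-derivative, and the fact that $M_\gd$ is again a circle by rotation invariance — follows from the cited persistence results, so the proof reduces to the verification just described together with a statement of the abstract theorem being applied.
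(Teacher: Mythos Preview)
Your proposal is correct and matches the paper's approach exactly: the paper does not give an independent proof of Theorem~\ref{th:M} but simply invokes \cite[Sec.~5]{cf:GPPP} (built on \cite{cf:SellYou}), observing that the only thing to check in the disordered setting is that the perturbation $u\mapsto -\gd\,\go\,\partial_\gtta u$ lies in $C^1(X^0_\mu,H^{-1}_\mu)$, which holds because $\mu$ has compact support. Your sketch is in fact more detailed than what the paper provides; note only that the circle structure of $M_\gd$ via rotation invariance is treated in the paper as a separate remark (Remark~\ref{rem:un}) rather than as part of the theorem itself.
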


\medskip
\begin{rem}\label{rem:un}
\rm
A byproduct of the proof in \cite[Sec.~5]{cf:GPPP} is also that
 $M_\gd$ is the unique invariant manifold in a $L^2(\gl,\mu)$-neighborhood of $M_0$. So in the case of
\eqref{FKP kuramoto disorder delta}, thanks to the symmetry of the problem
that tells us that any rotation of $M_\gd$ is still a invariant manifold, $M_\gd$ is in fact a circle, and that the
dynamics on this circle is a traveling wave of constant (possibly zero) speed $c_\mu(\gd)$. So the invariant manifold we get for \eqref{FKP kuramoto disorder delta} is even $C^\infty$. In
this sense, when dealing with \eqref{FKP kuramoto disorder delta}, we are using 
only part of the strength of Theorem~\ref{th:M}. Of course  this symmetry argument does not
apply when dealing with \eqref{eq:AR}.
\end{rem}
\medskip

\begin{rem}\label{rem:stability}
\rm
Theorem \ref{th:M} addresses the existence and the linear stability of the manifold $M_\gd$. 
The non-linear stability statement in Theorem~\ref{th:expansion speed}(1) follows from 
Theorem~\ref{th:M} 
combined with  \cite[Theorem 8.1.1]{cf:Henry},
when the dynamics is periodic with non zero speed on $M_\gd$. If $M_\gd$ is a manifold of stationary points, the argument for the non-linear stability follows by repeating the argument
in
 \cite[Th. 4.8]{cf:GPP}, where  the non-disordered case is treated.
 \end{rem}

\medskip
We now prove Proposition~\ref{th:spectral gap A} and thus that $M_0$ is a SNHM. 

\medskip

\subsection{The spectral gap estimate (proof of Proposition \ref{th:spectral gap A})}
%We will rely on the proof of the reversible case given in \cite{cf:BGP}. 
We start by remarking that $A$ is symmetric
for the scalar product $\crosqmu{\cdot}{\cdot}$ (recall \eqref{def scalar product sobolev with wage and mu}). In fact, for $u$ and
$v$ in $\cD(A)$, a short computation  gives (in the following we use the notation $u^\prime(\gtta,\go)=\partial_\gtta u(\gtta,\go)$)
\begin{align}
\crosqmu{v}{Au} & \, =\, \intSR
\left[\frac{\cV(\theta,\go)}{q_0(\theta)}\left(\frac{ u^\prime(\theta,\go)}{2}-u(\theta,\go)J*q_0(\theta)-q_0(\theta)\langle J*u
\rangle_\mu(\gtta) \right) \right]\dd \theta\dd\mu\\ \notag
                 & \, =\, -\frac 12 \intSR \frac{u(\theta,\go)v(\theta,\go)}{q_0(\theta)}\dd \theta\dd\mu+\int_{\bbR}\int_{(\bbS )^2}
v(\theta,\go)\tilde J*u(\theta,\go')\dd\theta \dd\mu \otimes\mu\, ,
\end{align}
where $\tilde J(\gtta)=K\cos(\gtta)$. We now  first prove an inequality  for $A$ 
that is stronger than the spectral gap inequality and then deduce that $A$
is (essentially) self-adjoint. We define
the two following scalar products, which were used for the non-disordered case in \cite{cf:BGP}:
\begin{equation}
 \label{eq:defNorm-1 non-des}
\crosq{u}{v}\, :=\, \intS \frac{\cU(\gtta)\cV(\gtta)}{q_{0}(\gtta)}\dd\gtta \, ,
\end{equation}
where $\cU(\cdot)$ is the primitive of $u(\cdot)$ such that $\intS \frac{\cU(\gtta)}{q_{0}(\gtta)}\dd\gtta=0$ and
\begin{equation}
 \label{eq:defNorm2q non-des}
\cro{u}{v}_{2,q_0}\, :=\, \intS \frac{u(\gtta)v(\gtta)}{q_0(\gtta)}\dd\gtta\, .
\end{equation}
We denote the closures of $\cD(L_0)$ for these scalar products respectively by $H^{-1}_{{q_0}}$ and $L^2_{{q_0}}$. In the disordered
case, $L^2_{{q_0}}$ corresponds to the space $L^2_{{q_0},\mu}$, which we define by the closure of $\cD(A)$ with respect to the
norm $\Vert \cdot\Vert_{2,{q_0},\mu}$ associated with the scalar product
\begin{equation}
\label{def scalar product L2 with wage and mu}
 \croLqmu{u}{v}\, :=\, \intR \intS \frac{u(\gtta,\go)v(\gtta,\go)}{q_0(\gtta)} \dd\gtta \dd\mu \, .
\end{equation}
The two Dirichlet forms for the disordered and non-disordered case 
are respectively
\begin{equation}
\label{dirichlet disorder}
\cE_\mu(u) \, =\, -\crosqmu{Au}{u}\, ,
\end{equation}
and
\begin{equation}
\label{dirichlet non-disorder}
\cE(u)\, =\, -\crosq{ L_{q_0}u}{u} \, .
\end{equation}
As in \cite{cf:BGP}, we first prove a spectral gap type inequality that involves  the scalar product
$\cro{\cdot}{\cdot}_{2,q_0}$. For this we introduce  the projections on the line spanned  by $q^\prime_0$ in the spaces $L^2_{{q_0},\mu}$ and
$L^2_{{q_0}}$
\begin{equation}
 P_{2,{q_0},\mu} u=\frac{\croLqmu{u}{q^\prime_0}}{\croLq{q^\prime_0}{q^\prime_0}}q^\prime_0 \qquad \text{for all}
\,u=u(\gtta,\go)\in L^2_{{q_0},\mu} \, ,
\end{equation}
and
\begin{equation}
 P_{2,{q_0}}u\, =\, \frac{\croLq{u}{q^\prime_0}}{\croLq{q^\prime_0}{q^\prime_0}}q^\prime_0 \qquad \text{for all}\ u%=u(\gtta)
 \in
L^2_{{q_0}} \, .
\end{equation}
Remark that since $q^\prime_0$ does not depend on $\go$,
\begin{equation}
 \croLqmu{q^\prime_0}{q^\prime_0}\, =\, \croLq{q^\prime_0}{q^\prime_0} \quad 
 \text{ and }  \quad \crosqmu{q^\prime_0}{q^\prime_0}\, =\, \crosq{q^\prime_0}{q^\prime_0}\, ,
\end{equation}
and that for all $u\in L^2_{{q_0},\mu}$
\begin{equation}
\label{link projections}
P_{2,{q_0},\mu}u\, =\, \langle P_{2,{q_0}}u \rangle_\mu\, =\, P_{2,{q_0}}\langle u\rangle_\mu\, . 
\end{equation}

\begin{proposition}
\label{prop:min dirichlet des}
 For all $u\in L^2_{{q_0},\mu}$ such that for almost every $\go$, $\intS u(\cdot, \go) =0$  
\begin{equation}
\cE_{\mu}(u)\, \geq \, c_K \croLqmu{u-P_{2,{q_0},\mu}u}{u-P_{2,{q_0},\mu}u}\, ,
\end{equation}
with
\begin{equation}
c_K\, =\, 1-K(1-r_0^2)\in(0,1/2)\, .
\end{equation}
\end{proposition}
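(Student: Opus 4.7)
The plan is to reduce the disordered spectral gap estimate to the non-disordered inequality of \cite[Th.~1.8]{cf:BGP} by decomposing $u$ into its $\mu$-average and a zero-$\mu$-mean fluctuation, taking advantage of the fact that the interaction in $A$ is through the annealed profile $\langle u\rangle_\mu$ only.

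First I would make $\cE_\mu(u)$ explicit using the symmetry computation already performed just above the proposition: taking $v=u$ in that identity and integrating $\mu\otimes\mu$ first in $\go,\go'$ (the kernel $\widetilde J$ being independent of the disorder), we obtain
\begin{equation*}
\cE_\mu(u)\, =\, \frac12\intR\intS\frac{u(\gtta,\go)^2}{q_0(\gtta)}\dd\gtta\mu(\dd\go)\, -\, \intS \langle u\rangle_\mu(\gtta)\,\widetilde J\ast\langle u\rangle_\mu(\gtta)\dd\gtta\, .
\end{equation*}
I would then write $u=\langle u\rangle_\mu + u^\perp$, where $u^\perp(\gtta,\go):=u(\gtta,\go)-\langle u\rangle_\mu(\gtta)$ satisfies $\langle u^\perp\rangle_\mu\equiv 0$. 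The cross term in the $L^2_{q_0,\mu}$-norm therefore vanishes and a direct substitution yields
\begin{equation*}
\cE_\mu(u)\, =\, \cE(\langle u\rangle_\mu)\,+\,\frac12\intR\intS\frac{u^\perp(\gtta,\go)^2}{q_0(\gtta)}\dd\gtta\mu(\dd\go)\, ,
\end{equation*}
with $\cE$ the non-disordered Dirichlet form defined in \eqref{dirichlet non-disorder}.

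Next, since $\intS u(\cdot,\go)\dd\gtta=0$ for $\mu$-a.e.\ $\go$, the average $\langle u\rangle_\mu$ also has zero integral on $\bbS$, and the non-disordered spectral gap inequality from \cite{cf:BGP} applies with precisely the constant $c_K=1-K(1-r_0^2)\in(0,1/2)$ announced here, giving
\begin{equation*}
\cE(\langle u\rangle_\mu)\, \geq\, c_K\, \croLq{\langle u\rangle_\mu - P_{2,q_0}\langle u\rangle_\mu}{\langle u\rangle_\mu - P_{2,q_0}\langle u\rangle_\mu}\, =\, c_K\,\croLq{\langle u\rangle_\mu-P_{2,q_0,\mu}u}{\langle u\rangle_\mu-P_{2,q_0,\mu}u}\, ,
\end{equation*}
where the last equality uses the projection identity \eqref{link projections}.

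To conclude, I would use that $P_{2,q_0,\mu}u$ is a multiple of $q_0'$, hence independent of $\go$, so the orthogonal decomposition
\begin{equation*}
\NLqmu{u-P_{2,q_0,\mu}u}^2\, =\, \croLq{\langle u\rangle_\mu - P_{2,q_0,\mu}u}{\langle u\rangle_\mu - P_{2,q_0,\mu}u}\, +\, \NLqmu{u^\perp}^2
\end{equation*}
holds. Since $c_K<1/2$, the elementary bound $\tfrac12 \NLqmu{u^\perp}^2\geq c_K \NLqmu{u^\perp}^2$ together with the previous two displays yields the proposition. The only non-routine ingredient is the non-disordered inequality of \cite{cf:BGP}, invoked as a black box; the passage to the disordered setting is essentially automatic once one observes that the quenched disorder couples to $u$ only via the diffusive (hence strictly coercive) term in the Dirichlet form, while the interaction term acts solely on the annealed profile $\langle u\rangle_\mu$.
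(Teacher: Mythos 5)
Your proof is correct, and it takes a genuinely different route from the paper's. The paper's argument writes
\begin{equation*}
\cE_\mu(u)\,=\,\langle \cE(u)\rangle_\mu\,+\,\tfrac12\!\int\!\!\int [u(\cdot,\go)-u(\cdot,\go')]\,\widetilde J\ast[u(\cdot,\go)-u(\cdot,\go')]\,\dd\gtta\,\dd\mu\!\otimes\!\mu\,,
\end{equation*}
applies the non-disordered inequality \emph{pointwise} in $\go$ to the first term, and then lower-bounds the second (nonnegative) term by retaining only its $q_0'$-component, which requires the explicit spectral decomposition $L^2_{q_0}=F_0\oplus F_{1/2}\oplus F_{K-1/2}$ of Lemma~\ref{lem:decomposition L2q}. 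You instead decompose $u=\langle u\rangle_\mu + u^\perp$ and observe that the interaction part of $\cE_\mu$ sees only $\langle u\rangle_\mu$, while $u^\perp$ enters only the strictly coercive diffusive part with coefficient $1/2>c_K$; this isolates the disorder dependence cleanly and lets you apply the non-disordered inequality once, to the annealed profile $\langle u\rangle_\mu$ alone. Your route therefore avoids Lemma~\ref{lem:decomposition L2q} entirely and is somewhat more economical. Both decompositions lose the same amount and arrive at the same constant $c_K$; the paper's version has the marginal advantage of keeping the $\go$-resolved form $\langle\cE(u)\rangle_\mu$, which is a natural intermediate object when one later compares norms via Lemmas~\ref{lem:comparaison norm projections} and \ref{lem:comparaison norm}, but this is a matter of bookkeeping rather than content. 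One small point worth making explicit in your write-up: the cross term vanishing in $\NLqmu{u-P_{2,q_0,\mu}u}^2$ uses both that $\langle u^\perp\rangle_\mu=0$ and that $\langle u\rangle_\mu-P_{2,q_0,\mu}u$ is $\go$-independent; you state the second fact but invoke the first only implicitly.
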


The proof of this proposition relies on the corresponding result for the non-disordered case.:

\begin{proposition}
(see  
\cite[Prop. 2.3]{cf:BGP})
\label{prop:min dirichlet non-des}
 For all $u\in L^2_{{q_0}}$ such that for almost every $\go$, $\intS u(\cdot, \go) =0$
\begin{equation}
\cE(v)\, \geq \, c_K\croLq{ u-P_{2,{q_0}}u}{u-P_{2,{q_0}}u}\, . 
\end{equation}
\end{proposition}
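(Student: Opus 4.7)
The plan is to reduce the claim to the non-disordered estimate in Proposition \ref{prop:min dirichlet non-des} via the decomposition $u = \bar u + w$, where $\bar u(\theta) := \langle u\rangle_\mu(\theta)$ and $w := u - \bar u$, so that $\langle w\rangle_\mu \equiv 0$. The pivotal step is the identity
\begin{equation*}
\cE_\mu(u) \, = \, \cE(\bar u) + \tfrac12\, \croLqmu{w}{w}.
\end{equation*}
To prove it, I start from the formula for $\crosqmu{v}{Au}$ computed just above the proposition: setting $v = u$ and expanding $\tilde J*u(\theta, \go') = K\cos\theta\int_\bbS\cos\phi\, u(\phi, \go')\dd\phi + K\sin\theta\int_\bbS\sin\phi\, u(\phi, \go')\dd\phi$, the double $\mu\otimes\mu$-integration produces $K[\alpha_{\bar u}^2 + \beta_{\bar u}^2]$, where $\alpha_v := \int\cos\phi\, v\dd\phi$ and $\beta_v := \int\sin\phi\, v\dd\phi$. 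Hence $\cE_\mu(u) = \tfrac12\croLqmu{u}{u} - K(\alpha_{\bar u}^2 + \beta_{\bar u}^2)$. Since the analogous non-disordered computation gives $\cE(\bar u) = \tfrac12\cro{\bar u}{\bar u}_{2,q_0} - K(\alpha_{\bar u}^2 + \beta_{\bar u}^2)$, subtracting and using the orthogonal splitting $\croLqmu{u}{u} = \cro{\bar u}{\bar u}_{2,q_0} + \croLqmu{w}{w}$ (a direct consequence of $\bar w\equiv 0$) yields the identity.

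The rest is a short combination. Proposition \ref{prop:min dirichlet non-des} applied to $\bar u$---which has zero $\theta$-integral as the $\mu$-average of functions that do---yields $\cE(\bar u) \ge c_K\, \cro{\bar u - P_{2,q_0}\bar u}{\bar u - P_{2,q_0}\bar u}_{2,q_0}$. By \eqref{link projections}, $P_{2,q_0,\mu}u = P_{2,q_0}\bar u$, and the same $\bar w\equiv 0$ computation gives the Pythagorean decomposition
\begin{equation*}
\croLqmu{u - P_{2,q_0,\mu}u}{u - P_{2,q_0,\mu}u} \, = \, \cro{\bar u - P_{2,q_0}\bar u}{\bar u - P_{2,q_0}\bar u}_{2,q_0} + \croLqmu{w}{w}.
\end{equation*}
Since $c_K \le 1/2$, we have $\tfrac12\croLqmu{w}{w} \ge c_K\croLqmu{w}{w}$, and summing the two lower bounds yields the desired inequality $\cE_\mu(u) \ge c_K\, \croLqmu{u - P_{2,q_0,\mu}u}{u - P_{2,q_0,\mu}u}$.

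The conceptual content is concentrated in the identity $\cE_\mu(u) = \cE(\bar u) + \tfrac12\croLqmu{w}{w}$: the $\go$-dependent fluctuation $w$ enters the disordered Dirichlet form only through the universal quadratic term $\tfrac12\croLqmu{w}{w}$, with no contribution from the mean-field coupling, because the coupling detects only the $\mu$-average $\bar u$. Since $1/2 \ge c_K$, the non-disordered spectral-gap constant transfers verbatim, and the condition $c_K \in (0, 1/2)$ in the statement is precisely what makes this no-loss reduction work.
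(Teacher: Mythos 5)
You have proved the wrong statement. Proposition~\ref{prop:min dirichlet non-des} is the \emph{non-disordered} inequality $\cE(u)\geq c_K\croLq{u-P_{2,q_0}u}{u-P_{2,q_0}u}$ for $u\in L^2_{q_0}$ with $\intS u=0$; the paper gives no proof for it, citing \cite[Prop.~2.3]{cf:BGP} instead. What you wrote is a proof of the \emph{next} proposition, \ref{prop:min dirichlet des} (the disordered inequality $\cE_\mu(u)\geq c_K\croLqmu{u-P_{2,q_0,\mu}u}{u-P_{2,q_0,\mu}u}$), and your argument invokes \ref{prop:min dirichlet non-des} as an input (applied to $\bar u=\langle u\rangle_\mu$). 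As a proof of \ref{prop:min dirichlet non-des} the argument is therefore circular: it establishes nothing about the statement you were asked to prove.

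For what it is worth, read as a proof of Proposition~\ref{prop:min dirichlet des} your argument is correct, and somewhat more streamlined than the paper's. The paper writes $\cE_\mu(u)=\langle\cE(u)\rangle_\mu+\tfrac12\int_\bbR\int_{\bbS^2}[u(\theta,\go)-u(\theta,\go')]\tilde J*[u(\theta,\go)-u(\theta,\go')]$, applies \ref{prop:min dirichlet non-des} fiberwise in $\go$, and then needs Lemma~\ref{lem:decomposition L2q} together with the $\alpha$--$\beta$ bookkeeping to control the second term. Your decomposition $u=\bar u+w$, the identity $\cE_\mu(u)=\cE(\bar u)+\tfrac12\croLqmu{w}{w}$ (both sides reduce to $\tfrac12\croLqmu{u}{u}-K[\alpha_{\bar u}^2+\beta_{\bar u}^2]$, since the $\tilde J$-coupling detects only the $\mu$-average), and the Pythagorean split of $u-P_{2,q_0,\mu}u$ into $(\bar u-P_{2,q_0}\bar u)+w$ give a direct route requiring only $c_K\leq 1/2$. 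Both routes use that convolution with $\tilde J$ sees only $\bar u$, but the problem remains that neither proves the cited statement: a genuine proof of \ref{prop:min dirichlet non-des} would have to work in $L^2_{q_0}$ directly, with no appeal to the inequality you set out to prove.
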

\medskip

\noindent
\textit{Proof of Proposition \eqref{prop:min dirichlet des}.}
The first step of the proof is to make the Dirichlet form of the non-disordered case appear in the the disordered case one, that is
\begin{align}
 \cE_\mu(u) & \, =\, \langle \cE(u) \rangle_\mu +\int_{\bbR}\int_{(\bbS )^2} u(\theta,\go)\tilde J*[u(\theta,\go)-u(\theta,\go')]\dd\theta
\dd\mu\otimes \mu\\ \label{link Dirichlet}
        & \, =\, \langle \cE(u) \rangle_\mu +\frac 12 \int_{\bbR}\int_{(\bbS )^2} [u(\theta,\go)-u(\theta,\go')]\tilde
J*[u(\theta,\go)-u(\theta,\go')]\dd \theta \dd\mu\otimes \mu\, ,
\end{align}
and from Proposition \eqref{prop:min dirichlet non-des} we see  that
\begin{equation}
 \langle \cE(u)\rangle_\mu \, \geq \, c_K \croLq{ u-P_{2,{q_0}}u}{u-P_{2,{q_0}}u}\, .
\end{equation}
Now remark that if we define
\begin{equation}
 v\, =\, u-P_{2,{q_0},\mu}u\, ,
\end{equation}
using \eqref{link projections} we get
\begin{equation}
 v-P_{2,{q_0}}v\, =\, u-P_{2,{q_0}}u\, ,
\end{equation}
and so
\begin{equation}
 \langle \cE(u)\rangle_\mu \, \geq \, c_K \croLqmu{v-P_{2,{q_0}}v}{v-P_{2,{q_0}}v}\, .
\end{equation}

We now introduce an orthogonal decomposition of the space $L^2_{{q_0}}$ which is well adapted to the convolution with $\tilde J$.

\begin{lemma}
(See \cite[Lemma 2.1]{cf:BGP}.)
\label{lem:decomposition L2q}
 We have the following decomposition
\begin{equation}
L^2_{{q_0}}\, =\, F_0\oplus^\bot F_{1/2}\oplus^\bot F_{K-1/2}
\end{equation}
where
\begin{equation}
 F_0\, :=\, \left\{\theta\mapsto a_0+\sum_{j\geq 2}a_j\cos(j\theta)+b_j\sin(j\theta)\, ;\, \sum_j a_j^2+b^2_j<\infty \right\}
\end{equation}
and both $F_{1/2}$ and $F_{K-1/2}$ are one dimensional subspaces generated respectively by $\theta\mapsto \sin(\theta)q(\theta)\,
(=-q^\prime_0(\theta)/2Kr_0)$ and by $\theta \mapsto \cos(\theta) q_0 (\theta)$. Moreover, when $u\in F_\lambda$, then 
\begin{equation}
 \tilde J *u\, =\, \frac{\lambda}{q_0} u\, .
\end{equation}
\end{lemma}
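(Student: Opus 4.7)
The central observation is that $\tilde J(\theta) = K\cos\theta$ has Fourier spectrum supported only at frequencies $\pm 1$, so $\tilde J \ast u(\theta) = K\cos\theta \int_\bbS u(\varphi)\cos\varphi\, \dd\varphi + K\sin\theta \int_\bbS u(\varphi)\sin\varphi\, \dd\varphi$. The image of $\tilde J\ast$ is therefore two-dimensional, spanned by $\cos\theta$ and $\sin\theta$, and its kernel consists exactly of those functions with vanishing first Fourier coefficients, which by definition is $F_0$. Hence for $f\in F_0$ one has $\tilde J\ast f = 0 = (0/q_0)f$, giving the eigenvalue relation on $F_0$ with $\lambda=0$.

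For the nontrivial eigenvalues, any eigenfunction with $\lambda\neq 0$ satisfies $u = (q_0/\lambda)\tilde J\ast u$, which places $u$ in the two-dimensional span of $q_0\cos\theta$ and $q_0\sin\theta$. I would then compute $\tilde J\ast(q_0\sin\theta)$ and $\tilde J\ast(q_0\cos\theta)$ directly. Since $q_0$ is even about the origin, the mixed integrals $\int q_0(\varphi)\sin\varphi\cos\varphi\, \dd\varphi$ vanish, so the restriction is already diagonal: $\tilde J\ast(q_0\sin\theta) = K\sin\theta \int q_0\sin^2\varphi\, \dd\varphi$ and $\tilde J\ast(q_0\cos\theta) = K\cos\theta \int q_0\cos^2\varphi\, \dd\varphi$.

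The main obstacle is now to check that these two scalars equal $1/2$ and $K-1/2$ respectively. Substituting $q_0(\theta) = e^{2Kr_0\cos\theta}/(2\pi I_0(2Kr_0))$ and using $2\cos^2\varphi = 1 + \cos(2\varphi)$ together with $\int_\bbS \cos(n\varphi)e^{x\cos\varphi}\dd\varphi = 2\pi I_n(x)$, both claims reduce to the single identity $K(I_0(2Kr_0) - I_2(2Kr_0))/I_0(2Kr_0) = 1$. This closes via the Bessel recurrence $I_{n-1}(x) - I_{n+1}(x) = (2n/x)I_n(x)$ at $n=1$ combined with the fixed-point identity $r_0 I_0(2Kr_0) = I_1(2Kr_0)$ defining $r_0$, see \eqref{eq:deffixedpoint nodisorder}. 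Everything else in the lemma is Hilbert-space bookkeeping once this identity is in place.

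Orthogonality of the three subspaces is then straightforward: $\langle q_0\sin\theta, q_0\cos\theta\rangle_{2,q_0} = \int \sin\theta\cos\theta\, q_0(\theta)\, \dd\theta = 0$ by parity, while for any $f \in F_0$ the pairings $\langle f, q_0\sin\theta\rangle_{2,q_0} = \int f\sin\theta\, \dd\theta$ and $\langle f, q_0\cos\theta\rangle_{2,q_0} = \int f\cos\theta\, \dd\theta$ vanish by definition of $F_0$. For completeness, given $u\in L^2_{q_0}$, one chooses $\alpha, \beta$ by solving the $2\times 2$ linear system that forces $u - \alpha q_0\sin\theta - \beta q_0\cos\theta$ to have zero first Fourier coefficients; nondegeneracy of the system follows from $\int q_0\sin^2\theta\, \dd\theta > 0$ and $\int q_0\cos^2\theta\, \dd\theta > 0$, and the residual then lies in $F_0$ by construction.
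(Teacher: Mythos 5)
Your proof is correct. Note that the paper itself does not prove Lemma~\ref{lem:decomposition L2q} — it delegates it entirely to \cite[Lemma 2.1]{cf:BGP} — so you have supplied the argument the paper leaves implicit. Your route is the natural one and matches the standard proof: the convolution kernel $\tilde J = K\cos$ has Fourier content only at frequency~$\pm 1$, so the operator $u\mapsto q_0\,(\tilde J\ast u)$ (which is self-adjoint on $L^2_{q_0}$ since $\tilde J$ is even) has range contained in $\mathrm{span}\{q_0\cos\theta, q_0\sin\theta\}$, kernel equal to $F_0$, and parity of $q_0$ already diagonalizes the two-dimensional block. The two eigenvalue computations reduce, as you say, to the single identity $K\bigl(I_0(2Kr_0)-I_2(2Kr_0)\bigr)=I_0(2Kr_0)$, which follows from the Bessel recurrence $I_0(x)-I_2(x)=\tfrac{2}{x}I_1(x)$ together with the fixed-point relation $I_1(2Kr_0)=r_0 I_0(2Kr_0)$ of \eqref{eq:deffixedpoint nodisorder}. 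The orthogonality and completeness bookkeeping at the end is correct as well: since $q_0$ is bounded above and away from zero, $L^2_{q_0}$ is the same vector space as $L^2(\gl)$ with an equivalent norm, so the linear-algebra reasoning with the $2\times 2$ (in fact diagonal, by parity) system is sound and yields the direct-sum decomposition.
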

\medskip

With the help of Lemma~\ref{lem:decomposition L2q} we can find a lower bound for the last term in \eqref{link Dirichlet}: choose $\ga$ such that $P_{2,{q_0}}u=\alpha q^\prime_0$, so that
we can write
\begin{equation}
\label{min Dirchlet}
  \cE_\mu(u) \, \geq \,c_K \croLqmu{ v-P_{2,{q_0}}v}{v-P_{2,{q_0}}v} + \frac{\croLq{q^\prime_0}{q^\prime_0}}{4} \int_{(\bbS )^2}
(\alpha(\go)-\alpha(\go'))^2 \dd\mu\otimes \mu\,  .
\end{equation}
But if  $P_{2,{q_0}}v=\beta q^\prime_0$ (recall that $v=u-P_{2,{q_0},\mu}u$), then since $P_{2,{q_0},\mu}u$ is colinear to
$q^\prime_0$,  for almost all $\go$, $\go'$
\begin{equation}
 \beta(\go)-\beta(\go')\, =\, \alpha(\go)-\alpha(\go')
\end{equation}
and since $v$ is orthogonal to $q^\prime_0$ (with respect to $\langle \cdot,\cdot\rangle_{2,q_0,\mu}$) we get
\begin{equation}
 \int_{\bbR} \beta(\go)\dd\mu\, =\, 0 \, .
\end{equation}
So \eqref{min Dirchlet} becomes
\begin{equation}
  \cE_\mu(u) \, \geq \,c_K \croLqmu{v-P_{2,{q_0}}v}{v-P_{2,{q_0}}v} + \frac{\croLq{q^\prime_0}{q^\prime_0}}{2} \int_{\bbS } \beta^2(\go) \dd\mu\,
.
\end{equation}
It is sufficient to compare this last minoration with the norm $\croLqmu{v}{v}$, and from Lemma \ref{lem:decomposition L2q} it comes
\begin{equation}
 \croLqmu{v}{v}\, =\,  \croLqmu{ v-P_{2,{q_0}}v}{v-P_{2,{q_0}}v} + \croLq{ q^\prime_0}{q^\prime_0} \int_{\bbS } \beta^2(\go) \dd\mu\, .
\end{equation}
This completes the proof of Proposition~\ref{prop:min dirichlet des}.
\qed

We now need two lemmas comparing the scalar products $\croLqmu{\cdot}{\cdot} $ and $\crosqmu{\cdot}{\cdot}  $. They correspond to
Lemmas 2.4 and 2.5 in \cite{cf:BGP}. Their proofs are very similar to the proofs
of the results corresponding results in \cite{cf:BGP}
(to which we refer also for the explicit values of the constants $C$ and $c$ appearing below)
and they use  in particular the rigged
Hilbert space representation of $H^{-1}_{{q_0},\mu}$ (see \cite[p.82]{MR697382}): namely, one can identify $H^{-1}_{{q_0},\mu}$ as
the dual space $V'$ of the space $V$ closure of $\cD(A)$ with respect to the norm $\Vert{u}\Vert_V:= \left( \int_{\bbR\times\bbS} v'(\gtta,
\go)^2
\dd\gtta\mu(\dd\go) \right)^\frac12$. The pivot space $H$ is the usual
$L^2(\gl\otimes\mu)$ (endowed with the Hilbert norm $\Vert u\Vert_{2,\mu}:=\left( \intSR u(\gtta,
\go)^2\dd\gtta\mu(\dd\go) \right)^\frac12$). In particular, one easily sees that the inclusion $V\subseteq H$ is dense.
Consequently, one can define $T:H\rightarrow V'$ by setting $Tu(v)= \intSR u(\gtta, \go) v(\gtta, \go)\dd\gtta \mu(\dd\go)$.
One can prove that $T$ continuously injects $H$ into $V'$ and that $T(H)$ is dense into $V'$ so that one can identify $u\in H$
with $Tu\in V'$. Then for $u\in H$,
\begin{equation}
 \Vert u\Vert_{V'} = \Vert Tu\Vert_{V'} = \sup_{v\in V} \frac{\int \cU v'}{\Vert v\Vert_{V}}= \sqrt{\int \frac{\cU^2}{q_0}}\, ,
\end{equation}
which enables us to identify $H^{-1}_{{q_0},\mu}$ with $V'$.

\medskip

We define the projection in
$H^{-1}_{{q_0},\mu}$:
\begin{equation}
P_{-1,{q_0},\mu}u\, =\, \frac{\crosqmu{u}{q^\prime_0}}{\crosq{q^\prime_0}{q^\prime_0}}q^\prime_0\, .
\end{equation}
\begin{lemma}
 \label{lem:comparaison norm projections}
 For every $K>1$ there exists a constant $C=C(K)>0$ such that
for $u\in L^2_\mu$ such that $\intS u =0$  for almost every $\go$
\begin{equation}
\begin{split}
\croLqmu{u-P_{2,{q_0},\mu}u}{u-P_{2,{q_0},\mu}u}\, &\geq \, e^{4Kr_0}C
\croLqmu{u-P_{-1,{q_0},\mu}u}{u-P_{-1,{q_0},\mu}u}\\ 
&\geq \, C \crosqmu{u-P_{-1,{q_0},\mu}u}{u-p_{-1,{q_0},\mu}u}\, .
\end{split}
\end{equation}
%with
%\begin{equation}
%C^2\, =\, \frac{1-I_0(2Kr_0)^{-2}}{2Kr_0^2\exp(8Kr_0)+(1-(I_0(2Kr_0))^{-2})\exp(2Kr_0)}\, .
%\end{equation}
\end{lemma}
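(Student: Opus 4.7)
The strategy is to reduce both inequalities to weighted Poincar\'e-type estimates applied pointwise in $\go$ and then integrated against $\mu$, so that the disorder plays no structural role. The quantitative content comes entirely from the explicit form $q_0(\gtta)=e^{2Kr_0\cos\gtta}/Z_0(2Kr_0)$, whose ratio $\max q_0/\min q_0$ equals $e^{4Kr_0}$; this is precisely the origin of the exponential factor in the statement.

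I start with the second inequality. Set $w := u-P_{-1,{q_0},\mu}u$; since both $u$ and $q_0'$ have zero $\gtta$-mean, so does $w$ for $\mu$-a.e.\ $\go$. Let $\mathcal W_0(\cdot,\go)$ denote the mean-zero primitive of $w(\cdot,\go)$ on $\bbS$; the classical Poincar\'e inequality on the circle gives $\int_{\bbS}\mathcal W_0^2\,\dd\gtta\leq \int_{\bbS}w^2\,\dd\gtta$. The primitive $\mathcal W$ appearing in $\Nsqmu{w}$ differs from $\mathcal W_0$ by an $\go$-dependent constant chosen so that $\int\mathcal W/q_0\,\dd\gtta=0$, and this choice minimizes $\int(\mathcal W_0-c)^2/q_0\,\dd\gtta$ over $c\in\bbR$, hence $\int \mathcal W^2/q_0\leq \int\mathcal W_0^2/q_0$. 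Combining these with $\min q_0 \leq q_0\leq \max q_0$ and integrating against $\mu$ yields $\Nsqmu{w}^2 \leq e^{4Kr_0}\NLqmu{w}^2$, which after multiplication by the constant $C$ in the statement is the second inequality.

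For the first inequality, the key observation is that both $P_{2,{q_0},\mu}$ and $P_{-1,{q_0},\mu}$ project onto the one-dimensional subspace spanned by $q_0'$, so if $v := u - P_{2,{q_0},\mu}u$ then $u - P_{-1,{q_0},\mu}u = v - P_{-1,{q_0},\mu}v$. By the triangle inequality in $\NLqmu{\cdot}$ it then suffices to control $\NLqmu{P_{-1,{q_0},\mu}v}$ by $\NLqmu{v}$. The definition of the projection combined with Cauchy--Schwarz in $H^{-1}_{{q_0},\mu}$ gives $\NLqmu{P_{-1,{q_0},\mu}v}\leq (\NLqmu{q_0'}/\Nsqmu{q_0'})\,\Nsqmu{v}$, and the Poincar\'e-type bound just established (now applied to $v$, which has zero $\gtta$-mean for $\mu$-a.e.\ $\go$) controls $\Nsqmu{v}$ by a multiple of $\NLqmu{v}$. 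Rearranging and squaring produces $\NLqmu{u-P_{-1,{q_0},\mu}u}^2\leq C'(K)\,\NLqmu{u-P_{2,{q_0},\mu}u}^2$, which is the first inequality after absorbing constants into a single $C=C(K)$.

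The argument is essentially routine once the rigged Hilbert space representation of $H^{-1}_{{q_0},\mu}$ is in place, and the disorder enters only through outer integration against $\mu$. The only point I expect to require any real care is the reconciliation of the two normalizations of primitives (zero mean versus zero $q_0$-weighted mean), handled by the one-line minimization remark above; indeed the paper itself signals that the proof parallels closely Lemma 2.4 of \cite{cf:BGP}.
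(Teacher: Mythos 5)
Your proof is correct, and it takes the same route the paper intends: the paper omits the argument and defers to Lemmas 2.4--2.5 of \cite{cf:BGP}, and your chain — the Poincar\'e inequality on the mean-zero primitive, the variational characterization of the $q_0$-weighted primitive to pass from $\cW_0$ to $\cW$, the bound $\max q_0/\min q_0 = e^{4Kr_0}$, the identity $u - P_{-1,q_0,\mu}u = v - P_{-1,q_0,\mu}v$ with $v = u - P_{2,q_0,\mu}u$, and the Cauchy--Schwarz control of the rank-one projection — is precisely the natural implementation of that reference, with the disorder entering only through outer integration against $\mu$ as you observe.
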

\begin{lemma}
\label{lem:comparaison norm}
For every $K>1$ there exists $c=c(K)>0$ such that for $u\in L^2_\mu$ such that $\intS u =0$  for almost every $\go$ and 
\begin{equation}
 \crosqmu{u}{u}\, \geq \, c \croLqmu{P_{2,{q_0},\mu}u}{P_{2,{q_0},\mu}u}\, .
\end{equation}
\end{lemma}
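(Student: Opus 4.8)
The plan is to reduce the inequality to a single scalar bound and then establish that bound by an integration by parts against the smooth logarithmic derivative of $q_0$. Since $q_0'$ is a fixed function not depending on $\go$, the definition of $P_{2,{q_0},\mu}$ gives
\begin{equation}
\croLqmu{P_{2,{q_0},\mu}u}{P_{2,{q_0},\mu}u}\, =\, \frac{\big|\croLqmu{u}{q_0'}\big|^2}{\croLq{q_0'}{q_0'}}\, ,
\end{equation}
so it suffices to produce a constant $C(K)>0$ with $\big|\croLqmu{u}{q_0'}\big|\le C(K)\,\crosqmu{u}{u}^{1/2}$. This is the disordered version of \cite[Lemma~2.5]{cf:BGP}; since $q_0'$ is $\go$-independent one could also reach it by applying the non-disordered bound for each $\go$ to $\croLq{u(\cdot,\go)}{q_0'}$ and integrating against $\mu$ (using Jensen's inequality for the square and $\croLqmu{u}{q_0'}=\intR\croLq{u(\cdot,\go)}{q_0'}\mu(\dd\go)$), but I would prefer to argue directly.

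For the direct argument I would use the rigged Hilbert space representation of $H^{-1}_{{q_0},\mu}$ recalled above, so that $\crosqmu{u}{u}=\intSR \cU(\gtta,\go)^2/q_0(\gtta)\,\dd\gtta\,\mu(\dd\go)$ with $\cU(\cdot,\go)$ the normalized primitive of $u(\cdot,\go)$, which is $2\pi$-periodic precisely because $\intS u(\cdot,\go)=0$. Writing $\croLqmu{u}{q_0'}=\intSR u(\gtta,\go)\,h(\gtta)\,\dd\gtta\,\mu(\dd\go)$ with $h:=q_0'/q_0$, and recalling that $q_0(\gtta)=e^{2Kr_0\cos\gtta}/Z_0(2Kr_0)$ so that $h=-2Kr_0\sin(\cdot)$ is smooth, an integration by parts on $\bbS$ (the boundary term vanishes since $\cU(\cdot,\go)h$ is periodic) gives $\croLqmu{u}{q_0'}=-\intSR \cU(\gtta,\go)\,h'(\gtta)\,\dd\gtta\,\mu(\dd\go)$. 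A Cauchy--Schwarz inequality in $L^2(\gl\otimes\mu)$ with weight $1/q_0$, together with the fact that $\mu$ is a probability measure, then yields
\begin{equation}
\big|\croLqmu{u}{q_0'}\big|\, \le\, \crosqmu{u}{u}^{1/2}\Big(\intS q_0(\gtta)\,h'(\gtta)^2\,\dd\gtta\Big)^{1/2}\, .
\end{equation}

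To conclude, with $h'=-2Kr_0\cos(\cdot)$ the constant $C_1(K):=\intS q_0(\gtta)h'(\gtta)^2\,\dd\gtta=4K^2r_0^2\intS q_0(\gtta)\cos^2(\gtta)\,\dd\gtta$ is finite (as $q_0$ is continuous on the compact circle) and $\croLq{q_0'}{q_0'}=\intS q_0(\gtta)h(\gtta)^2\,\dd\gtta=4K^2r_0^2\intS q_0(\gtta)\sin^2(\gtta)\,\dd\gtta$ is strictly positive because $r_0=r_0(K)>0$ for $K>1$; combining with the two previous displays the lemma follows with
\begin{equation}
c\, =\, c(K)\, =\, \frac{\croLq{q_0'}{q_0'}}{C_1(K)}\, =\, \frac{\intS q_0(\gtta)\sin^2(\gtta)\,\dd\gtta}{\intS q_0(\gtta)\cos^2(\gtta)\,\dd\gtta}\, .
\end{equation}
I do not foresee any genuine obstacle here: the only two delicate points are the periodicity of $\cU(\cdot,\go)$ used in the integration by parts --- which is exactly the zero-mean hypothesis on $u$ --- and the finiteness and strict positivity of the constants, both immediate from the explicit form of $q_0$ and from $r_0(K)>0$. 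The substantive spectral content sits in Proposition~\ref{prop:min dirichlet des} and Lemma~\ref{lem:comparaison norm projections}; the present statement is the easy complementary estimate.
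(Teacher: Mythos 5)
Your proof is correct and is essentially the argument the paper has in mind: the paper defers this lemma to \cite[Lemma~2.5]{cf:BGP}, whose proof is precisely the integration-by-parts plus weighted Cauchy--Schwarz estimate you carry out, transported verbatim to the disordered $H^{-1}_{q_0,\mu}$ setting via the rigged Hilbert space representation. The reduction to the scalar bound on $\croLqmu{u}{q_0'}$, the use of periodicity of $\cU(\cdot,\go)$ (which is exactly the zero-mean hypothesis $\intS u(\cdot,\go)=0$), and the positivity of the resulting constant for $K>1$ (since $r_0(K)>0$) are all as intended.
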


\medskip
\noindent
{\it Proof of Proposition~\ref{th:spectral gap A}.}
Of course Proposition \ref{prop:min dirichlet des} and Lemma \ref{lem:comparaison norm projections} imply directly the spectral
gap inequality for the Dirichlet form:
\begin{equation}
\label{low bound Dirichlet}
\cE(u) \, \geq \, c_KC  \crosqmu{u-P_{-1,{q_0},\mu}u}{u-P_{-1,{q_0},\mu}u}\qquad \text{for all}\, u\in H^{-1}_{{q_0},\mu}\, .
\end{equation}
We now prove the self-adjoint property of $A$. It is sufficient to prove that the range of $1-A$ is dense in $H^{-1}_{\mu}$ (see
\cite[p.113]{MR697382}). 
For $u, v \in D(A)$, we have
\begin{multline}
\label{expr scal prod 1-A}
 \crosqmu{v}{(1-A)u}\, =\, -\int_\bbR \intS v(\gtta,\go)\left(\int_0^\gtta \frac{\cU}{q_0} \right)\dd\gtta\dd\mu+\frac12 \int_\bbR\intS \frac{vu}{q_0}\dd\gtta\dd\mu \\
-\int_{\bbR}\int_{(\bbS )^2} v(\theta,\go)\tilde J*u(\theta,\go')\dd\theta \dd\mu \otimes\mu \, .
\end{multline}
The right side of this expression is still defined for $u,v\in L^2(\gl\otimes\mu)$ (recall that $\gl$ denotes the Lebesgue measure
on $\bbS $, and that we denote the usual scalar product on $L^2(\gl\otimes\mu)$ by $\Vert\cdot\Vert_{2,\mu}$) and there exists
$c>0$ such that 
\begin{equation}
\label{ineq:1-A norm vs L2}
 \crosqmu{v}{(1-A)u}\, \leq\, c \Vert u\Vert_{2,\mu}  \Vert v\Vert_{2,\mu}\, ,
\end{equation}
Furthermore from \eqref{low bound Dirichlet} and Lemma \ref{lem:comparaison norm} we have
\begin{equation}
\label{ineq:1-A norm vs L2 bis}
 \crosqmu{u}{(1-A)u}\, \geq\, \frac1c \Vert u\Vert_{2,\mu}^2\, .
\end{equation}
So the bilinear form $(u,v)\mapsto  \crosqmu{v}{(1-A)u}$ is continuous and coercive on $H^{-1}_{\mu}\times H^{-1}_{\mu}$. If
$f\in H^{-1}_{\mu}$, the linear form $v\mapsto \crosqmu{v}{f} $ is continuous on $L^2(\gl\otimes\mu)$, therefore from Lax-Milgram
Theorem we get that there exists a unique $u\in L^2(\gl\otimes\mu)$ such that for all $v\in L^2(\gl\otimes\mu)$
\begin{equation}
 \crosqmu{v}{(1-A)u} \, =\, \crosqmu{v}{f} \, .
\end{equation}
Since
\begin{equation}
 \crosqmu{v}{f} \, =\, -\int_\bbR \intS v(\gtta,\go)\left(\int_0^\gtta \frac{\cF}{q_0} \right)\dd\gtta\dd\mu\, ,
\end{equation}
from \eqref{expr scal prod 1-A} we obtain that for almost $\gtta$ and $\go$
\begin{equation}
 -\int_0^\gtta \frac{\cU(\gtta',\go)}{q_0(\gtta')}\dd\gtta' +\frac{u(\gtta,\go)}{2q_0(\gtta)} -\int_\bbR \left(\tilde
J*u\right)(\gtta,\go)\dd\mu\, =\, -\int_0^\gtta\frac{\cF(\gtta',\go)}{q_0(\gtta')}\dd\gtta'\, .
\end{equation}
So it is clear that if $f$ is continuous with respect to $\gtta$, then $u$ has a version $C^2$ with respect to $\gtta$. Thus $u\in
D(A)$ and applying $\partial_\gtta(q_0(\gtta)\partial_\gtta \cdot)$ to the both sides of this last expression, we get $(1-A)u=f$.
Since this kind of functions $f$ is dense in $H^{-1}_{\mu}$, we can conclude that the range of $1-A$ is dense, and that $A$ is
essentially self-adjoint.
This completes the proof of Proposition~\ref{th:spectral gap A}.
\qed

\section{Perturbation arguments (completion of the proof of Theorem~\ref{th:expansion speed})}

\label{sec:pa}
In this section we complete the proof of
Theorem \ref{th:expansion speed}. 
Essentially, this section is devoted to computing
 the expansion of the speed $c_\mu(\gd)$ in. We first recall a lemma that gives a useful parametrization
in the neighborhood of $M_0$. The proof of this lemma is given in \cite{cf:SellYou} , and it is used in the proof of Theorem \ref{th:M} (see \cite{cf:GPPP,cf:SellYou}).

\begin{lemma}
\label{lem:parametrisation}
There exists a $\gs>0$ such that for all $p$ in the neighborhood 
\begin{equation}
\label{eq:Nsigma}
N_\gs\, :=\, 
\cup_{q\in M_0} B_{L^2(\gl\otimes\mu)}(q,\sigma)\, ,
\end{equation} 
 of $M_0$ there is one
and only one $q=v(p) \in M_0$ such that $\crosqmu{p-q}{\partial_\gtta q}=0$. Furthermore the mapping $p \mapsto v(p)$ is in
$C^\infty(X^1_\mu,X^1_\mu)$, and
\begin{equation}
 Dv(p)\, =\, P^o_{v(p)}\, .
\end{equation}
\end{lemma}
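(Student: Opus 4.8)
The statement is the tubular-neighbourhood theorem for the embedded compact curve $M_0$, and the plan is to deduce it from the implicit function theorem applied to a single scalar equation. First I would parametrise $M_0$ by $\bbS\ni\psi\mapsto q_\psi$, which is a $C^\infty$ embedding since $q_0$ is smooth and strictly positive (see \eqref{eq:defstationarysolution nodisorder}), and introduce, for $(\psi,p)\in\bbS\times X^1_\mu$,
\[
\Phi(\psi,p)\,:=\,\crosqmu{p-q_\psi}{\partial_\gtta q_\psi}\,.
\]
This is well posed: $p-q_\psi$ has zero $\gtta$-average, hence lies in $X^0_\mu$, and pairing it in $H^{-1}_{{q_0},\mu}$ against the smooth zero-average function $\partial_\gtta q_\psi$ is a bounded linear functional of $p$; thus $\Phi$ is affine (a fortiori $C^\infty$) in $p$, and since $\psi\mapsto(q_\psi,\partial_\gtta q_\psi)$ is $C^\infty$ into $C^\infty(\bbS)^2$, $\Phi$ is jointly $C^\infty$. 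Using $\partial_\psi q_\psi=-\partial_\gtta q_\psi$ and differentiating, $\partial_\psi\Phi(\psi,p)=\crosqmu{\partial_\gtta q_\psi}{\partial_\gtta q_\psi}+\crosqmu{p-q_\psi}{\partial_\psi\partial_\gtta q_\psi}$, so on the diagonal the second term drops and
\[
\partial_\psi\Phi(\psi,q_\psi)\,=\,\crosqmu{\partial_\gtta q_\psi}{\partial_\gtta q_\psi}\,>\,0\,,
\]
because $\partial_\gtta q_\psi\not\equiv0$.

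Next I would apply the implicit function theorem at each $(\psi_0,q_{\psi_0})$: there are neighbourhoods $I_{\psi_0}\ni\psi_0$ in $\bbS$ and $B_{\psi_0}\ni q_{\psi_0}$ in $X^1_\mu$ and a unique $C^\infty$ map $p\mapsto\bar\psi(p)$ on $B_{\psi_0}$ with $\Phi(\bar\psi(p),p)=0$; shrinking so that $\partial_\psi\Phi>0$ on $I_{\psi_0}\times B_{\psi_0}$, this $\bar\psi(p)$ is the \emph{only} zero of $\Phi(\cdot,p)$ in $I_{\psi_0}$. Compactness of $M_0$ then yields a single $\gs>0$ with $N_\gs\subset\bigcup_{\psi_0}B_{\psi_0}$, the local solutions patch to a $C^\infty$ map $\bar\psi$ on $N_\gs$, and $v(p):=q_{\bar\psi(p)}\in M_0$ is the required $C^\infty$ map. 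The one genuinely delicate point — and the main obstacle — is \emph{global} uniqueness of the zero of $\Phi(\cdot,p)$ in the whole tube $N_\gs$ rather than within a single $I_{\psi_0}$. I would handle it by observing that $\Phi(\psi,p)=-\tfrac12\,\partial_\psi\crosqmu{p-q_\psi}{p-q_\psi}$, so zeros of $\Phi(\cdot,p)$ are exactly the critical points along the fibre of the squared distance to $M_0$; for $\gs$ small this fibre function has an interior minimum near which it is strictly convex (its second $\psi$-derivative equals $2\,\partial_\psi\Phi$, which stays bounded below near $M_0$ by continuity) and, because $M_0$ is an \emph{embedded} — not merely immersed — compact submanifold, no other critical point. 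This is precisely the content of the tubular-neighbourhood construction in \cite{cf:SellYou}, which I would invoke rather than reprove.

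Finally, for the derivative formula I would differentiate $\Phi(\bar\psi(p),p)\equiv0$: for $h\in X^0_\mu$, $\partial_\psi\Phi\cdot D\bar\psi(p)h+\crosqmu{h}{\partial_\gtta q_{\bar\psi(p)}}=0$, whence, using once more $\partial_\psi q_\psi=-\partial_\gtta q_\psi$,
\[
Dv(p)h\,=\,\partial_\psi q_{\bar\psi(p)}\,D\bar\psi(p)h\,=\,\frac{\crosqmu{h}{\partial_\gtta q_{\bar\psi(p)}}}{\partial_\psi\Phi(\bar\psi(p),p)}\,\partial_\gtta q_{\bar\psi(p)}\,.
\]
On $M_0$ one has $p=v(p)$, so $\partial_\psi\Phi(\bar\psi(p),p)=\crosqmu{\partial_\gtta q_{\bar\psi(p)}}{\partial_\gtta q_{\bar\psi(p)}}$ and the right-hand side becomes exactly the projection $P^o_{v(p)}h$ onto $T_{v(p)}M_0=\Span(\partial_\gtta q_{\bar\psi(p)})$, cf.\ \eqref{def P^o} (the distinction between the $q_0$- and the $q_\psi$-weighted $H^{-1}$ scalar products is immaterial here, the norms being equivalent). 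In summary, the routine part is the local IFT and the computation of $Dv$; the part requiring care is upgrading the local statements to a single well-defined retraction on a uniform tube, for which the embeddedness and compactness of $M_0$ are used and for which \cite{cf:SellYou} is the reference.
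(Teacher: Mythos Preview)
The paper does not prove this lemma at all: it simply records the statement and defers to \cite{cf:SellYou} (and \cite{cf:GPPP}). Your implicit-function-theorem/tubular-neighbourhood argument is exactly the standard route and is what one finds in such references, so in that sense you are supplying more than the paper does.

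One point to tighten. Your global-uniqueness step asserts that the fibre function $\psi\mapsto\Nsqmu{p-q_\psi}^2$ has ``no other critical point''. On a closed curve this is never literally true: the function is smooth on $\bbS$ and therefore has both a minimum and a maximum, and the maximizer also satisfies $\crosqmu{p-q_\psi}{\partial_\gtta q_\psi}=0$. What the tubular-neighbourhood theorem actually gives (and what is used) is uniqueness of the \emph{nearest} point, i.e.\ of the critical point in the component of the tube over which $p$ sits; the lemma should be read that way, and your eventual deferral to \cite{cf:SellYou} is the honest way to handle it. Relatedly, there is a harmless sign slip: $\Phi(\psi,p)=+\tfrac12\,\partial_\psi\Nsqmu{p-q_\psi}^2$, not $-\tfrac12$.

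Two further remarks you already essentially anticipate. First, your computation yields $Dv(p)=P^o_{v(p)}$ only for $p\in M_0$ (since off $M_0$ the extra term $\crosqmu{p-q_\psi}{\partial_\psi\partial_\gtta q_\psi}$ survives in $\partial_\psi\Phi$); this is all that is needed downstream, and the statement in the lemma should be read accordingly. Second, the orthogonality in the lemma is taken in $\crosqmu{\cdot}{\cdot}$ (weight $q_0$), while $P^o_{q_\psi}$ in \eqref{def P^o} is written with weight $q_\psi$; as you note, the two induced projections onto the one-dimensional tangent line agree up to this harmless choice of equivalent weight, and the paper is simply not fully consistent in its notation here.
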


\medskip

\noindent
{\it Proof of Theorem \ref{th:expansion speed}.}
The existence and stability
of a {\sl rotating} solution $\tilde q_\gd(\gtta-\psi-c_\mu(\gd)t)$ of \eqref{FKP kuramoto disorder delta} ($\psi$ is arbitrary)
 has been established in Section~\ref{sec:hyperbolic structure} for  $\gd\leq \gd_0$, 
see Theorem~\ref{th:M} and the two remarks that follow it.
We are left with proving Theorem \ref{th:expansion speed}(2).

Thanks to the invariance by rotation, we can define $\tilde q_\gd$ such that $v(\tilde q_\gd)=q_0 $.
Now if we denote 
\begin{equation}
 n_\gd\, :=\, \tilde q_\gd-v\left(\tilde q_\gd\right)\, ,
\end{equation}
then $n_\gd$ verifies $n_\gd=\phi_\gd(q_0)$ and (see Lemma \ref{lem:parametrisation})
\begin{equation}
\label{eq:orth n q}
 \crosqmu{n_\gd}{q^\prime_0}=0
\end{equation}
\begin{equation}
\label{eq:orth An q}
 \crosqmu{An_\gd}{q^\prime_0}=0\, .
\end{equation}
Moreover the estimates we have on the mapping $\phi_\gd$ in Theorem \ref{th:M} give
\begin{equation}
\label{ineq:bound n}
 \Vert n_\gd \Vert_{2,\mu}\, \leq\, C\gd\, ,
\end{equation}
\begin{equation}
\label{ineq:bound dn}
 \Vert \partial_\gtta n_\gd \Vert_{2,\mu}\, \leq C \gd\, .
\end{equation}
Taking the derivative with respect to $t$, at time $t=0$, we get (we recall the notation $p^{(\psi)}_t(\gtta,\go)=\tilde q_\gd(\gtta-\psi-c_\mu(\gd)t)$) :
\begin{equation}
\label{eq:equality derivate}
 -c_\mu(\gd)\left(q^\prime_0+\partial_\gtta n_\gd\right)\, =\, \partial_t p^{(0)}_0\, .
\end{equation}
So \eqref{FKP kuramoto disorder delta} at time $t=0$ becomes (recall that $q_0$ is a stationary solution of \eqref{FKP kuramoto disorder without drift}) :
\begin{equation}
\label{eq:FKP delta projected}
 -c_\mu(\gd)\left( q^\prime_0+\partial_\gtta n_\gd\right)\, =\, A n_\gd -\partial_\gtta\left[n_\gd\langle J*n_\gd\rangle_\mu\right]-\gd\go q^\prime_0-\gd\go\partial_\gtta n_\gd\, .
\end{equation}
From \eqref{ineq:bound n} we deduce the bound
\begin{equation}
\label{ineq:bound nJn}
\left\Vert \partial_\gtta\left[n_\gd\langle J*n_\gd\rangle_\mu\right]\right\Vert_{-1,\mu}\, \leq \, \Vert J\Vert_2 C^2
\gd^2\, ,
\end{equation}
so by taking the $H^{-1}_{{q_0},\mu}$ scalar product of $q^\prime$ in \eqref{eq:FKP delta projected}, using \eqref{eq:orth An q},
\eqref{ineq:bound n}, \eqref{ineq:bound dn} and the fact that $\int_\bbR w\dd\mu=0$, we get that $c_\mu(\gd)$ is of order $\gd^2$. This implies,
using the same arguments, that
\begin{equation}
 \left\Vert A n_\gd-\gd\go q^\prime_0 \right\Vert_{-1,\mu}\, =\, O(\gd^2)\, .
\end{equation}
So 
\begin{equation}
\label{comparaison n0 n1}
 \Vert A(n_\gd-\gd n^{(1)}) \Vert_{-1,\mu}\, =\, O(\gd^2)\, ,
\end{equation}
and since $\Vert (1-A)^{(1/2)}u\Vert_{-1,\mu} \sim \Vert u \Vert_{2,\mu}$ (see \eqref{ineq:1-A norm vs L2} and \eqref{ineq:1-A norm vs L2 bis}), we have in particular
\begin{equation}
 \Vert n_\gd-\gd n^{(1)} \Vert_{2,\mu}\, =\, O(\gd^2)\, .
\end{equation}
It allows us to make a second order expansion for $c_\mu(\gd)$ : taking again the $H^{-1}_{{q_0},\mu}$ scalar product of $q^\prime_0$ in
\eqref{eq:FKP delta projected}, using the same bounds as for the first order expansion and \eqref{comparaison n0 n1}, we get :
\begin{equation}
\label{eq:second order expansion psi prime}
 c_\mu(\gd)\, =\, \gd^2 \frac{\crosqmu{\go\partial_\gtta n^{(1)}+n^{(1)}\langle J*n^{(1)}\rangle_\mu}{q^\prime_0}}{\crosqmu{q^\prime_0}{q^\prime_0}} +O(\gd^3)\, .
\end{equation}
Indeed, from \eqref{comparaison n0 n1}, $\Vert \go\partial_\gtta(n_\gd-\gd n^{(1)}) \Vert_{-1,\mu}$, $\Vert
\partial_\gtta[(n_\gd-\gd n^{(1)})\langle J*n^{(1)}\rangle_\mu]\Vert_{-1,\mu}$, $ \Vert \partial_\gtta[n^{(1)}\langle J*(n_\gd-\gd
n^{(1)})\rangle_\mu]\Vert_{-1,\mu}$ are of order $\gd^2$ and $ \Vert \partial_\gtta[(n_\gd-\gd n^{(1)})\langle J*(n_\gd-\gd
n^{(1)})\rangle_\mu]\Vert_{-1,\mu}$ of order $\gd^4$.
Since $c_\mu(\gd)$ is odd with respect to $\gd$, the second order term in \eqref{eq:second order expansion psi prime} is equal
to $0$. It is possible to get this fact directly : we remark that $n^{(1)}$ satisfies :
\begin{equation}
\label{eq:kdge3}
 L_{q_0}\int_\bbR n^{(1)}\dd\mu \, =\, \int_\bbR A n^{(1)} \dd\mu\, =\, \left(\int_\bbR \go\dd\mu\right)q^\prime_0\, =\, 0\, ,
\end{equation}
\begin{equation}
 \crosq{\int_\bbR n^{(1)}\dd\mu}{q^\prime_0}\, =\, \crosqmu{n^{(1)}}{q^\prime_0}\, =\, 0\, .
\end{equation}
So since $L_{q_0}$ is bijective on the orthogonal of $q^\prime_0$ in $H^{-1}_{1/q}$ (see \cite{cf:BGP}), we have $\int_\bbR n^{(1)} \dd\mu=0$ and
$\langle J*n^{(1)}\rangle\mu=0$. On the other hand, since the operator $A$ conserves the parity with respect to $\gtta$, $n^{(1)}$ is odd with respect to $\gtta$ and thus
\begin{equation}
 \crosqmu{\go\partial_\gtta n^{(1)}}{q^\prime_0}\, =\, \int_\bbS \int_\bbR \frac{\go n^{(1)}}{q_0}\left(q_0-\frac{1}{2\pi I_0^2(2Kr_0)}\right)\dd\gtta\dd\mu\, =\, 0\, .
\end{equation}
Now back to \eqref{eq:FKP delta projected}: since $c_\mu(\gd)$ is of order $\gd^3$ and using $\int_\bbS n^{(1)}\dd\mu=0$, we get
\begin{equation}
 \left\Vert A \left(n_\gd-\gd n^{(1)}-\gd^2\go\partial_\gtta n^{(1)} \right)\right\Vert_{-1,\mu}\, =\, O(\gd^3)\, ,
\end{equation}
and thus
\begin{equation}
\label{eq:n_2}
 \Vert n_\gd-\gd n^{(1)}-\gd^2 n^{(2)}\Vert_{2,\mu} \, =\, O(\gd^3)\, .
\end{equation}
This allows us this time to do a third order expansion in \eqref{eq:FKP delta projected} :
\begin{equation}
\label{eq:rt5p}
 c_\mu(\gd)\, =\, \gd^3\frac{\crosqmu{\go\partial_\gtta n^{(2)}}{q^\prime_0}}{\crosqmu{q^\prime_0}{q^\prime_0}} + O(\gd^4)\, .
\end{equation}
This procedure may be repeated recursively at any order: we do not go through the 
details again, but we do report the result below (Remark~\ref{rem:pa}) and  
 we  
point out  that
the $O(\gd^4)$ \eqref{eq:rt5p} turns out to be  $O(\gd^5)$, 
in agreement with the fact that  $c_\mu(\gd)$ is odd in $\gd$.
\qed

\medskip

\begin{rem}
\label{rem:pa}
\rm
As anticipated above, one can get arbitrarily many terms
in the formal series $c_\mu(\gd)= \sum_{i=1,2, \ldots} c_{2i+1} \gd^{2i+1}$
and the remainder, when the series is stopped at $i=n$, is $O(\gd^{2i+3})$.
In fact, by arguing like above,  we have
\begin{equation}
 c_5\, =\, \frac{\left\langle \partial_\gtta [n^{(2)}\langle J*n^{(3)}\rangle_\mu]+ \partial_\gtta [n^{(3)}\langle J*n^{(2)}\rangle_\mu]+ w\partial_\gtta n^{(4)} ,q'_0\right\rangle_{-1,q_0,\mu}}{\langle q'_0,q'_0 \rangle_{-1,q_0}}\, ,
\end{equation}
where
\begin{equation}
 An^{(3)}\, =\,  \partial_\gtta [n^{(1)}\langle J*n^{(2)}\rangle_\mu]+ w\partial_\gtta n^{(2)}-\frac{\left\langle  w\partial_\gtta n^{(2)} ,q'_0\right\rangle_{-1,q_0,\mu}}{\langle q'_0,q'_0 \rangle_{-1,q_0}}q'_0 \, , 
\end{equation}
and
\begin{equation}
 An^{(4)}\, =\, \partial_\gtta [n^{(2)}\langle J*n^{(2)}\rangle_\mu]+ \partial_\gtta [n^{(1)}\langle J*n^{(3)}\rangle_\mu]+ w\partial_\gtta n^{(3)}\, .
\end{equation}
Actually, 
by induction we obtain 
\begin{equation}
 c_{2i+1}\, =\, \frac{\left\langle \sum_{k+l=2i+1,k>0,l>0}\partial_\gtta [n^{(l)}\langle J*n^{(k)}\rangle_\mu]+  w\partial_\gtta n^{(2i)} ,q'_0\right\rangle_{-1,q_0,\mu}}{\langle q'_0,q'_0 \rangle_{-1,q_0}}\, ,
\end{equation}
and
\begin{equation}
 n^{(2i)}\, =\, \sum_{k+l=2i,k>0,l>0}\partial_\gtta [n^{(l)}\langle J*n^{(k)}\rangle_\mu] + w\partial_\gtta n^{(2i-1)}\, ,
\end{equation}
\begin{equation}
 n^{(2i+1)}\, =\, \sum_{k+l=2i+1,k>0,l>0}\partial_\gtta [n^{(l)}\langle J*n^{(k)}\rangle_\mu] + w\partial_\gtta n^{(2i)}- c_{2i+1}q'_0\, .
\end{equation}
Since this procedure yields also $n^{(j)}$ for arbitrary $j$, one
can generalizes also \eqref{eq:n_2} and, hence, \eqref{eq:qgddevel}. 
\end{rem}

\section{Active rotators}
\label{sec:AR}

In this section we  deal with the equation \eqref{eq:AR} and we do it in a rather informal  way, because on one hand a formal statement would be very close to
Theorem~\ref{th:expansion speed} and, on the the other hand, the large scale behavior 
of disordered active rotators is qualitatively and quantitatively close to the non disordered case, treated in
 \cite{cf:GPPP}, in a way that we explain below. 

First of all, from a technical viewpoint the main difference between
\eqref{eq:AR} and \eqref{FKP kuramoto disorder} is that \eqref{eq:AR}
  is (in general) not rotation invariant, so the manifold $M_\gd=\{q_\psi+\phi(q_\psi)\}$ we get after perturbation is not necessarily a circle. Unlike Theorem \ref{th:expansion speed}, the motion on $M_\gd$ is not uniform, and we describe the behaviour on $M_\gd$ by the phase derivate $\dot{\psi}$. We follow the same procedure as in the previous section : if $p^\gd_t$ is a solution \eqref{eq:AR} belonging to $M_\gd$, we define (see Lemma \ref{lem:parametrisation})
\begin{equation}
 q_{\psi^\gd_t}\, =\, v(p^\gd_t)\ ,  \ \ \text{ and } \ \ \
 n^\gd_t\,=\,  p^\gd_t- v(p^\gd_t)\, .
\end{equation}
In this context, \eqref{eq:FKP delta projected} becomes
\begin{equation}
\label{eq:FKP delta projected AR}
 -\dot{\psi}^\gd_t q'_{\psi^\gd_t} + \partial_t n^\gd_t\, =\, A^{\psi^\gd_t} n^\gd_t -\partial_\theta [n^\gd_t \langle J*n^\gd_t \rangle_\mu ]-\gd U q'_\psi  -\gd U\partial_\theta n^\gd_t \, ,
\end{equation}
where $A^\psi$ is the rotation of the operator $A$
\begin{equation}
 A^\psi u(\gtta,\go) \, :=\, \frac12 \Delta u(\gtta,\go)-\partial_\gtta\Big(q_0(\gtta-\psi)\langle J*u\rangle_\mu(\gtta)+u(\gtta,\go)J*q_0(\gtta-\psi)\Big)\, .
\end{equation}
Note that we can reformulate the second term of the left hand side in \eqref{eq:FKP delta projected AR}:
\begin{equation}
 \partial_t n^\gd_t\, =\, \dot{\psi}^\gd_t\partial_\psi \phi(q_\psi) |_{\psi=\psi^\gd_t}\, .
\end{equation}
So, as in the previous section, using the estimates on the mapping $\phi$ given in Theorem \ref{th:M}, we get the bounds
\begin{equation}
   \Vert n^\gd_t \Vert_{2,\mu}\, \leq\, C\gd \, , \ \
  \Vert \partial_t n^\gd_t \Vert_{2,\mu}\, \leq\, C\gd  |\dot{\psi}^\gd_t|
\ \text{ and } \ 
   \Vert \partial_\theta [n^\gd_t \langle J*n^\gd_t \rangle_\mu ] \Vert_{2,\mu} \, \leq\, \Vert J\Vert_2 C^2\gd \, ,
\end{equation}
and we deduce the first order expansion 
\begin{equation}
\label{eq:AReq}
 {\dot{\psi}}^\gd_t\, 
 =\, \gd\frac{ \langle (U q_{\psi^\gd_t})', q'_{\psi^\gd_t} \rangle_{-1,q_{\psi^\gd_t},\mu}}{ \langle q'_0,q'_0 \rangle_{-1,q_0}}+O(\gd^2)\, .
\end{equation}
Since $\dot{\psi}$ is odd in $\gd$ and the expansion can be pushed further in $\gd$, this $O(\gd^2)$ is in reality a $O(\gd^3)$ and one can actually improve this result both in the direction
of obtaining a regularity estimate on the $O(\gd^2)$ rest in \eqref{eq:AReq}
(like in \cite[Th.~2.3]{cf:GPPP}) and of going to higher orders (like in Remark~\ref{rem:pa}). 

However the evolution for small $\gd$ is dominated by the leading order and from
\eqref{eq:AReq} we can directly read that, to first order, the effect of the disorder
is rather simple: in fact
\begin{equation}
 \langle (U q_{\psi})', q'_{\psi} \rangle_{-1,q_{\psi},\mu}\, =\, 
 \int_\bbR \int_\bbS U(\theta, \go) q_{\psi} (\theta)\left( q_{\psi}(\theta)- c\right) \dd
  \theta \mu (\dd \go)\, ,
\end{equation}
where $c$ is such that $\int_\bbS (q_\psi -c)=0$, that is $1/c={2\pi (I_0(2Kr_0))^2}$
(recall \eqref{eq:defstationarysolution nodisorder}-\eqref{def bessel}: this computation
is analogous to \eqref{eq:kdge3}).
Since the integrand depends on $\go$ only via $U$, this integration can be performed first 
and the system behaves to leading order in $\gd$ as the non-disordered model
with active rotator dynamics led by the deterministic force
$\int_\bbR U(\cdot, \go) \mu(\dd \go)$. The rich phenomenology connected 
to these models is worked out in \cite[Sec.~3]{cf:GPPP}.

\section{Symmetric case: stability of the stationary solutions}
\label{sec:sym}

\subsection{On the  non-trivial stationary solutions (proof of Lemma \ref{th: Psimu concave})}
We start by observing that
in the case with no disorder  the strict concavity of the fixed-point function $\Psi_0$ has been proven in \cite[Lemma 4, p.315]{cf:Pearce}, in
the apparently different context of classical XY-spin model (for a detailed discussion on the link with these models see \cite{cf:BGP}). We are
going to obtain the concavity of $\Psi^\mu_\gd$ for small $\gd$ via a perturbation argument, by relying on the result  in \cite{cf:Pearce}.

%%?%% 
% 
% The proof of concavity of $\Psi_0$  in \cite{cf:Pearce} starts from  the 
% straightforward identity
% \begin{equation}
% \Psi_0(x)=  \frac{\intS \cos(\gtta) e^{x\cos(\gtta)}\dd\gtta}{\intS e^{x\cos(\gtta)}\dd\gtta} = \frac{\int_{-1}^{1}
% ye^{xy}\nu(\dd y)}{\int_{-1}^{1} e^{xy} \nu(\dd y)}:= m(x),
% \end{equation}
% for the probability measure on $[-1, 1]$, $\nu(\dd y)=\frac{\dd
% y}{\sqrt{1-y^2}}$. Hence the concavity of $\Psi_0$ can be seen as a consequence of \cite[Lemma 4]{cf:Pearce} in the particular
% case of $f(y)=\frac{1}{\sqrt{1-y^2}}$:
% \begin{lemma}\cite[Lemma 4]{cf:Pearce}
% \label{lem:mk}
% Let $\nu$ be an even probability measure with support on $[-1, 1]$, which is absolutely continuous w.r.t. Lebesgue measure, i.e.
% $\nu(\dd y) = f(y) \dd y$, with $f$ nondecreasing on $[0, 1]$. Then,
% \begin{equation}
% \label{eq:mk}
% x\geq0\mapsto m(x):=\frac{\int{y e^{xy}\nu(\dd y)}}{\int{e^{xy}\nu(\dd y)}},
% \end{equation}
% is a strictly concave function. 
% \end{lemma}

%%?%% 

Since $\Psi^\mu_\gd$ is a smooth perturbation of $\Psi_0$, one expects that the strict concavity of $\Psi_0$ will be preserved
to $\Psi^\mu_\gd$ for small $\gd>0$, namely $\sup_{x} (\Psi_\gd^\mu)''(x)<0$. Nevertheless, an easy calculation shows that
$\Psi''_0(0)=0$; in that sense one has to treat the concavity in a neighborhood of $0$ as a special case. 

\medskip

\noindent
In what follows, we suppose that the coupling strength $K$ is bounded above and below by fixed constants $K_{\min}$ and $K_{\max}$:
\begin{equation}0\, <\, K_{\min} \, \leq\,  K\, \leq\, K_{\max}\, <\, \infty\, .\end{equation}

We first prove the statement on the concavity in a neighborhood of $0$: there exist $\eta_0>0$, $\gd>0$ such
that for all $K\in[K_{\min}, K_{\max}]$, for all $\mu$ such that $\Supp(\mu)\subseteq[-1,1]$, $\Psi_\gd^\mu$ is strictly concave on $[0,
\eta_0]$.
\medskip

Indeed, one easily shows (using that the function $x\mapsto \Psi_\mu^\gd(x)$ is odd) that we have the following Taylor's expansion:
\begin{equation}(\Psi_\gd^\mu)''(x) = -6 D^\gd(\mu)K^3 x + \epsilon(x)\, ,\end{equation}where $\epsilon(x)=o(x)$ as $x\rightarrow 0$ and where
for fixed $\mu$, we write \begin{equation}D^\gd(\mu) := \int_{\R{}}{h(\gd\go)\mu(\dd\go)},\end{equation}where \begin{equation}h(\go)
:= \frac{1}{2(1+\go^{2})}-\frac{8\go^{2}}{(1+4\go^{2})^2}.\end{equation}
Note that the $o(x)$ only depends on $K_{\max}$ (in particular it can be chosen independently of $\mu$).
A closer look at the function $h$ shows that there exists $\gd>0$ such that for all $\mu$ with $\Supp(\mu)\subseteq[-1, 1]$, $D^\gd(\mu)>\frac
14$.
% A long calculation shows that there exists a constant $C(K_{\max})$ only depending on $K_{\max}$ such that
% \begin{equation}\sup_{\mu, \Supp(\mu) \subseteq [-1, 1]}\sup_{t\in[0,2K_{\max}]} |(\Psi^\gd_{\mu})^{(4)}(t)|\, \leq\,  C(K_{\max})\, .\end{equation}
% A long calculation shows that the choice of $C(K, \gd)= c_1 K^4 e^{c_2 K}$, for appropriate positive numerical constants
% $c_1, c_2$ is sufficient.
% Consequently,
% \begin{equation}\frac{1}{x}\left|\int_{0}^{x}{(x-t)\Psi^{(4)}_{\mu}(t)\dd t}\right|\, \leq\,  \frac{C(K_{\max})}{2} x\, .\end{equation}
If we choose $\eta_{0}>0$ such that $\frac{1}{\eta_0}\sup\limits_{0\leq x< \eta_0}|\epsilon(x)|< \frac 32 K_{\min}^3$ then
$(\Psi_\gd^\mu)''(x)<0$ for all $0<x<\eta_0$, which is the desired result.

%\subsubsection{Concavity away from $0$}
\medskip

We are now left with proving concavity away from $0$: namely, we prove that for all $\eta>0$, all $K_{\max}$, there exists $\gd_0>0$ such that
for all $K\leq K_{\max}$, for all $0<\gd<\gd_0$, for any measure $\mu$ such that $\Supp(\mu)\subseteq[-1, 1]$, $\Psi^\mu_\gd$ is strictly concave
on $[\eta, 2K_{\max}]$.
\medskip

Indeed, using the strict concavity of $\Psi_0$ proved in \cite{cf:Pearce}, there exists a constant $\alpha>0$ such that for all
$x\in[\eta, 2K_{\max}]$, $\Psi_{0}''(x)<-\alpha<0$. But then, it easy to see that \begin{equation}\sup_{0<\gd<\gd_0}\sup_{\mu,\
\Supp(\mu)\subseteq [-1,
1]}\sup_{x\in[0, 2K_{\max}]} \left|(\Psi^\mu_\gd)''(x) - \Psi_{0}''(x)\right|
\stackrel{\gd_0\searrow 0}{\rightarrow}0.\end{equation}If one chooses
$\gd_0$ such that the latter quantity is smaller than or equal to $\frac{\alpha}{2}$, the result follows. The proof of Lemma \ref{th: Psimu concave} is therefore complete.
\qed

\subsection{On the linear stability of non-trivial stationary solutions}
We now prove Theorem \ref{th:spectral prop L disorder} along with a number of explicit estimates.

\begin{rem}\rm
 Note that, since the whole operator $L^\go_{{q}}$ is no longer self-adjoint nor symmetric, its spectrum need not be real. In that
extent, one has to deal in this section with the complexified versions of the scalar products defined in Section \ref{sec:mainresults},
\eqref{def scalar product sobolev with wage and mu} and in Section \ref{sec:hyperbolic structure}, \eqref{def scalar product L2 with wage and
mu}. Thus, we will assume for the rest of this section that we work with complex versions of these scalar products. The results
concerning the operator $A$ are obviously still valid, since $A$ is symmetric and real.

We will also use the following standard notations: for an operator $F$, we will denote by $\rho(F)$ the set of all complex numbers $\lambda$ for
which $\lambda-F$ is invertible, and  by $R(\lambda, F):= \left( \lambda - F \right)^{-1}$, $\lambda\in\rho(F)$ the resolvent of $F$.
The spectrum of $F$ will be denoted as $\sigma(F)$.
\end{rem}

\subsubsection{Decomposition of $L^\go_{{q}}$}
In what follows, $K>1$ and $r_0= \Psi_{0}(2Kr_0)>0$ are fixed.

In order to study the spectral properties of the operator $L^\go_{{q}}$ for general distribution of disorder, we decompose
$L^\go_{{q}}$ in
\eqref{eq:defLqmu} into the sum of the self-adjoint operator $A$ defined in \eqref{def A} and a perturbation $B$ which
will be considered to be small w.r.t.
$A$, namely:
% 
% \begin{equation}
% \label{eq:defAmu} 
% Au(\gtta, \go) := {\frac 12} \partial_{\gtta^2} u(\gtta, \go) - \partial_{\gtta} \left(
% u(\gtta, \go) (J \ast q_{0})(\gtta) + q_0(\gtta) \langle J\ast u\rangle(\gtta)
% \right),
% \end{equation}
% and,
\begin{equation}
 \label{eq:defBmu}
Bu(\gtta, \go)\, :=\,  - \partial_{\gtta} \left( u(\gtta, \go) \langle J \ast \gep(q) \rangle_\mu + \gep(q)(\gtta, \go, \gd)
\langle J \ast
u\rangle_\mu(\gtta) + \gd\go u(\gtta, \go)\right),
\end{equation}
where 
\begin{equation}
 \label{eq:defdeltaqmu}
\gep(q) \, :=\,  (\gtta, \go, \gd)\mapsto q(\gtta, \gd\go) - q_0(\gtta),
\end{equation}
is the difference between the stationary solution with disorder and the one without disorder.

% We recall (\cite[p.81]{Pazy1983}) that an operator $A$ is called \emph{m-dissipative} if $A$ is dissipative and there exists
% $\lambda>0$ such that $(\lambda - A)$ is surjective.
 
\begin{proposition}
 \label{prop:Apositivemu}
The (extension of the) operator $A$ is the infinitesimal generator of a strongly continuous semi-group
of contractions $T_A(t)$ on $H^{-1}_{{q_0},\mu}$. 

Moreover, for every $0<\alpha<\frac{\pi}{2}$ this semigroup can be extended to an
analytic semigroup $T_A(z)$ defined on $\Delta_\alpha\, :=\, \ens{z\in\bbC}{|\arg(z)|<\alpha}$.
\end{proposition}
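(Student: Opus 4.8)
The plan is to read the statement off Proposition~\ref{th:spectral gap A}: its closure, which we still denote by $A$ with the customary abuse, is a \emph{self-adjoint} operator on $H^{-1}_{{q_0},\mu}$ with $\sigma(A)\subseteq(-\infty,0]$ (equivalently $\crosqmu{Au}{u}=-\cE_\mu(u)\le0$ for $u\in\cD(A)$, and $1\in\rho(A)$). The proposition is then an instance of the classical fact that a non-positive self-adjoint operator is the infinitesimal generator of an analytic contraction semigroup of angle $\pi/2$ (see e.g.~\cite{Pazy1983}); I would spell out the argument only because it is short.

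First I would define $T_A(t):=e^{tA}$, $t\ge0$, through the spectral theorem: with $E(\dd\lambda)$ the spectral measure of $A$ one sets $T_A(t)=\int_{(-\infty,0]}e^{t\lambda}\,E(\dd\lambda)$. Since $|e^{t\lambda}|\le1$ for $\lambda\le0$ and $t\ge0$, we get $\Vert T_A(t)\Vert\le1$; the semigroup law and the strong continuity as $t\searrow0$ follow from $e^{(t+s)\lambda}=e^{t\lambda}e^{s\lambda}$ and dominated convergence in the spectral integral, and one more application of the functional calculus identifies $A$ as the generator. Alternatively, $A$ being self-adjoint and dissipative with $1\in\rho(A)$, one may invoke Hille--Yosida/Lumer--Phillips directly.

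For the analytic extension I would keep the same functional calculus: for $z\in\bbC$ with $\Re(z)>0$ put $T_A(z):=\int_{(-\infty,0]}e^{z\lambda}\,E(\dd\lambda)$. As $|e^{z\lambda}|=e^{\lambda\Re(z)}\le1$ whenever $\lambda\le0$ and $\Re(z)>0$, the bound $\Vert T_A(z)\Vert\le1$ holds on the whole right half-plane; moreover $z\mapsto T_A(z)u$ is holomorphic there, since for $\Re(z)>0$ and every $n$ the function $\lambda\mapsto\lambda^ne^{z\lambda}$ is bounded on $(-\infty,0]$, so $T_A(z)u$ belongs to the domain of every power of $A$ and $\frac{\dd}{\dd z}T_A(z)u=AT_A(z)u$. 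Hence $T_A$ extends to a bounded analytic semigroup on $\{\Re(z)>0\}$, and a fortiori to every sector $\Delta_\alpha$ with $0<\alpha<\pi/2$, which is the claim. Equivalently, for self-adjoint $A$ one has $\Vert R(\lambda,A)\Vert\le 1/\mathrm{dist}(\lambda,\sigma(A))$ on $\bbC\setminus(-\infty,0]$, so $\sigma(A)\subseteq(-\infty,0]$ gives $\Vert R(\lambda,A)\Vert\le C_\eta/|\lambda|$ on each sector $\{|\arg(\lambda)|\le\pi-\eta\}$, which is precisely the resolvent hypothesis in the generation theorem for analytic semigroups.

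Since Proposition~\ref{th:spectral gap A} is already established, there is no genuine obstacle here; the only points deserving a line of care are matching the self-adjoint extension intended in the statement with the one used above, and checking that the contraction bound is uniform over each sector $\Delta_\alpha$ — both immediate consequences of $\sigma(A)\subseteq(-\infty,0]$.
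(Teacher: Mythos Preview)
Your argument is correct: once Proposition~\ref{th:spectral gap A} hands you a self-adjoint extension with spectrum in $(-\infty,0]$, the spectral-calculus definition $T_A(z)=\int_{(-\infty,0]} e^{z\lambda}\,E(\dd\lambda)$ immediately yields the contraction semigroup and its bounded analytic extension to the open right half-plane, hence to every $\Delta_\alpha$. This is the cleanest route to the statement as written.

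The paper proceeds differently, and for a reason worth noting. Rather than invoking functional calculus, it works through the resolvent: after shifting to $A_\gep:=A-\gep$ so that $0\in\rho(A_\gep)$, it establishes, by a Taylor expansion of $R(\cdot,A_\gep)$ around points on the imaginary axis, the explicit bound
\[
\Nsqmu{R(\lambda,A_\gep)}\,\le\,\frac{1}{(1-\sin(\alpha))\,|\lambda|}\qquad\text{for }\lambda\in\Sigma_\alpha:=\Big\{|\arg\lambda|<\tfrac\pi2+\alpha\Big\},
\]
and then applies the generation criterion of Proposition~\ref{prop:pazysemgps}. The point is not the proposition itself but the \emph{explicit constant} $\frac{1}{1-\sin(\alpha)}$: this is exactly what enters the smallness condition \eqref{eq:condomega00} on the perturbation $B$ in Proposition~\ref{prop:semigroupAB}, and ultimately the quantitative bounds on $\gd$ collected in Proposition~\ref{prop:resume conditions delta_2}. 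Your argument delivers only an unspecified $C_\eta$ (which could in fact be made explicit from $\Nsqmu{R(\lambda,A)}\le 1/\mathrm{dist}(\lambda,\sigma(A))$, yielding $1/\cos(\alpha)$ on $\Sigma_\alpha$), but as written it would need that extra line before feeding into the later perturbation estimates.
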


We recall here the result we use concerning analytic extensions of strongly continuous semigroups. Its proof can be found in
\cite[Th 5.2, p.61]{Pazy1983}.

\begin{proposition}
 \label{prop:pazysemgps}
Let $T(t)$ a uniformly bounded strongly continuous semigroup, whose infinitesimal generator $F$ is such that $0\in\rho(F)$ and let
$\alpha\in(0, \frac\pi2)$. The
following statements are equivalent:
\begin{enumerate}
 \item \label{it:prop:pazysemgps1}$T(t)$ can be extended to an analytic semigroup
in the sector $\Delta_\alpha\, =\, \ens{\lambda\in\bC}{|\arg(\lambda)|<\alpha}$ and $\Vert{T(z)}\Vert$ is uniformly bounded in every closed
sub-sector $\bar{\Delta}_\alpha'$, $\alpha'<\alpha$, of $\Delta_\alpha$,
\item \label{it:prop:pazysemgps2} There exists $M>0$ such that\begin{equation}\rho(F) \supset \Sigma\, =\,
\ens{\lambda
\in\bC}{|\arg(\lambda)|<\frac{\pi}{2}+\alpha} \cup \{0\},\end{equation}and\begin{equation}\Vert{R(\lambda, F)}\Vert \, \leq\,  \frac{M}{|\lambda|}, \quad
\lambda\in\Sigma, \lambda\neq0\, .\end{equation}
\end{enumerate}
\end{proposition}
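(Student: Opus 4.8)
The statement is the classical characterisation of generators of bounded analytic semigroups (this is exactly \cite[Th.~5.2]{Pazy1983}), so the plan is to prove the two implications $(2)\Rightarrow(1)$ and $(1)\Rightarrow(2)$, taking care to match the sectors and the constants across the two formulations. Throughout set $\Sigma\, :=\, \ens{\lambda\in\bbC}{|\arg\lambda|<\frac\pi2+\alpha}\cup\{0\}$, write $R(\lambda,F)$ for the resolvent, and recall that, $T(t)$ being uniformly bounded and strongly continuous with generator $F$ and $0\in\rho(F)$, the classical Hille--Yosida theorem already supplies $\ens{\mu\in\bbC}{\Re\mu>0}\subseteq\rho(F)$ together with $\Vert R(\mu,F)\Vert\le C_0/\Re\mu$, where $C_0:=\sup_{t\ge0}\Vert T(t)\Vert$.

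For $(2)\Rightarrow(1)$, I would reconstruct the analytic extension by a Dunford integral: for $\alpha'<\alpha$ pick $\beta\in(\frac\pi2+\alpha',\frac\pi2+\alpha)$, let $\Gamma$ be the path running from $\infty e^{-i\beta}$ to $0$ and on to $\infty e^{i\beta}$ (admissible since $0\in\rho(F)$), and set
\begin{equation}
 T(z)\, :=\, \frac{1}{2\pi i}\int_\Gamma e^{\lambda z}R(\lambda,F)\,\dd\lambda\, , \qquad z\in\Delta_{\alpha'},\ z\ne0\, .
\end{equation}
On the rays $\arg\lambda=\pm\beta$ one has $\arg(\lambda z)\in(\frac\pi2,\frac{3\pi}2)$ for $|\arg z|\le\alpha'$, so $\Re(\lambda z)\le -c(\alpha,\alpha')|\lambda z|$; combined with the bound $\Vert R(\lambda,F)\Vert\le M/|\lambda|$ this gives absolute convergence, and differentiation under the integral sign yields analyticity on $\Delta_{\alpha'}$, hence on $\Delta_\alpha$ (the definitions for different $\alpha'$ being compatible by contour deformation). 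It then remains to verify, in this order: (i) $\sup_{z\in\bar\Delta_{\alpha'}}\Vert T(z)\Vert<\infty$, via the substitution $\lambda\mapsto\lambda/|z|$ in the integral; (ii) the semigroup law $T(z_1)T(z_2)=T(z_1+z_2)$ on $\Delta_\alpha$, by multiplying two such integrals along slightly displaced contours, applying Fubini and the resolvent identity $R(\lambda,F)R(\eta,F)=(R(\eta,F)-R(\lambda,F))/(\lambda-\eta)$, and computing the inner one-variable integrals by residues; (iii) strong continuity $T(z)x\to x$ as $z\to0$ in $\Delta_{\alpha'}$, first on $D(F)$ and then by density using (i); and (iv) that this semigroup has generator $F$ --- evaluate $\int_0^\infty e^{-\mu t}T(t)x\,\dd t$ by Fubini plus a residue computation to obtain $R(\mu,F)x$ --- so that by uniqueness it agrees on $(0,\infty)$ with the given $T(t)$. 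This is statement $(1)$.

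For $(1)\Rightarrow(2)$, the plan is to exploit rotated semigroups. Given the analytic extension with $M_{\alpha'}:=\sup_{z\in\bar\Delta_{\alpha'}}\Vert T(z)\Vert<\infty$ for each $\alpha'<\alpha$, fix $\alpha'<\alpha$ and $|\theta|\le\alpha'$ and note that $S_\theta(t):=T(te^{i\theta})$ is a strongly continuous semigroup bounded by $M_{\alpha'}$ whose generator is $e^{i\theta}F$ (differentiate $z\mapsto T(z)x$ at $0$ and use the chain rule). Hille--Yosida applied to $S_\theta$ gives $\ens{\mu}{\Re\mu>0}\subseteq\rho(e^{i\theta}F)$ with $\Vert R(\mu,e^{i\theta}F)\Vert\le M_{\alpha'}/\Re\mu$; since $\lambda-F=e^{-i\theta}(\mu-e^{i\theta}F)$ for $\lambda=e^{-i\theta}\mu$, this says that $\lambda\in\rho(F)$ and $\Vert R(\lambda,F)\Vert\le M_{\alpha'}/\Re(e^{i\theta}\lambda)$ whenever $|\arg\lambda+\theta|<\frac\pi2$. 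Letting $\theta$ range over $[-\alpha',\alpha']$ covers $\ens{\lambda}{|\arg\lambda|<\frac\pi2+\alpha'}$, and for each such $\lambda$ one chooses $\theta$ making $e^{i\theta}\lambda$ as close to the positive real axis as $|\theta|\le\alpha'$ permits, getting $\Re(e^{i\theta}\lambda)\ge c_{\alpha'}|\lambda|$ and hence $\Vert R(\lambda,F)\Vert\le(M_{\alpha'}/c_{\alpha'})/|\lambda|$. With $0\in\rho(F)$ from the hypotheses and $\alpha'\uparrow\alpha$, this gives $\rho(F)\supseteq\Sigma$ and the resolvent bound on $\Sigma\setminus\{0\}$, i.e.\ statement $(2)$.

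The routine parts --- absolute convergence and Fubini for the contour integrals, the residue evaluations, the identification of $e^{i\theta}F$ as a generator --- are standard. The point requiring genuine care is the bookkeeping at the edges of the sectors: $(1)$ only claims uniform boundedness of $T(z)$ on each \emph{closed} subsector $\bar\Delta_{\alpha'}$ with $\alpha'<\alpha$, while $(2)$ only claims the resolvent estimate on the \emph{open} sector $|\arg\lambda|<\frac\pi2+\alpha$, so I would need to check that the constants $M_{\alpha'}$, $c_{\alpha'}$ and the admissible contour angle $\beta$ deteriorate in a controlled way as $\alpha'\uparrow\alpha$, ensuring the two formulations translate into one another exactly without the loss or gain of a boundary ray.
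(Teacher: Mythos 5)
First, a remark on the comparison itself: the paper does not prove this proposition — it is quoted verbatim from \cite[Th.~5.2, p.~61]{Pazy1983} — so the only meaningful benchmark is the textbook argument, whose overall structure your outline reproduces (Dunford integral for $(2)\Rightarrow(1)$, rotated semigroups plus Hille--Yosida for $(1)\Rightarrow(2)$). Your implication $(1)\Rightarrow(2)$ is fine as sketched: $S_\theta(t)=T(te^{i\theta})$ is indeed a bounded $C_0$ semigroup with generator $e^{i\theta}F$, and the covering of the sector with $\Re(e^{i\theta}\lambda)\geq c_{\alpha'}|\lambda|$ is exactly the standard argument.

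There is, however, a concrete gap in step (i) of $(2)\Rightarrow(1)$: with your contour $\Gamma$ passing \emph{through the origin}, the substitution $\lambda\mapsto\lambda/|z|$ does not yield a bound on $\Vert T(z)\Vert$ that is uniform as $z\to0$ in $\bar\Delta_{\alpha'}$. Because $0\in\rho(F)$ the integral converges for each fixed $z\neq0$ (the resolvent is bounded on a neighbourhood of $0$), but on the rays the only available estimate away from a fixed neighbourhood of the origin is $\Vert R(\lambda,F)\Vert\leq M/|\lambda|$, and
\begin{equation*}
\int_{r_0}^{\infty} e^{-c r|z|}\,\frac{M}{r}\,\dd r\, =\, M\int_{c r_0 |z|}^{\infty}\frac{e^{-u}}{u}\,\dd u\, \sim\, M\log\frac{1}{|z|}\qquad (z\to0)\, ,
\end{equation*}
so this route only gives $\Vert T(z)\Vert=O(\log(1/|z|))$ near the vertex of the sector, whereas statement (1) requires a uniform bound. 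The same logarithmic divergence survives the rescaling $\lambda\mapsto\lambda/|z|$, since the rescaled contour still passes through $0$ where the bound degenerates to $M/|\mu|$. The standard repair — and what Pazy actually does — is to deform $\Gamma$ so that it goes around the origin along an arc of radius $1/|z|$: on that arc $\Vert R(\lambda,F)\Vert\leq M|z|$, $|e^{\lambda z}|\leq e$ and the arc length is $2\beta/|z|$, giving an $O(1)$ contribution, while on the rays $\{|\lambda|\geq 1/|z|\}$ one gets $\int_1^\infty e^{-cu}u^{-1}\dd u<\infty$ after rescaling; the deformation is legitimate because the region swept between the two contours lies in $\rho(F)$ and the integrand decays at infinity. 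With that correction, the remaining steps of your outline (semigroup law via the resolvent identity and residues, identification of the generator through the Laplace transform, strong continuity at $0$ first on $D(F)$) are the standard and correct argument.
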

% 
% \begin{rem}
% Note that the constant $\alpha$ is the same in (\ref{it:prop:pazysemgps1}) and (\ref{it:prop:pazysemgps2}).
% \end{rem}

\begin{proof}[Proof of Proposition \ref{prop:Apositivemu}]

The proof in Section \ref{sec:hyperbolic structure}, Theorem \ref{th:spectral gap A} of the self-adjointness of $A$ shows that $A$ satisfies
the hypothesis of Lumer-Phillips Theorem (see \cite[Th 4.3, p.14]{Pazy1983}): $A$ is the infinitesimal generator of a $C_0$ semi-group of
contractions denoted by $T_A(t)$.

The rest of the proof is devoted to show the existence of an analytic extension of this semigroup in a proper sector. We follow
here the lines of the proof of Th 5.2, p. 61-62, in \cite{Pazy1983}, but with explicit estimates on the resolvent, in order to quantify properly
the appropriate size of the perturbation.

Let us first replace the operator $A$ by a small perturbation: for all $\gep>0$, let $A_{\gep}\, :=\,  A-\gep$, so that
$0$ belongs to $\rho(A_\gep)$. The operator $A_{\gep}$ has the following
properties: as $A$, it generates a strongly continuous semigroup of operators (which is $T_{A, \gep}(t)=T_A(t)e^{-\gep t}$).

% For all $u$, we have $\Nsqmu{(A_{\gep} -\lambda)u}^2= \Nsqmu{(A_{\gep}-
% \Re(\lambda))u}^{2} + \Im(\lambda)^2\Nsqmu{u}^{2}$ (see \cite[p.270]{Kato1995}). Consequently,
% we have
% \begin{equation}
% \label{eq:estimRAeps1}
% \forall
% \lambda\in\bC\smallsetminus\bbR, \Nsqmu{R(\lambda, A_\gep)}\, \leq\, 
% \frac{1}{|\Im(\lambda)|}\, . 
% \end{equation}
% What is more, since $A_\gep$ is the generator of a strongly continuous semigroup of contraction, we have (see \cite[Cor. 3.6, p.11]{Pazy1983}):
% \begin{equation}
%  \label{eq:estimRAeps2}
% \forall \lambda, \Re(\lambda)>0,\ \Nsqmu{R(\lambda, A_\gep)} \, \leq\,  \frac{1}{\Re(\lambda)}\, .
% \end{equation}
% 
Since $A$ is self-adjoint, it is easy to see that 
\begin{equation}
\label{eq:estimRAeps1}
\forall
\lambda\in\bC\smallsetminus\bbR, \Nsqmu{R(\lambda, A_\gep)}\, \leq\, 
\frac{1}{|\Im(\lambda)|}\, , 
\end{equation}
and since the spectrum of $A$ is negative, for every $\lambda\in\bC$ such that $\Re(\lambda)>0$,
\begin{equation}
\label{eq:estimRAepsdef1}
\Nsqmu{R(\lambda, A_\gep)}\, \leq\,  \frac{1}{|\lambda|}\, .                                     
\end{equation}
For any $\alpha \in(0, \frac{\pi}{2})$, let 
\begin{equation}
 \Sigma_\alpha \,:=\, \ens{\lambda\in\bbC}{|\arg(\lambda)|<\frac{\pi}{2}+\alpha}\, .
\end{equation}
Let us prove that for $\lambda\in\Sigma_\alpha$,
\begin{equation}
\label{eq:estim R lambda alpha}
 \Nsqmu{R(\lambda, A_\gep)}\, \leq\, \frac{1}{1-\sin(\alpha)}\cdot \frac{1}{|\lambda|}\, .
\end{equation}
Note that \eqref{eq:estim R lambda alpha} is clear from \eqref{eq:estimRAeps1} and \eqref{eq:estimRAepsdef1} when $\lambda$ is such
that $\Re(\lambda)\geq0$.

Let us consider $\sigma>0, \tau\in\bbR$ to be chosen appropriately later.

Let us write the following Taylor expansion for
$R(\lambda, A_\gep)$ around $\sigma+i\tau$ (at least well defined in a neighborhood of $\sigma+i\tau$ since $\sigma>0$):
\begin{equation}
\label{eq:taylor resolvent}
 R(\lambda, A_\gep) \, =\,  \sum_{n=0}^{\infty}{R(\sigma+i\tau, A_\gep)^{n+1}((\sigma + i\tau)-\lambda)^n}\, .
\end{equation}

From now, we fix $\lambda\in\Sigma_\alpha$ with $\Re(\lambda)<0$.
This series $R(\lambda, A_\gep)$ is well defined in $\lambda$ if one can choose $\sigma$, $\tau$ and $k\in(0,1)$ such that $\Nsqmu{R(\sigma
+i\tau, A_\gep)}|\lambda-(\sigma+i\tau)|\leq k<1$. In particular, using \eqref{eq:estimRAeps1}, it suffices to have $|\lambda-(\sigma+i\tau)|\leq
k|\tau|$ and since $\sigma>0$ is
arbitrary, it suffices to find $k\in(0,1)$ and $\tau$ with $|\lambda-i\tau|\leq k|\tau|$ to obtain the convergence of \eqref{eq:taylor
resolvent}.
For this $\lambda\in\Sigma_\alpha$ with $\Re(\lambda)<0$, let us define $\lambda'$ and $\tau$ as in Figure~\ref{fig:angle alpha}. Then,
$|\lambda-i\tau|\leq |\lambda'-i\tau|= \sin(\alpha)|\tau|$ with $\sin(\alpha)\in(0,1)$. So
the series converges for $\lambda\in\Sigma_\alpha$ and one has, using again \eqref{eq:estimRAeps1},
\begin{equation}
 \label{eq:estimRAepsdef}
\Nsqmu{R(\lambda, A_\gep)} \, \leq\,  \frac{1}{(1-\sin(\alpha))|\tau|} \, \leq\,  \frac{1}{1-\sin(\alpha)}
\cdot\frac{1}{|\lambda|}\, .
\end{equation}

\begin{figure}[!ht]
 \centering
\includegraphics[width=0.5\textwidth]{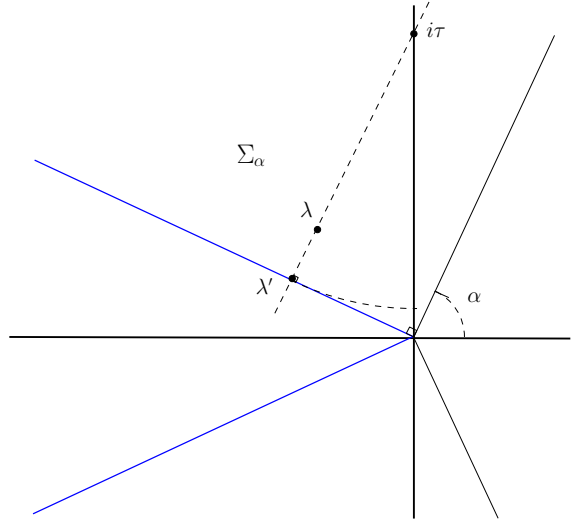}
\caption{The set $\Sigma_\alpha$.}
\label{fig:angle alpha}
\end{figure}

% Combining \eqref{eq:estimRAepsdef1} and \eqref{eq:estimRAeps3}, we obtain, that for all $0<k<1$, for all $\lambda\in \bC$outside
% the region $\ens{\lambda\in\bC}{|\arg(\lambda)|< \frac{\pi}{2} + k\frac{\pi}{4}}$, the following estimation holds:
% \begin{equation}
%  \label{eq:estimRAepsdef}
% \Nsqmu{R(\lambda, A_\gep)} \leq \frac{\sqrt{2}}{1-k} \cdot \frac{1}{|\lambda|}.
% \end{equation}

The fact that $T_{A, \gep}(t)$ can be extended to an analytic semigroup $T_{A, \gep}(z)$ on the domain $\Delta_\alpha$ is a simple
application of \eqref{eq:estimRAepsdef} and Proposition \ref{prop:pazysemgps}, with $M:=\frac{1}{1-\sin(\alpha)}$.

Let us then define $\tilde{T_A}(z):= e^{\gep z} T_{A,\gep}(z)$, for $z\in\Delta_\alpha$
% Then for $t\geq 0$, we
% have
% \begin{equation}
% \tilde{T}(t)\, =\,  e^{\gep t} T_{\gep}(t)\, =\,  e^{\gep t}T(t) e^{-\gep t}\, =\,  T(t)\, ,
% \end{equation}
so that $\tilde{T_A}$ is an
analytic extension of $T_A$ (an argument of analyticity shows that $\tilde{T_A}$ does not depend on $\gep$).
\end{proof}

\begin{rem}\rm
Note that estimate \eqref{eq:estim R lambda alpha} is also valid in the limit as $\gep\to 0$: for all $\alpha\in(0, \frac\pi2)$,
$\lambda\in\Sigma_\alpha$,
\begin{equation}
\label{eq:estim R lambda alpha without gep}
 \Nsqmu{R(\lambda, A)}\, \leq\, \frac{1}{1-\sin(\alpha)}\cdot \frac{1}{|\lambda|}\, .
\end{equation}
\end{rem}

\subsubsection{Spectral properties of $L^\go_{{q}}=A+B$}

In this part, we show that if the perturbation $B$ is small enough with respect to $A$, one has the same spectral properties for
$L^\go_{{q}}= A+B$ as for $A$. In this extent, we recall that $\mu$ is of compact support in $[-1, 1]$, and the disorder is
rescaled by  $\gd>0$.

\begin{proposition}
\label{prop:BAboundedmu}
 \item The operator $B$ is $A$-bounded, in the sense that there exist explicit constants $a_{K, \gd}$ and $b_{K,\gd}$, depending
on $K$ and $\gd$ such that for all $u$ in the domain of (the closure of) $A$
\begin{equation}
 \label{eq:BAboundedmu}
\Nsqmu{Bu}\, \leq\, a_{K, \gd}\Nsqmu{u} +  b_{K, \gd}\Nsqmu{Au}\, .
\end{equation}
Moreover, for fixed $K>1$, $a_{K,\gd}=O(\gd)$ and $b_{K,\gd}=O(\gd)$, as $\gd\rightarrow0$.
\end{proposition}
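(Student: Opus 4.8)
The plan is to exploit the fact that, by \eqref{eq:defBmu}, $Bu$ is a pure $\gtta$-derivative, $Bu=-\partial_\gtta F$, with
\begin{equation}
F(\gtta,\go)\, :=\, u(\gtta,\go)\,\langle J\ast\gep(q)\rangle_\mu(\gtta) + \gep(q)(\gtta,\go,\gd)\,\langle J\ast u\rangle_\mu(\gtta) + \gd\go\, u(\gtta,\go)\, .
\end{equation}
The primitive of $-\partial_\gtta F$ entering the definition of $\Nsqmu{\cdot}$ is $-F(\cdot,\go)$ corrected by the ($\go$-dependent) constant making the weighted integral vanish, and since that constant minimises the weighted $L^2$-distance to $F(\cdot,\go)$, one gets at once $\Nsqmu{Bu}\le\NLqmu{F}$. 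As $q_0$ in \eqref{eq:defstationarysolution nodisorder} is bounded above and below by explicit positive constants, $\NLqmu{F}\le(\min_{\bbS}q_0)^{-1/2}\NLmu{F}$, so everything reduces to an $L^2(\gl\otimes\mu)$ bound on $F$. (Recall that elements of the domain of the closure of $A$ lie in $L^2(\gl\otimes\mu)$, as established in the proof of Proposition~\ref{th:spectral gap A}, so $\NLmu{u}<\infty$.)

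For that I would estimate the three summands of $F$ separately. Jensen's inequality followed by Cauchy--Schwarz gives $\Ninf{\langle J\ast u\rangle_\mu}\le\Vert J\Vert_{L^2(\bbS)}\NLmu{u}$, so the middle term contributes $\sqrt{2\pi}\,\Vert J\Vert_{L^2(\bbS)}\,\sup_{\go}\Ninf{\gep(q)(\cdot,\go,\gd)}\,\NLmu{u}$, the first term contributes $\Ninf{\langle J\ast\gep(q)\rangle_\mu}\,\NLmu{u}$, and, since $\Supp(\mu)\subseteq[-1,1]$, the third contributes at most $\gd\,\NLmu{u}$. It remains to check that $\Ninf{\langle J\ast\gep(q)\rangle_\mu}$ and $\sup_{\go\in[-1,1]}\Ninf{\gep(q)(\cdot,\go,\gd)}$ are $O(\gd)$: here the explicit representation \eqref{eq:qhatom}--\eqref{eq:Sq} and Lemma~\ref{th: Psimu concave} are used. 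Indeed $r_\gd\to r_0$, and the implicit function theorem applied to the strictly concave fixed-point relation \eqref{eq:deffixedpoint disorder} makes $r_\gd$ a smooth function of $\gd$; hence the map built from \eqref{eq:qhatom}--\eqref{eq:Sq} is jointly smooth in $(\gtta,\nu,r)$ and reduces to $q_0$ at $\gd=0$, so, by compactness of $\Supp(\mu)$, $\sup_{\go\in[-1,1],\gtta}|\gep(q)(\gtta,\go,\gd)|=O(\gd)$ (recall \eqref{eq:defdeltaqmu}), and convolving with $J$ only improves this. Altogether $\NLmu{F}\le C_1(K)\,\gd\,\NLmu{u}$ with $C_1(K)$ explicit.

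The last step is to turn this $L^2$-estimate into the relative $A$-bound \eqref{eq:BAboundedmu}. From \eqref{ineq:1-A norm vs L2} and \eqref{ineq:1-A norm vs L2 bis} there is $c=c(K)>0$ with
\begin{equation}
\NLmu{u}^2\, \leq\, c\,\crosqmu{u}{(1-A)u}\, =\, c\,\Nsqmu{u}^2 + c\,\crosqmu{u}{-Au}\, \leq\, c\,\Nsqmu{u}^2 + c\,\Nsqmu{u}\,\Nsqmu{Au}\, ,
\end{equation}
and Young's inequality turns this into $\NLmu{u}\le a'\,\Nsqmu{u}+b'\,\Nsqmu{Au}$ for constants $a',b'$ depending only on $c$. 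Combining, \eqref{eq:BAboundedmu} holds with $a_{K,\gd}=(\min_{\bbS}q_0)^{-1/2}a'C_1(K)\gd$ and $b_{K,\gd}=(\min_{\bbS}q_0)^{-1/2}b'C_1(K)\gd$, both explicit and $O(\gd)$ as $\gd\searrow0$; the equivalence of all the weighted $H^{-1}$-norms lets us work throughout with $\Nsqmu{\cdot}$. The main (and essentially only non-mechanical) obstacle is the uniform-in-$\go$ estimate $\sup_{\go\in[-1,1]}\Ninf{\gep(q)(\cdot,\go,\gd)}=O(\gd)$, which rests on the explicit formulas \eqref{eq:qhatom}--\eqref{eq:Sq}, on $r_\gd\to r_0$ from Lemma~\ref{th: Psimu concave}, and on $\Supp(\mu)$ being compact; everything else is routine convolution and Sobolev bookkeeping together with the coercivity \eqref{ineq:1-A norm vs L2 bis}.
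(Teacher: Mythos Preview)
Your proof is correct and the overall architecture matches the paper's: both start from $\Nsqmu{Bu}\le \NLqmu{F}$ for the primitive $F$ of $-Bu$, bound $\NLqmu{F}\le c^{(1)}_{K,\gd}\NLqmu{u}$ with $c^{(1)}_{K,\gd}=O(\gd)$ via the uniform smallness of $\gep(q)$, and finish by controlling $\NLqmu{u}$ (equivalently $\NLmu{u}$) in terms of $\Nsqmu{u}$ and $\Nsqmu{Au}$.

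The genuine difference lies in that last step. The paper proves the interpolation inequality $\NLqmu{u}\le c_K^{(2)}\Nsqmu{Au}+c_K^{(3)}\Nsqmu{u}$ from scratch, via an explicit kernel identity $f'=I_n(f'')+J_n(f)$ on $\bbS$ with operator-norm bounds depending on a free integer $n$; choosing $n$ large absorbs the $\NLmu{u}$ term arising from $\NLmu{u'}\lesssim\NLmu{\cA u}+\NLmu{u}$. Your route is more direct: you recycle the coercivity \eqref{ineq:1-A norm vs L2 bis} already obtained in the self-adjointness proof, writing $\NLmu{u}^2\le c\,\crosqmu{u}{(1-A)u}=c\,\Nsqmu{u}^2+c\,\crosqmu{u}{-Au}$ and concluding by Cauchy--Schwarz and Young. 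This is shorter and avoids the kernel machinery entirely; the price is that the constants $a',b'$ are inherited from the spectral-gap constants of Section~\ref{sec:hyperbolic structure} rather than being as directly computable as the paper's $c_K^{(2)},c_K^{(3)}$, which feed into the explicit bounds of Proposition~\ref{prop:resume conditions delta_2}. Similarly, for $\sup_{\go}\Ninf{\gep(q)(\cdot,\go,\gd)}=O(\gd)$ the paper proves a dedicated lemma (Lemma~\ref{lem:estimdeltaq}) with an explicit constant $\gep_{K,\gd}$ by differentiating $S(\gtta,\gd\go,2Kr_\gd)$ in $\go$, whereas you invoke the implicit function theorem on \eqref{eq:deffixedpoint disorder} and smoothness of \eqref{eq:qhatom}--\eqref{eq:Sq}; this is valid (the non-degeneracy at $r_0$ holds by strict concavity of $\Psi_0$) but again less quantitative.
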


The latter proposition is based on the fact that the difference $\gep(q)(\gtta, \go, \gd) =  q(\gtta, \gd\go)- q_0(\gtta)$
in \eqref{eq:defdeltaqmu} is small if the scale parameter $\gd$ tend to $0$:
\begin{lemma}
 \label{lem:estimdeltaq}
For $\gd>0$, let us define

\begin{equation}
\label{eq:def Ninf eps(q)}
 \Ninf{\gep(q)}\, :=\, \suptwo{\gtta\in\bbS , |\go|\leq 1}{0<u<\gd} |\gep(q)(\gtta, \go, u)|\, .
\end{equation}
Then for all $K>1$, $\Ninf{\gep(q)} = O(\gd)$, as $\gd\rightarrow 0$. More precisely, for $K>1$, $\gd>0$, the following
inequality holds:
\begin{equation}
\label{eq:estimdeltaq} 
\Ninf{\gep(q)}\, \leq\, \gep_{K,\gd}\, ,
\end{equation}
where the constant $\gep_{K, \gd}$ can be chosen explicitly in terms of $K$ and $\gd$:
\begin{equation}
 \label{eq:estimdeltaK}
\gep_{K, \gd}\, :=\, \frac{\gd}{\pi} e^{8\pi\gd}\left( 2+ 3e^{4\pi\gd} \right) e^{14K\bar{r}_\gd}\left( 1+ 2\pi
e^{2K\bar{r}_\gd}\right)\, ,
\end{equation}
where we recall that $\bar{r}_\gd= \max\left( r_0, r_\gd\right)$.
\end{lemma}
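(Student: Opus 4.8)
The natural starting point is the semi-explicit representation \eqref{eq:qhatom}--\eqref{eq:Sq}, adapted to the rescaled equation \eqref{FKP kuramoto disorder delta}: the disordered stationary profile is $q(\gtta,\gd\go)=S(\gtta,\gd\go,2Kr_\gd)/Z(\gd\go,2Kr_\gd)$ with $r_\gd$ the fixed point \eqref{eq:deffixedpoint disorder}, while at $\go=0$ one has $G(u,0,x)=x\cos u$ and the prefactor $1-e^{4\pi\cdot 0}$ vanishes, so that $S(\gtta,0,x)=e^{x\cos\gtta}\int_0^{2\pi}e^{-x\cos u}\dd u$, $Z(0,x)=Z_0(x)^2$, and hence $q(\gtta,0,2Kr_0)=q_0(\gtta)$, cf. \eqref{eq:defstationarysolution nodisorder}. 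I would also record the manifestly non-negative rewriting $S(\gtta,\go,x)=e^{G(\gtta,\go,x)}\bigl(\int_0^\gtta e^{-G(u,\go,x)}\dd u+e^{4\pi\go}\int_\gtta^{2\pi}e^{-G(u,\go,x)}\dd u\bigr)$, which, together with the crude bound $|G(u,\go,x)|\le x+4\pi|\go|$ on $\bbS$, yields a denominator estimate of the form $Z(\go,x)\ge 4\pi^2 e^{-2x-12\pi|\go|}$ (and a matching exponential upper bound); with $x=2K\bar r_\gd$ and $|\go|\le\gd$ this is exactly the control needed below.

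The core of the argument is the decomposition $\gep(q)(\gtta,\go,\gd)=\bigl[q(\gtta,\gd\go,2Kr_\gd)-q(\gtta,0,2Kr_\gd)\bigr]+\bigl[q(\gtta,0,2Kr_\gd)-q(\gtta,0,2Kr_0)\bigr]$. For the first bracket --- the effect of the disorder at frozen coupling $r=r_\gd$ --- I would expand $e^{2\gd\go\gtta}$, $e^{-2\gd\go u}$ and $e^{4\pi\gd\go}$ around $1$ using $|e^t-1|\le|t|e^{|t|}$ and $|\gd\go|\le\gd$, insert the resulting $O(\gd)$ increments into $S$ and $Z$, and combine via the elementary quotient estimate $|a/b-c/d|\le|b|^{-1}\bigl(|a-c|+|c||d|^{-1}|b-d|\bigr)$ using the lower bound on $Z$ above. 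Counting the exponential factors $e^{2K\bar r_\gd}$ and $e^{4\pi\gd}$ that accumulate through the product structure of $S$, the integration defining $Z$, and the quotient produces a bound of the announced shape $\frac{\gd}{\pi}e^{8\pi\gd}(2+3e^{4\pi\gd})e^{14K\bar r_\gd}(1+2\pi e^{2K\bar r_\gd})$. For the second bracket, I would write it as $e^{2Kr_\gd\cos\gtta}/Z_0(2Kr_\gd)-e^{2Kr_0\cos\gtta}/Z_0(2Kr_0)$ and estimate it by the mean value theorem in the scalar variable $r$, obtaining a bound of order $K|r_\gd-r_0|e^{cK\bar r_\gd}$.

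To close the $O(\gd)$ claim it then suffices to check $|r_\gd-r_0|=O(\gd)$, so that the second bracket is dominated by the first and absorbed into the constant. This follows from the very same estimates applied to $\Psi^\mu_\gd-\Psi_0$: expanding the $\gd\go$-dependence of $S$ and $Z$ gives $\sup_{x\in[0,2K_{\max}]}|\Psi^\mu_\gd(x)-\Psi_0(x)|=O(\gd)$, and then from $|r_\gd-r_0|\le|\Psi^\mu_\gd(2Kr_\gd)-\Psi_0(2Kr_\gd)|+|\Psi_0(2Kr_\gd)-\Psi_0(2Kr_0)|$, together with the strict concavity of $\Psi_0$ (\cite{cf:Pearce}) --- which makes $2K\Psi_0'\le\theta<1$ on a neighbourhood of $r_0$ that contains $r_\gd$ for $\gd$ small, by Lemma \ref{th: Psimu concave} --- one gets $|r_\gd-r_0|\le(1-\theta)^{-1}O(\gd)$. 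Finally, the supremum over the scales $0<u<\gd$ appearing in \eqref{eq:def Ninf eps(q)} is harmless: the bound produced is non-decreasing in the scale parameter, and in any case $r_u\le 1$ gives crude uniform control. The only genuinely laborious point is the explicit constant bookkeeping in the first bracket --- tracking exactly how many factors $e^{2K\bar r_\gd}$ build up through $S$, $Z$ and the quotient --- but this is mechanical and conceptually transparent; everything else is soft.
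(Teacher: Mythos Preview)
Your approach is correct for the $O(\gd)$ claim and proceeds along a genuinely different decomposition than the paper's. The paper does not split $\gep(q)$ into a ``vary $\go$ at fixed $r$'' piece plus a ``vary $r$ at $\go=0$'' piece; instead it works directly with the quotient estimate
\[
|q(\gtta,\gd\go)-q_0(\gtta)|\le \frac{1}{Z(\gd\go)Z(0)}\Bigl(Z(0)\,|S(\gtta,\gd\go)-S(\gtta,0)|+|S(\gtta,0)|\,|Z(\gd\go)-Z(0)|\Bigr)
\]
and bounds $|S(\gtta,\gd\go)-S(\gtta,0)|\le\gd\sup_{|\go|\le 1}\bigl|\tfrac{\dd}{\dd\go}S(\gtta,\gd\go)\bigr|$ via a mean-value step, with the derivative bound obtained by an explicit (long) calculation. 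The paper never isolates $|r_\gd-r_0|$ as a separate quantity; the presence of both $r_0$ and $r_\gd$ is handled simply by passing to $\bar r_\gd=\max(r_0,r_\gd)$ in the final constant. Your route is arguably more transparent about the $r_\gd$ versus $r_0$ issue, but it costs you the auxiliary fixed-point argument $|r_\gd-r_0|=O(\gd)$, which the paper's single-step approach sidesteps.

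One point to flag: your claim that the first bracket alone ``produces a bound of the announced shape $\gep_{K,\gd}$'' and that the second bracket is then ``absorbed'' is optimistic if the \emph{exact} constant \eqref{eq:estimdeltaK} is the target. Your second bracket contributes a term of order $K|r_\gd-r_0|e^{cK\bar r_\gd}$ whose implicit constant comes from the contraction factor $(1-\theta)^{-1}$ in your fixed-point argument; this has no counterpart in $\gep_{K,\gd}$ and will not literally disappear into it. So you will obtain an explicit $O(\gd)$ bound of the same exponential structure, but not the stated $\gep_{K,\gd}$. If the precise constant matters, you would need to follow the paper's single MVT-in-$\go$ route and reproduce its derivative computation.
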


\begin{proof}[Proof of Lemma \ref{lem:estimdeltaq}]
 Recall that the disordered stationary solution $q$ \eqref{eq:qhatom} is
given by 
\begin{equation}
 q(\gtta, \gd\go) \,:=\, \frac{S(\gtta, \gd\go, 2Kr_\gd)}{Z(\gd\go, 2Kr_\gd)}, 
\end{equation}
where $S(\gtta, \go, x)$ is defined in
\eqref{eq:Sq} and that the non-disordered one
\eqref{eq:defstationarysolution
nodisorder} is given by $q_0(\gtta)=\frac{S(\gtta, 0, 2Kr_0)}{Z(0, 2Kr_0)}= \frac{e^{2Kr_0\cos(\gtta)}}{\intS
e^{2Kr_0\cos(\gtta)}\dd\gtta}$. 
Since $q(\gtta, \gd\go)=q(-\gtta, -\gd\go)$, it suffices to consider the case $\gd\go>0$.
A simple computation shows that 
\begin{equation}
 \label{eq:lower bound Zomega}
Z(\gd\go, 2Kr_\gd)\, \geq\, 4\pi^2 e^{-4K r_\gd}e^{-4\pi\gd}\, ,
\end{equation}
and that
\begin{equation}
 \label{eq:upper bound S0}
|S(\gtta, 0)|\, \leq\, 2\pi e^{4Kr_0}\, .
\end{equation}

Using $|q(\gtta, \gd\go)- q_0(\gtta)|\leq \frac{1}{Z(\gd\go)Z(0)}\left( Z(0)|S(\gtta, \gd\go) - S(\gtta, 0)| +
|S(\gtta, 0)||Z(0)-Z(\gd\go)|\right)$, one has to deal with, successively:
\begin{itemize}
 \item for fixed $\gtta\in\bbS$, $|S(\gtta, \gd\go) - S(\gtta, 0)|\leq \gd\cdot \sup_{|\go|\leq 1}|\frac{\dd}{\dd\go} S(\gtta,
\gd\go)|$. A long calculation shows that the latter expression $|\frac{\dd}{\dd\go} S(\gtta, \gd\go)|$ can be bounded above by
$8\pi^2 e^{4Kr_\gd}e^{4\pi\gd}\left( 2+ 3e^{4\pi\gd}\right)$, that is, 
\begin{equation}
 \label{eq:upper bound diff S}
|S(\gtta, \gd\go) - S(\gtta, 0)|\, \leq\, \gd 8\pi^2 e^{4Kr_\gd}e^{4\pi\gd}\left( 2+ 3e^{4\pi\gd}\right)\, .
\end{equation}
\item Using $|Z(\gd\go) - Z(0)|= \left|\intS (S(\gtta, \gd\go) - S(\gtta, 0)) \dd\gtta\right|$ and \eqref{eq:upper bound diff S},
one has directly:
\begin{equation}
 \label{eq:upper bound diff Z}
|Z(\gd\go) - Z(0)|\, \leq\, \gd 16\pi^3 e^{4Kr_\gd}e^{4\pi\gd}\left( 2+ 3e^{4\pi\gd}\right)\, .
\end{equation}
\end{itemize}
Putting together \eqref{eq:lower bound Zomega}, \eqref{eq:upper bound S0}, \eqref{eq:upper bound diff S} and \eqref{eq:upper bound
diff Z}, one obtains the result.
\end{proof}

We are now in position to prove the $A$-boundedness of $B$:
\begin{proof}[Proof of Proposition \ref{prop:BAboundedmu}]
$B$ is $A$-bounded: let us fix a $u$ in the domain of the closure of $A$. Then we have $\Nsqmu{Bu}= \NLqmu{\cB u}$, where $\cB u$ is
the
appropriate primitive of $Bu$, namely:
\begin{align}
\cB u(\gtta, \go)&\, :=\, {} -\left( u(\gtta, \go) \langle J \ast \gep(q) \rangle_\mu + \gep(q)(\gtta, \go, \gd) \langle J \ast
u\rangle_\mu(\gtta) + \gd\go u(\gtta, \go)\right)\nonumber\\ 
&+{} \left( \intS \frac{1}{q_0} \right)^{-1}\left( \intS \frac{u(\gtta, \go) \langle J \ast \gep(q) \rangle_\mu + \gep(q)(\gtta,
\go, \gd) \langle J \ast
u\rangle_\mu(\gtta) + \gd\go u(\gtta, \go)}{q_0(\gtta)} \dd\gtta\right)\, .\label{eq:primitiveBu}
\end{align}
 
One can easily shows that there exists a constant $c^{(1)}_{K, \gd}$, depending only on $K>1$ and $\gd>0$ such
that:
\begin{equation}
 \label{eq:estim1Bu}
\Nsqmu{Bu} \, \leq\,  c^{(1)}_{K, \gd} \NLqmu{u}\, .
\end{equation}
Indeed, an easy calculation shows that $|\langle J\ast \gep(q)\rangle_\mu|\leq 4K\Ninf{\gep(q)}$ and that 
\begin{equation}
\begin{split}
|\langle J\ast
u\rangle_\mu(\cdot)|&\, \leq\, K\left( \intS \sin(\cdot-\varphi)^2 q_0(\varphi)\dd\varphi \right)^{\frac12}\NLqmu{u}\\
&\,\leq K\left( \intS q_0(\varphi)\dd\varphi \right)^{\frac12}\NLqmu{u} = K \NLqmu{u}\, . 
\end{split}
\end{equation}
So we have for all $\gtta, \go$ (recall that $Z_0$ is
the normalization constant in
\eqref{eq:defstationarysolution nodisorder}):
\begin{equation}
\begin{split}
 |\cB u(\gtta, \go)|\, \leq\, &  \left(4 K\Ninf{\gep(q)}+\gd|\go|\right)|u|+ 2K \Ninf{\gep(q)}\NLqmu{u}\\ &+
Z_{0}^{-1}\left(4 K
\Ninf{\gep(q)} + \gd|\go|\right) \left( \intS \frac{|u|^2}{q_0} \right)^{\frac 12}\, .
\end{split}
\end{equation}
Hence, inequality \eqref{eq:estim1Bu} is true for the following choice of $c^{(1)}_{K, \gd}$ (recall that $\gep_{K, \gd}$ is defined in
\eqref{eq:estimdeltaK}):
\begin{equation}
 \label{eq:defc1}
c^{(1)}_{K, \gd}\, :=\,  \left( 6\left(4 K\gep_{K, \gd}+\gd\right)^2 +12 K^2 Z_{0}^2 \gep_{K, \gd}^2 \right)^{\frac
12}\, .
\end{equation}

\begin{rem}\rm
 Note that, thanks to Lemma \ref{lem:estimdeltaq}, one has that $c^{(1)}_{K, \gd}=O(\gd)$ as $\gd\rightarrow0$.
\end{rem}

In order to complete the proof of the inequality \eqref{eq:BAboundedmu}, it suffices to prove that there exist constants
$c^{(2)}_K$
and $c^{(3)}_K$, only depending on $K$ such that, for all $u$:

\begin{equation}
 \label{eq:estim2Bu}
\NLqmu{u} \, \leq\,  c^{(2)}_K \Nsqmu{Au} + c^{(3)}_K \Nsqmu{u}\, .
\end{equation}

The rest of this first of the proof is devoted to find explicit expressions of $c^{(2)}_K$ and $c^{(3)}_K$, and is based on an
interpolation argument.  

For all integer $n>1$, one can compute the linear operator $f\mapsto f'$ in terms of a sum of two integral operators,
namely:
\begin{equation}
\label{eq:identuprimemu}
f' \, =\,  \In(f'') + \Jn(f)\, , 
\end{equation}
where $\In:f\mapsto \int_{0}^{2\pi} i_n(\gtta, \gp)f(\gp)\dd\gp$ (resp.
$\Jn:f\mapsto \int_{0}^{2\pi} j_n(\gtta, \gp)f(\gp)\dd\gp$) is the integral operator whose kernel $i_n(\gtta,
\gp)$ (resp. $j_n(\gtta, \gp)$) is defined by:
\begin{equation}\left\{\begin{array}{lll}
   i_n(\gtta, \gp)\, :=\,  \frac{\gp^{n+1}}{2\pi\gtta^{n}}\, ,& j_n(\gtta, \gp)\, :=\, -\frac{n(n+1)\gp^{n-1}}{2\pi\gtta^n}\, ,&
0\, \leq\,  \gp\, <\, \gtta\, \leq\,  2\pi\, ,\\[10pt]
i_n(\gtta, \gp)\, :=\,  \frac{-(2\pi-\gp)^{n+1}}{2\pi(2\pi-\gtta)^n}\ , & j_n(\gtta, \gp)\, :=\,  \frac{n(n+1)
(2\pi-\gp)^{n-1}}{2\pi (2\pi-\gtta)^n}\ , & 0\, \leq\,  \gtta\, <\, \gp\, \leq\,  2\pi\, .
  \end{array}\right.\end{equation}
Equality \eqref{eq:identuprimemu} can be easily verified by integrations by parts. Since,
\begin{equation}\left\{\begin{array}{ll}
  \int_{0}^{2\pi} \left|i_n(\gtta, \gp)\right|\dd\gp \, \leq\,  \frac{2\pi}{n+2},& \int_{0}^{2\pi} \left|i_n(\gtta,
\gp)\right|\dd\gtta\, \leq\,  \frac{2\pi}{n-1}\, ,\\[10pt]
  \int_{0}^{2\pi} \left|j_n(\gtta, \gp)\right|\dd\gp \, \leq\,  \frac{n+1}{\pi}\, ,& \int_{0}^{2\pi} \left|j_n(\gtta,
\gp)\right|\dd\gtta\, \leq\,  \frac{n(n+1)}{\pi(n-1)}\, ,
  \end{array}\right.\end{equation}
we see (cf. \cite[p.143-144]{Kato1995}) that $\In$ and $\Jn$ are bounded operators on $L^2({\bbS })$, namely:
\begin{equation}\Vert{\In}\Vert\, \leq\,  \frac{2\pi}{n-1},\quad \Vert{\Jn}\Vert\, \leq\,  \frac{n(n+1)}{\pi(n-1)}\, .\end{equation}
So, applying relation \eqref{eq:identuprimemu} for $f=\cU$ we get, for $\mu$-almost every $\go$:
\begin{equation}
\left( \intS |u(\gtta, \go)|^2 \dd\gtta \right)^{\frac 12} \leq\frac{2\pi}{n-1}\left( \intS |u'(\gtta, \go)|^2 \dd\gtta
\right)^{\frac 12} +
\frac{n(n+1)}{\pi(n-1)}\left(\intS|\cU(\gtta, \go)|^2 \dd\gtta \right)^{\frac 12}. 
\end{equation}

This gives
\begin{equation}
\label{eq:estuUmu} 
\NLmu{u}\, \leq\,\frac{2\pi}{n-1}\NLmu{u'} +\frac{n(n+1)}{\pi(n-1)}\NLmu{\cU}\, .
\end{equation}
Since $\NLqmu{\cU}= \Nsqmu{u}$, it only remains to control $\NLqmu{u'}$ with $\Nsqmu{Au}$:
like for the beginning of this proof for the operator $B$, we have $\Nsqmu{Au}=  \NLqmu{\cA u}$, where $\cA u$ is the
appropriate primitive of $Au$:
\begin{align}
\cA u(\gtta, \go)&\, :=\, {} {\frac 12} u'(\gtta, \go) -\left( u(\gtta, \go) (J \ast q_0) + q_0(\gtta) \langle J
\ast u\rangle_\mu(\gtta)\right)\nonumber\\ 
&+{} \left( \intS \frac{1}{q_0} \right)^{-1}\left( \intS \left\{\frac{u(\gtta, \go) (J \ast
q_0)}{q_0(\gtta)} + {\frac 12} u(\gtta, \go) \partial_\gtta\left( \frac{1}{q_0(\gtta)}
\right)\right\}\dd\gtta\right)\, .\label{eq:primitiveAu}
\end{align}
Using inequalities $|\langle J\ast u\rangle|_{\mu}(\cdot)\leq K\sqrt{\pi}\NLmu{u}$, and $\intS \frac{|u(\cdot, \go)|}{q_0}\leq
Z_0^{\frac12}e^{Kr_0}\left( \intS |u(\cdot, \go)^2|\right)^{\frac12}$, an easy calculation shows that:
\begin{equation}
\label{eq:estimuprimeAu1}
 |u'(\cdot, \go)|\, \leq\, 2|\cA u(\cdot, \go)| + 2Kr_0|u(\cdot, \go)|+2\sqrt{\pi}K q_0(\cdot)\NLmu{u} +
\frac{4Kr_0}{Z_0^\frac12}e^{Kr_0}\left( \intS |u(\cdot, \go)^2|\right)^{\frac12}\, ,
\end{equation}
and thus,
\begin{equation}
\label{eq:estimuprimeAu}
 \NLmu{u'}\, \leq\, 4 \NLmu{\cA u} + 4K \left( r_0^2+ \pi Z_0^{-1}e^{2Kr_0}(1+8r_0^2) \right)^{\frac 12}\NLmu{u}\, ,
\end{equation}
and by
putting  \eqref{eq:estuUmu} and \eqref{eq:estimuprimeAu} together we obtain
\begin{equation}
\begin{split}
\NLmu{u} \, \leq\,&  \frac{8\pi}{n-1}\NLmu{\cA u} + \frac{2\pi}{n-1}4K \left( r_0^2+ \pi Z_0^{-1}e^{2Kr_0}(1+8r_0^2) \right)^{\frac 12}\NLmu{u}\\
&+\frac{n(n+1)}{\pi(n-1)}\Nsqmu{u}\, .
\end{split}
\end{equation}
Let us choose the integer $n=\left\lfloor 16\pi K\left( r_0^2+ \pi Z_0^{-1}e^{2Kr_0}(1+8r_0^2) \right)^{\frac 12} +1\right\rfloor$ so that
\begin{equation}\frac{2\pi}{n-1}4K \left( r_0^2+ \pi Z_0^{-1}e^{2Kr_0}(1+8r_0^2) \right)^{\frac 12}\, \leq\,  {\frac 12}\, .\end{equation} In
this
case, we
obtain:
\begin{align}
 \NLqmu{u} &\, \leq\,  \frac{e^{2Kr_0}}{4K\left( r_0^2+ \pi Z_0^{-1}e^{2Kr_0}(1+8r_0^2) \right)^{\frac 12}}\Nsqmu{Au}\nonumber\\&+
\frac{e^{2Kr_0}\left(16 K\left(r_0^2+ \pi Z_0^{-1}e^{2Kr_0}(1+8r_0^2)\right)^{\frac 12}
+3\right)^2}{16\pi^2 K\left(r_0^2+ \pi Z_0^{-1}e^{2Kr_0}(1+8 r_0^2)\right)^{\frac 12}}\Nsqmu{u}\, ,
\end{align}
which is precisely the inequality \eqref{eq:estim2Bu} we wanted to prove. Inequalities \eqref{eq:estim1Bu} and
\eqref{eq:estim2Bu} give the result, for $a_{K, \gd}:= c^{(1)}_{K, \gd} \cdot c_K^{(3)}$ and $b_{K, \gd}:= c^{(1)}_{K, \gd} \cdot
c_K^{(2)}$.\qedhere
\end{proof}

\begin{proposition}
 \label{prop:Lcompactresolvent}
For all $K>1$, there exists $\gd_{3}(K)>0$ such that for all $0<\gd\leq\gd_{3}(K)$, the operator $L^\go_{{q}}$ is closable. In
that case, its closure has the same domain as the closure of $A$.
\end{proposition}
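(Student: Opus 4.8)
The plan is to read Proposition~\ref{prop:Lcompactresolvent} as an instance of the classical perturbation theory of relatively bounded operators, using the self-adjointness of $A$ from Proposition~\ref{th:spectral gap A} and the $A$-boundedness of $B$ from Proposition~\ref{prop:BAboundedmu}. Throughout, write $\bar A$ for the self-adjoint extension of $A$ produced by Proposition~\ref{th:spectral gap A}; it is a closed operator on its domain $D(\bar A)$, and, since $A$ is essentially self-adjoint, $\cD(A)$ is a core for $\bar A$. By definition $L^\go_{{q}}$ is the operator $A+B$ with initial domain $\cD(A)$, where $B$ is the first-order operator \eqref{eq:defBmu}.

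First I would fix $K>1$ and invoke Proposition~\ref{prop:BAboundedmu}: there are constants $a_{K,\gd}$ and $b_{K,\gd}$, with $b_{K,\gd}=O(\gd)$ as $\gd\searrow 0$, such that $\Nsqmu{Bu}\le a_{K,\gd}\Nsqmu{u}+b_{K,\gd}\Nsqmu{Au}$ for every $u\in D(\bar A)$. Choose $\gd_3(K)>0$ small enough that $b_{K,\gd}<1$ for all $\gd\le\gd_3(K)$ (shrinking $\gd_3$ further if needed so that the disordered stationary solution $q$, and hence $\gep(q)$, is well defined, cf. Lemma~\ref{th: Psimu concave}). For such $\gd$, $B$ is $\bar A$-bounded with relative bound strictly less than $1$; in particular $\bar A+B$ is a well-defined operator on $D(\bar A)$.

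Then I would apply the standard stability-of-closedness theorem (e.g. \cite[Th.~IV.1.1]{Kato1995}): a relatively bounded perturbation with relative bound $<1$ of a closed operator is again closed, so $\bar A+B$ is closed on $D(\bar A)$. Moreover the relative bound being $<1$ makes the graph norm $u\mapsto\Nsqmu{u}+\Nsqmu{(\bar A+B)u}$ equivalent to $u\mapsto\Nsqmu{u}+\Nsqmu{\bar A u}$ on $D(\bar A)$, so every core for $\bar A$ is a core for $\bar A+B$. Consequently $L^\go_{{q}}$, being the restriction of the closed operator $\bar A+B$ to $\cD(A)$, is closable; and since $\cD(A)$ is a core for $\bar A$, hence for $\bar A+B$, the closure of $L^\go_{{q}}$ equals $\bar A+B$, whose domain is $D(\bar A)$, the domain of the closure of $A$. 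This is exactly the claim.

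I do not expect a genuine analytic obstacle here: the content is entirely contained in Proposition~\ref{prop:BAboundedmu}. The one point requiring care is bookkeeping — ensuring that the relative bound really is $<1$ for $\gd\le\gd_3(K)$, which is precisely why the $O(\gd)$ decay of $b_{K,\gd}$ matters, and (trivially) checking that $B$ maps $D(\bar A)$ into $H^{-1}_{{q_0},\mu}$, which is immediate from the $A$-boundedness inequality itself. No spectral input is needed at this stage; the resolvent estimates \eqref{eq:estim R lambda alpha without gep} and the cone/analytic-semigroup properties enter only in the subsequent analysis of $\sigma(L^\go_{{q}})$.
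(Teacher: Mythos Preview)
Your argument is correct and matches the paper's own proof essentially line for line: choose $\gd_3(K)$ so that $b_{K,\gd}<1$, invoke the $A$-boundedness from Proposition~\ref{prop:BAboundedmu}, and apply Kato's Theorem~IV.1.1. The extra details you spell out about cores and graph-norm equivalence are implicit in that theorem and simply make explicit what the paper leaves to the reference.
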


\begin{proof}
Let us choose $\gd_3(K)>0$ so that
\begin{equation}
\label{eq:condomega37}
b_{K, \gd_3(K)}<1
\end{equation}
where $b_{K, \gd}$ is the constant introduced  in \eqref{eq:BAboundedmu}, then, for
all $0<\gd\leq\gd_3(K)$, the operator $B$ is $A$-bounded with $A$-bound strictly lower than $1$. The result is then a
consequence of Th. IV-1.1, p.190 in \cite{Kato1995}.
\end{proof}

\subsubsection{The spectrum of $L^\go_{{q}}$}

We divide our study into two parts: the determination of the position of the spectrum within a sector and its position near $0$.

\medskip
\subsubsection{Position of the spectrum away from $0$} We prove mainly that the perturbed operator $L^\go_{{q}}$ still generates an analytic
semigroup of operators on an appropriate sector. An immediate corollary is the fact that the spectrum lies in a cone whose vertex is zero.

We know (Proposition \ref{prop:Apositivemu}) that for all $0<\alpha<\frac{\pi}{2}$, $A$ generates an analytic 
semigroup of operators on $\Delta_\alpha:=\ens{\lambda\in\bC}{|\arg(\lambda)|<\alpha}$.

\begin{proposition}
 \label{prop:semigroupAB}
For all $K>1$, $0<\alpha<\frac{\pi}{2}$ and $\gep>0$, there exists $\gd_4>0$ (depending on $\alpha$, $K$ and $\gep$) such that for
all $0<\gd<\gd_{4}$, the spectrum of $L^\go_{{q}}= A+B$ lies within $\Theta_{\gep,
\alpha}:= \ens{\lambda\in\bC}{\frac{\pi}{2} + \alpha\leq
\arg(\lambda)\leq \frac{3\pi}{2}-\alpha}\cup \ens{\lambda\in\bC}{|\lambda|\leq \gep}$. Moreover, there exists $\alpha'\in(0, \frac\pi2)$ such
that the operator $L^\go_{{q}}$ still generates an analytic semigroup on $\Delta_{\alpha'}$.
\end{proposition}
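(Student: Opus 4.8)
The plan is to treat $L^\go_{{q}}=A+B$ as a perturbation of the self-adjoint operator $A$, which by Proposition~\ref{prop:Apositivemu} already generates an analytic semigroup on $\Delta_\alpha$, and to exploit that $B$ is $A$-bounded with constants $a_{K,\gd},b_{K,\gd}=O(\gd)$ (Proposition~\ref{prop:BAboundedmu}). Throughout I would take $\gd_4\leq\gd_3(K)$, so that by Proposition~\ref{prop:Lcompactresolvent} the operator $L^\go_{{q}}$ is closed with domain equal to that of the self-adjoint extension of $A$.

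The first step is a perturbative resolvent estimate. For $\lambda\in\Sigma_\alpha$ the resolvent $R(\lambda,A)$ maps $H^{-1}_{{q_0},\mu}$ into $\cD(A)$, so combining \eqref{eq:BAboundedmu}, the identity $AR(\lambda,A)=\lambda R(\lambda,A)-I$, and the bound \eqref{eq:estim R lambda alpha without gep} gives
\[
\Nsqmu{BR(\lambda,A)}\;\leq\;\frac{a_{K,\gd}}{(1-\sin\alpha)\,|\lambda|}\,+\,b_{K,\gd}\Big(\frac{1}{1-\sin\alpha}+1\Big)\, .
\]
Since $a_{K,\gd}$ and $b_{K,\gd}$ are $O(\gd)$, there is $\gd_4=\gd_4(\alpha,K,\gep)\in(0,\gd_3(K)]$ such that for $0<\gd<\gd_4$ the right-hand side is $\leq\tfrac12$ whenever $|\lambda|\geq\gep$. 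For such $\lambda$ the operator $I-BR(\lambda,A)$ is boundedly invertible with inverse of norm $\leq 2$, and since $\lambda-L^\go_{{q}}=(I-BR(\lambda,A))(\lambda-A)$ on $\cD(A)$, it follows that $\lambda\in\rho(L^\go_{{q}})$ and $\Nsqmu{R(\lambda,L^\go_{{q}})}\leq \frac{2}{(1-\sin\alpha)\,|\lambda|}$. As the complement of $\Theta_{\gep,\alpha}$ is exactly $\Sigma_\alpha\cap\{|\lambda|>\gep\}$, this already proves $\sigma(L^\go_{{q}})\subseteq\Theta_{\gep,\alpha}$.

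For the analytic-semigroup assertion I would pass to the shifted operator $L^\go_{{q}}-\gep_0$, for a fixed real $\gep_0>\gep/\cos\alpha$: since $\gep_0\notin\Theta_{\gep,\alpha}\supseteq\sigma(L^\go_{{q}})$ we have $\gep_0\in\rho(L^\go_{{q}})$, i.e.\ $0\in\rho(L^\go_{{q}}-\gep_0)$, which is the hypothesis required by Proposition~\ref{prop:pazysemgps}. Set $\alpha':=\alpha$. Using that a sector of half-angle larger than $\pi/2$ containing the positive reals is stable under translation by a positive real, for every $\lambda\in\Sigma_{\alpha'}\setminus\{0\}$ one has $\lambda+\gep_0\in\Sigma_\alpha$ with $|\lambda+\gep_0|\geq\gep_0\cos\alpha>\gep$, so $\lambda+\gep_0\in\rho(L^\go_{{q}})$ and, by the first step,
\[
\Nsqmu{R(\lambda,L^\go_{{q}}-\gep_0)}=\Nsqmu{R(\lambda+\gep_0,L^\go_{{q}})}\;\leq\;\frac{2}{(1-\sin\alpha)\,|\lambda+\gep_0|}\;\leq\;\frac{M}{|\lambda|}\, ,
\]
for a constant $M=M(\alpha,\gep_0)$ (use $|\lambda+\gep_0|\geq\gep_0\cos\alpha$ when $|\lambda|\leq2\gep_0$ and $|\lambda+\gep_0|\geq|\lambda|/2$ when $|\lambda|\geq2\gep_0$). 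Proposition~\ref{prop:pazysemgps} then yields that $L^\go_{{q}}-\gep_0$ generates an analytic semigroup $S(\cdot)$ on $\Delta_{\alpha'}$, uniformly bounded on closed subsectors, and $z\mapsto e^{\gep_0 z}S(z)$ is an analytic semigroup on $\Delta_{\alpha'}$ with generator $L^\go_{{q}}$. (Equivalently, one may invoke directly the perturbation theorem for generators of analytic semigroups by relatively bounded operators of small relative bound, valid here since $b_{K,\gd}=O(\gd)$.)

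The main obstacle is the uniformity in the first step: $\Nsqmu{BR(\lambda,A)}$ must be forced below $1$ over the whole \emph{unbounded} region $\Sigma_\alpha\cap\{|\lambda|\geq\gep\}$, and the term coming from $b_{K,\gd}$ does not decay as $|\lambda|\to\infty$ — this is precisely why it is essential that the $A$-bound $b_{K,\gd}$, not only $a_{K,\gd}$, be $O(\gd)$ (which ultimately rests on Lemma~\ref{lem:estimdeltaq}). The remaining bookkeeping, in particular the shift by $\gep_0$, is forced only because $0\in\sigma(L^\go_{{q}})$, whereas Proposition~\ref{prop:pazysemgps} demands $0$ in the resolvent set.
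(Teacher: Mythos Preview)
Your proof is correct and follows essentially the same route as the paper: the perturbative bound on $BR(\lambda,A)$ via \eqref{eq:BAboundedmu} and \eqref{eq:estim R lambda alpha without gep}, the Neumann-series inversion giving $\Nsqmu{R(\lambda,L^\go_{{q}})}\leq \frac{2}{(1-\sin\alpha)|\lambda|}$ on $\Sigma_\alpha\cap\{|\lambda|>\gep\}$, and then a real shift to place $0$ in the resolvent set before invoking Proposition~\ref{prop:pazysemgps}. The only difference is cosmetic: the paper shifts by $2\gep$, obtains the resolvent bound first on $\{\Re\lambda>0\}$, and then repeats the Taylor-expansion argument of Proposition~\ref{prop:Apositivemu} to reach a strictly smaller sector with $\alpha'=\tfrac12\arctan\!\big(\tfrac{1-\sin\alpha}{2}\big)$, whereas your geometric observation $|\lambda+\gep_0|\geq\gep_0\cos\alpha$ for $\lambda\in\Sigma_\alpha$ lets you take $\alpha'=\alpha$ directly.
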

% /////////////////////////////////////////////////////////////
% \begin{rem}
%  The perturbation by $-\gep$ is here to ensure that $0\in\rho(A_\gep)$ in order to properly apply Proposition
% \ref{prop:pazysemgps}.
% But of course if $L^\go_{{q}, \gep}$ generates an analytic semigroup of operators, so does
% $L^\go_{{q}}$.
% Nevertheless, the localization of the spectrum is, strictly speaking, dependent on $\gep$. But we will see in
% \S\ref{subsubsec:loczero} that there is no eigenvalue within the small part $\Theta_{\gep, \alpha}^{+}$ of $\Theta_{\gep,
% \alpha}$ which has positive real part.
% \end{rem}
% ////////////////////////////////////////////////////////////
\begin{proof}[Proof of Proposition \ref{prop:semigroupAB}]
Let $0<\alpha<\frac{\pi}{2}$ be fixed.
% We know that $0\in\rho(A_{\gep'})$ and that for every $\lambda\in
% \Sigma_\alpha= \ens{\lambda\in\bC}{|\arg(\lambda)|<\frac{\pi}{2} + \alpha}$ we have the estimate\begin{equation}\Nsqmu{R(\lambda,
% A_{\gep'})}
% \, \leq\, 
% \frac{1}{|\lambda|}\cdot \frac{1}{1-\sin(\alpha)}\, .\end{equation}
% Note in particular (recall \eqref{eq:estim R lambda alpha without gep}) that one can choose $\gep'=0$ in the latter inequality.
Following \eqref{eq:BAboundedmu} and using \eqref{eq:estim R lambda alpha without gep}, one can easily deduce an estimate on the bounded operator
$BR(\lambda, A)$, for $\lambda\in\Sigma_\alpha$:
\begin{equation}
\begin{split}
 \Nsqmu{BR(\lambda, A)u} &\,\leq\, a_{K,\gd}\Nsqmu{R(\lambda, A)u} + b_{K,
\gd}\Nsqmu{A R(\lambda,A)u} \\
&\, \leq\, a_{K, \gd}\frac{1}{(1-\sin(\alpha))|\lambda|}\Nsqmu{u}\\ &
\ \ \ \ \ \ 
+ b_{K, \gd}\left(1+
\frac{1}{1-\sin(\alpha)}\right) \Nsqmu{u}\, .
\end{split}
\end{equation}
% we have for all $u$,\begin{equation}\Nsqmu{Bu} \, \leq\,  \left( a_{K, \gd} +\gep' b_{K, \gd}
% \right) \Nsqmu{u}+ b_{K, \gd} \Nsqmu{A_{\gep'}(u)}\, .\end{equation} 
Let us fix $\gep>0$ and choose $\gd$ so that:
\begin{equation}
\label{eq:condomega00}
\max\left(4b_{K, \gd}\left(\frac{1}{1-\sin(\alpha)}+1\right) , \frac{4 a_{K, \gd}}{(1-\sin(\alpha))\gep} \right)  \,
\leq\,  1\, .
\end{equation}

Then, for $\lambda\in\Sigma_\alpha$ such that $|\lambda|>\gep\geq \frac{4a_{K, \gd}}{1-\sin(\alpha)}$, we have
\begin{equation}\Nsqmu{BR(\lambda, A)u} \, \leq\,  \frac{1}{2}\Nsqmu{u}\, .\end{equation}
In particular, $1 - BR(\lambda, A)$ is invertible with $\Nsqmu{\left( 1 - BR(\lambda, A) \right)^{-1}}\leq 2$.
A direct calculation shows that \begin{equation}\left( \lambda - (A+B) \right)^{-1} \, =\,  R(\lambda, A) \left( 1 -
BR(\lambda, A)\right)^{-1}\, .\end{equation} One deduces the following estimates on the resolvent: for $\lambda\in\Sigma_\alpha$,
$|\lambda|>\gep$,
\begin{equation}\label{eq:estim resolvent Lq gep prime}\Nsqmu{R(\lambda, L^\go_{{q}})}\, \leq\,  \frac{2}{(1-\sin(\alpha))|\lambda|}\,
.\end{equation}
Estimate \eqref{eq:estim resolvent Lq gep prime} has two consequences: firstly, one deduces immediately that the spectrum $\sigma(L^\go_{q})$ of
$L^\go_{q}$ is contained in $\Theta_{\gep,\alpha}$:
\begin{equation}
\label{eq:subset rho Lq}
 \sigma(L_q^{\go})\subseteq\ens{\lambda\in\bC}{\frac{\pi}{2} + \alpha\leq \arg(\lambda)\leq \frac{3\pi}{2}-\alpha}\cup
\ens{\lambda\in\bC}{|\lambda|\leq \gep}.
\end{equation}
Secondly, \eqref{eq:estim resolvent Lq gep prime} entails that $L^\go_{q}$ generates an analytic semigroup of operators on an appropriate
sector. Indeed, if one denotes by $L^{\go}_{q, \gep}:= L^{\go}_{q} -\gep$, one deduces from \eqref{eq:subset rho Lq} that
$0\in\rho(L^{\go}_{q, 2\gep})$ and that for all $\lambda\in\bC$ with $\Re(\lambda)>0$ (in particular, $|\lambda|<|\lambda+2\gep|$)
\begin{align}
 \Nsqmu{R(\lambda, L^\go_{{q,2\gep}})}\, &=\, \Nsqmu{R(\lambda+2\gep, L^\go_{{q}})}\, \leq\, \frac{2}{(1-\sin(\alpha))|\lambda+2\gep|}\,
,\nonumber\\
&\, \leq\, \frac{2}{(1-\sin(\alpha))|\lambda|}\, .
\end{align}
Hence, using the same arguments of Taylor expansion as in the proof of Proposition \ref{prop:Apositivemu} and applying Proposition
\ref{prop:pazysemgps}, one easily sees that $L^\go_{{q,2\gep}}$ generates an analytic semigroup in a (a priori) smaller sector
$\Delta_{\alpha'}$, where $\alpha'\in(0, \frac\pi2)$ can be chosen as $\alpha':= \frac12 \arctan\left( \frac{1-\sin(\alpha)}{2} \right)$. But if
$L^\go_{{q,2\gep}}$ generates an analytic semigroup, so does $L^\go_{q}$.
\end{proof}

\subsubsection{Position of the spectrum near $0$}
\label{subsubsec:loczero}
Let us apply Proposition \ref{prop:semigroupAB} for fixed $K>1$, $\alpha\in(0, \frac\pi2)$, $\rho\in(0,1)$ and $\gep:=\rho\gap$, where we recall
that $\gap$ is the spectral gap between the eigenvalue $0$ for the non perturbed operator $A$ and the rest of the spectrum
$\sigma(A)\smallsetminus\{0\}$.
Let $\Theta_{\gep, \alpha}^{+}:=  \ens{\lambda\in\Theta_{\gep, \alpha}}{\Re(\lambda)\geq 0}$ be the subset of
$\Theta_{\gep, \alpha}$ which lies in the positive part of the complex plane (see Fig. \ref{fig:Thetaeps}). In order to show
the linear stability, one has
to make sure that one can choose a perturbation $B$ small enough so that no eigenvalue of $A+B$ remains in the small set
$\Theta_{\gep, \alpha}^{+}$.

\begin{figure}[!ht]
 \centering
\includegraphics[width=0.5\textwidth]{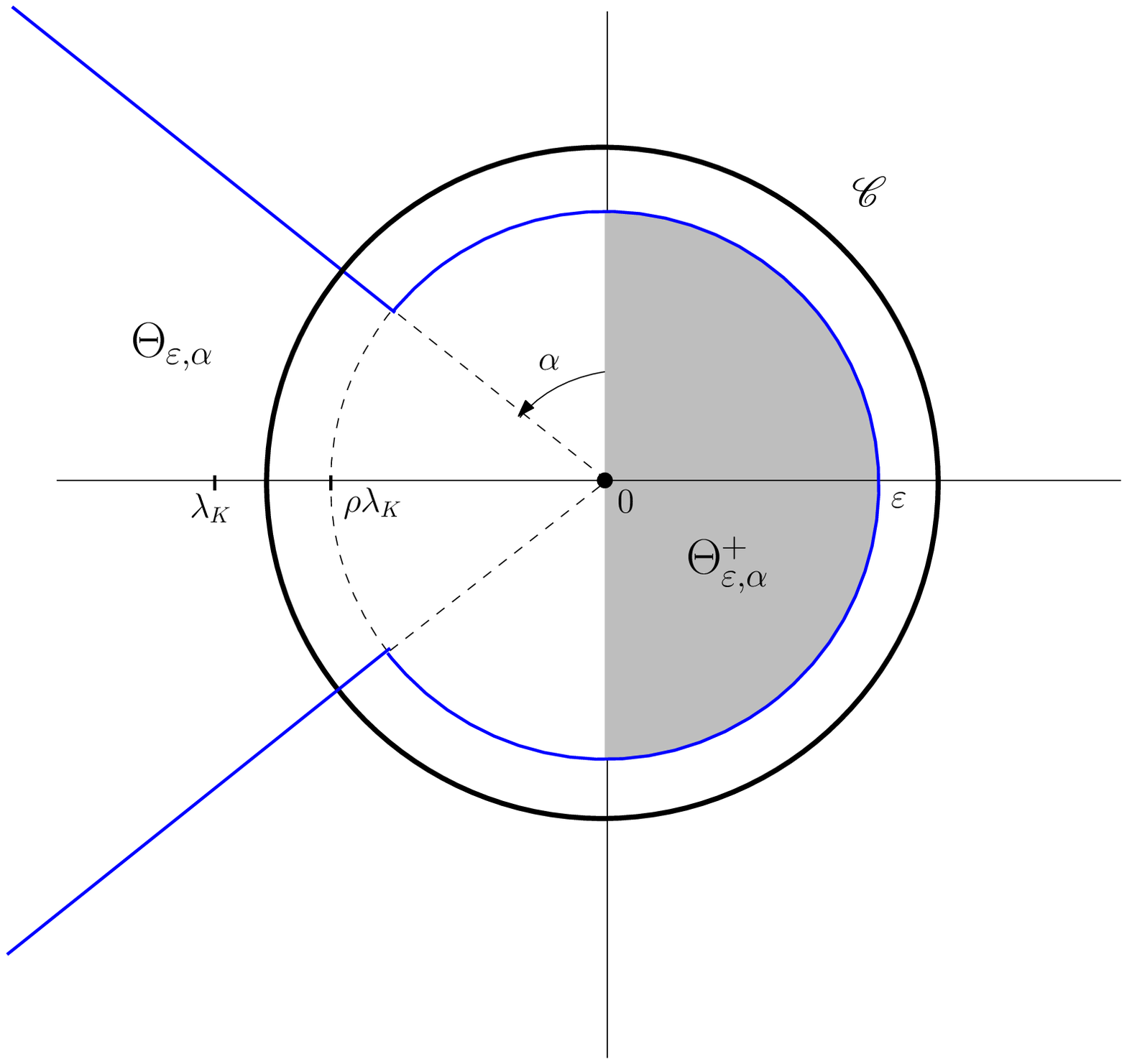}
\caption{The set $\Theta_{\gep,
\alpha}$.}
\label{fig:Thetaeps}
\end{figure}

Since $\gap>0$, one can separate $0$ from the rest of the spectrum of $A$ by a circle $\mathscr{C}$ centered in $0$ with radius
$(\frac{\rho+1}{2})\gap$. The appropriate choice of $\gep$ ensures that the
interior of the disk delimited by $\mathscr{C}$ contains $\Theta_{\gep, \alpha}^{+}$ (see Figure~\ref{fig:Thetaeps}).

The main argument is the following: by construction of $\mathscr{C}$, $0$ is the only eigenvalue (with multiplicity $1$) of the
non-perturbed operator $A$ lying in the interior of $\mathscr{C}$. A principle of local continuity of eigenvalues shows that, while
adding a sufficiently small perturbation $B$ to $A$, the interior of $\mathscr{C}$ still contains exactly one eigenvalue (which is \emph{a
priori}
close but not equal to $0$) with the same multiplicity.

But we already know that for the perturbed operator $L^\go_{{q}}=A+B$, $0$ is always an eigenvalue (since $L^\go_{{q}}q'=0$). One
can therefore
conclude that, by uniqueness, $0$ is the only element of the spectrum of $L^\go_{{q}}$ within $\mathscr{C}$, and is an eigenvalue with
multiplicity $1$. In particular, there is no element of the spectrum in the positive part of the complex plane.

In order to quantify the appropriate size of the perturbation $B$, one has to have explicit estimates on the resolvent $R(\lambda,
A)$ on the
circle $\mathscr{C}$. 

\begin{lemma}
 \label{lem:estimRGamma}
There exists some explicit constant $c_{\mathscr{C}}=c_{\mathscr{C}}(K, \rho)$ such that for all $\lambda\in\mathscr{C}$,
\begin{align}
 \Nsqmu{R(\lambda, A)} &\, \leq\,  c_{\mathscr{C}}\, ,\label{eq:estimRcercle}\\
\Nsqmu{AR(\lambda, A)} &\, \leq\,  1+ \left(\frac{1+\rho}{2}\right)\gap\cdot c_{\mathscr{C}}\, .\label{eq:estimARcercle}
\end{align}
One can choose $c_{\mathscr{C}}$ as $\frac{1}{\gap} \max\left( \frac{2}{\rho+1},\, \frac{2}{1-\rho}\right):= \frac{\ell(\rho)}{\gap}$.
\end{lemma}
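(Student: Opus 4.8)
The plan is to deduce both bounds directly from the fact that, by Proposition~\ref{th:spectral gap A}, the self-adjoint extension of $A$ is self-adjoint on $H^{-1}_{{q_0},\mu}$, with $\sigma(A)\subseteq(-\infty,0]$, $0$ a simple eigenvalue, and --- since $\gap$ is by definition the distance from $0$ to the rest of the spectrum and $\sigma(A)$ is real --- $\sigma(A)\smallsetminus\{0\}\subseteq(-\infty,-\gap]$. For a self-adjoint operator the spectral theorem gives $\Nsqmu{R(\lambda,A)}=1/\mathrm{dist}(\lambda,\sigma(A))$ for every $\lambda\in\rho(A)$, so the lemma reduces to an elementary distance estimate for $\lambda$ on the circle $\mathscr{C}=\{\,|\lambda|=\tfrac{1+\rho}{2}\gap\,\}$.

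For \eqref{eq:estimRcercle} I would argue as follows: since $\rho\in(0,1)$, on $\mathscr{C}$ one has $|\lambda|=\tfrac{1+\rho}{2}\gap<\gap$, hence $\Re(\lambda)\geq-|\lambda|>-\gap$, which simultaneously shows that $\mathscr{C}\subseteq\rho(A)$ and that $\mathrm{dist}(\lambda,(-\infty,-\gap])\geq\Re(\lambda)+\gap\geq\gap-|\lambda|=\tfrac{1-\rho}{2}\gap$. Combining this with $\mathrm{dist}(\lambda,\{0\})=|\lambda|=\tfrac{1+\rho}{2}\gap$ and $\sigma(A)\smallsetminus\{0\}\subseteq(-\infty,-\gap]$ gives
\begin{equation}
\Nsqmu{R(\lambda,A)}\,=\,\max\!\left(\frac{1}{|\lambda|},\,\frac{1}{\mathrm{dist}(\lambda,\sigma(A)\smallsetminus\{0\})}\right)\,\leq\,\frac{1}{\gap}\max\!\left(\frac{2}{1+\rho},\,\frac{2}{1-\rho}\right)\,=:\,c_{\mathscr{C}}\,=\,\frac{\ell(\rho)}{\gap}\,.
\end{equation}

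For \eqref{eq:estimARcercle} I would use the algebraic identity $AR(\lambda,A)=\lambda R(\lambda,A)-1$, obtained by writing $A=\lambda-(\lambda-A)$ and multiplying on the right by $R(\lambda,A)$; taking $\Nsqmu{\cdot}$ and inserting \eqref{eq:estimRcercle} together with $|\lambda|=\tfrac{1+\rho}{2}\gap$ yields $\Nsqmu{AR(\lambda,A)}\leq 1+|\lambda|\,\Nsqmu{R(\lambda,A)}\leq 1+\bigl(\tfrac{1+\rho}{2}\bigr)\gap\,c_{\mathscr{C}}$, which is the claim. There is no genuine obstacle here: the statement is a one-line consequence of the spectral theorem for the self-adjoint operator $A$ and of the spectral gap from Proposition~\ref{th:spectral gap A}; the only mild care needed is to check that $\mathscr{C}$ avoids $\sigma(A)$ and stays at distance at least $\tfrac{1-\rho}{2}\gap$ from the portion of the spectrum lying in $(-\infty,-\gap]$, both of which follow from $\tfrac{1+\rho}{2}<1$.
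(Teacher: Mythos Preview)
Your argument is correct and follows essentially the same route as the paper: both proofs invoke the spectral theorem for the self-adjoint operator $A$ to identify $\Nsqmu{R(\lambda,A)}$ with $\sup_{\mu\in\sigma(A)}|\lambda-\mu|^{-1}$ and then bound this by the elementary distance estimate on $\mathscr{C}$. You have spelled out the distance computation and the identity $AR(\lambda,A)=\lambda R(\lambda,A)-1$ for \eqref{eq:estimARcercle} more explicitly than the paper, which simply calls the latter ``straightforward'', but the substance is identical.
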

\begin{proof}[Proof of Lemma \ref{lem:estimRGamma}]
Applying the spectral theorem (see \cite[Th. 3, p.1192]{cf:Dunford}) to the essentially self-adjoint operator $A$, there exists a
spectral measure $E$ vanishing on the complementary of the spectrum of $A$ such that $A=\intR \lambda \dd E(\lambda)$. In that
extent, one has for any $\zeta\in\mathscr{C}$
\begin{equation}
 R(\zeta, A)\, =\, \intR \frac{\dd E(\lambda)}{\lambda -\zeta}\, .
\end{equation}
In particular, for $\zeta\in\mathscr{C}$
\begin{equation}
 \Nsqmu{R(\zeta, A)} \leq \sup_{\lambda\in \sigma(A)} \frac{1}{|\lambda-\zeta|}\leq \frac{\ell(\rho)}{\gap}\, .
\end{equation}

The estimation \eqref{eq:estimARcercle} is straightforward.
\end{proof}

We are now in position to apply our argument of local continuity of eigenvalues: Following \cite[Th III-6.17,
p.178]{Kato1995}, there exists a decomposition of the
operator $A$ according to $H^{-1}_{{q_0},\mu}= H_0 \oplus H'$ (in
the
sense that $AH_0\subset H_0$, $A H'\subset H'$ and $P \cD(A) \subset \cD(A)$,
where $P$ is the projection on $H_0$ along $H'$) in such a way that $A$ restricted to $H_0$ has spectrum
$\{0\}$ and $A$ restricted to $H'$ has spectrum $\sigma(A)\smallsetminus\{0\}$. 

Let us note that the dimension of $H_0$ is $1$, since the characteristic space of $A$ in the eigenvalue $0$ is reduced to its
kernel which is of dimension $1$.

Then, applying \cite[Th. IV-3.18, p.214]{Kato1995}, and using Proposition \ref{prop:BAboundedmu},
we find that if one chooses $\gd>0$, such that
\begin{equation}
\label{eq:condperturbGamma}
 \sup_{\lambda\in\mathscr{C}} \left(a_{K, \gd} \Nsqmu{R(\lambda, A)} + b_{K, \gd}\Nsqmu{AR(\lambda,
A)}\right)<1,
\end{equation}
then the perturbed operator $L^\go_{{q}}$ is likewise decomposed according to
$H^{-1}_{{q_0},\mu}= \tilde{H}_0\oplus \tilde{H}'$, in such a way that $\dim(H_0)= \dim(\tilde{H}_0)= 1$, and that the spectrum of
$L^\go_{{q}}$
is again separated in two parts by $\mathscr{C}$ . But we already know that the characteristic space of the
perturbed operator $L^\go_{{q}}$ according to the eigenvalue $0$ is, at least, of dimension $1$ (since $L^\go_{{q}}q'=0$).

We can conclude, that for such an $\gd>0$, $0$ is the only eigenvalue in $\mathscr{C}$ and that $\dim(\tilde{H}_0)=1$.

Applying Lemma \ref{lem:estimRGamma}, we see that condition \eqref{eq:condperturbGamma} is satisfied if we choose $\gd>0$
so that:
\begin{equation}
 \label{eq:condomega99}
a_{K, \gd} c_{\mathscr{C}} + b_{K, \gd}\left( 1 + \left( \frac{1+\rho}{2} \right)\gap c_{\mathscr{C}}\right)<1.
\end{equation}

In particular, in that case, the spectrum of $L^\go_{{q}}$ is contained in 
\begin{equation}
 \ens{\lambda\in\bbC}{\frac{\pi}{2} + \alpha\leq
\arg(\lambda)\leq \frac{3\pi}{2}-\alpha}\subseteq
\ens{z\in\bC}{\Re(z)\leq 0}\, .
\end{equation}
% where $\alpha'\in(0,\frac{\pi}{2})$ is the angle defined in Figure \ref{fig:Thetaeps}. Trigonometric considerations lead to
% explicit formulation of $\alpha'$ in terms of $\alpha$:
% \begin{equation}
%  \alpha'= \frac{\pi}{2} - 2 \arccos\left( \left( \frac{2+ \sqrt{4 - 2\cos(\alpha)^2(1+\cos(\alpha)^2)}}{4(1+\cos(\alpha)^2)}
% \right)^\frac{1}{2} \right)\, ,
% \end{equation}
% so that $\alpha'\rightarrow \frac{\pi}{2}$ as $\alpha\rightarrow\frac{\pi}{2}$.
Finally, the following proposition sums-up the sufficient conditions on $\gd$ for the conclusions of Theorem \ref{th:spectral
prop L disorder} to be satisfied:

\begin{proposition}
 \label{prop:resume conditions delta_2}

Recall the definitions of $a_{K, \gd}$ and $b_{K, \gd}$ in Proposition \ref{prop:BAboundedmu}.
% the definitions \eqref{eq:estimdeltaK} of $\gep_{K, \gd}$, \eqref{eq:defc1} of $c_{K, \gd}^{(1)}$ and
% \begin{align*}
% \gep_{K, \gd}&\, =\,  \frac{1}{\pi} e^{8\pi\gd}\left( 2+ 3e^{4\pi\gd} \right)e^{14Kr_0}(1+2\pi e^{2Kr_0}),\\
% c_{K, \gd}^{(1)}&\, =\,  \left( 6\left(2\pi K\gep_{K, \gd}+1\right)^2 +12 K^2 \left( 2\pi I_0(2Kr_0) \right)^2 \gep_{K, \gd}^2
% \right)^{\frac 12},\\
% % r(K)&:\, =\,  \min\left(\frac{\lambda_0(K)}{2}, \frac{\eta(K)e^{-4K}}{4}\right),\\
% a_{K, \gd}&\, =\, c_{K, \gd}^{(1)}\frac{e^{2Kr_0}\left(16 K\left(1+
% e^{4Kr_0}(1+8\pi)\right)^{\frac 12}
% +3\right)^2}{16\pi^2 K\left(1+ e^{4Kr_0}(1+8\pi)\right)^{\frac 12}},\\
% b_{K, \gd}&\, =\, c_{K, \gd}^{(1)}\frac{e^{2Kr_0}}{4K\left( 1+ e^{4Kr_0}(1+8\pi) \right)^{\frac 12}}.
% % d_{K}&\, =\,  \frac{\sqrt{2}+4}{\lambda_0(K)}.
% \end{align*}
If $\gd>0$ satisfies the following conditions
\begin{equation}
\label{eq:conds4}
\begin{split}
b_{K, \gd} &\, \leq\,  1\, ,\\
4b_{K, \gd}\left( \frac{1}{1-\sin(\alpha)} +1\right) &\, \leq\,  1\, ,\\
\frac{4 a_{K, \gd} }{\rho\gap\left( 1-\sin(\alpha) \right)}&\, \leq\, 1\, ,\\
a_{K, \gd}\frac{\ell(\rho)}{\gap} + b_{K, \gd} \left( 1+ \left(\frac{1+\rho}{2}\right)\ell(\rho)\right)&\, <\, 1\, .
\end{split}
\end{equation}
the conclusions of Theorem \ref{th:spectral prop L disorder} are true.
\end{proposition}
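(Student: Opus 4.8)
The plan is to observe that this proposition is essentially a bookkeeping statement: each of the four inequalities in \eqref{eq:conds4} is exactly the hypothesis needed to invoke one of the propositions of this subsection, and together they yield every clause of Theorem~\ref{th:spectral prop L disorder}. So I would first isolate the one fact that requires no smallness assumption at all: since \eqref{FKP kuramoto disorder delta} is rotation invariant, the whole circle $\{q(\cdot-\psi,\go)\}_{\psi\in\bbS}$ consists of stationary solutions, and differentiating in $\psi$ at $\psi=0$ gives $L^\go_{q}q'=0$; moreover $q'$ is smooth, hence in $\cD(A)$. Thus $0$ is always an eigenvalue of $L^\go_{q}$ with $q'$ in its eigenspace, and only simplicity and the location of the remaining spectrum are at stake.

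Then I would go through the conditions in order. The first inequality $b_{K,\gd}\leq1$ (strengthened to a strict inequality by the fourth one, since $a_{K,\gd}\geq0$ and $\ell(\rho)\geq0$) lets us apply Proposition~\ref{prop:Lcompactresolvent}: $L^\go_{q}=A+B$ is closable and its closure has the domain of the self-adjoint extension of $A$, which is the first bullet. The second and third inequalities together are precisely condition \eqref{eq:condomega00} for the fixed angle $\alpha$ and the choice $\gep:=\rho\gap$; hence Proposition~\ref{prop:semigroupAB} applies and gives both $\sigma(L^\go_{q})\subseteq\Theta_{\rho\gap,\alpha}$ and that $L^\go_{q}$ generates an analytic semigroup on the sector $\Delta_{\alpha'}$ with $\alpha'=\frac12\arctan\!\big(\frac{1-\sin\alpha}{2}\big)\in(0,\frac\pi2)$. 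This already delivers the analytic-semigroup clause and the cone clause, up to the possibility that the spectrum meets the small disk $\{|\lambda|\leq\rho\gap\}$, in particular the leftover set $\Theta_{\rho\gap,\alpha}^{+}$ lying in the closed right half-plane.

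The last step is to rule out that disk. Here I would combine the fourth inequality with Lemma~\ref{lem:estimRGamma}: on the circle $\mathscr{C}$ of radius $\frac{1+\rho}{2}\gap$ one has $\Nsqmu{R(\lambda,A)}\leq\ell(\rho)/\gap$ and $\Nsqmu{AR(\lambda,A)}\leq1+\frac{1+\rho}{2}\ell(\rho)$, so \eqref{eq:condomega99} is exactly condition \eqref{eq:condperturbGamma}. The eigenvalue-perturbation argument of Section~\ref{subsubsec:loczero}, based on \cite[Th.~III-6.17 and Th.~IV-3.18]{Kato1995} together with Proposition~\ref{prop:BAboundedmu}, then shows that $\mathscr{C}$ splits $\sigma(L^\go_{q})$ into an interior part carrying a single eigenvalue of algebraic multiplicity one and an exterior part; since $0$ is always an eigenvalue it is that unique interior point, and it is simple, which finishes the first sentence of the theorem. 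Combining this with the sector localization, the rest of the spectrum lies outside $\mathscr{C}$ and inside $\Theta_{\rho\gap,\alpha}$, hence inside the cone $C_\alpha\subseteq\{\Re z\leq0\}$, and the distance from $0$ to $\sigma(L^\go_{q})\setminus\{0\}$ is at least $\frac{1+\rho}{2}\gap>\rho\gap$.

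To conclude, I would note that Proposition~\ref{prop:BAboundedmu} gives $a_{K,\gd}=O(\gd)$ and $b_{K,\gd}=O(\gd)$ as $\gd\searrow0$ for fixed $K>1$, so all four inequalities in \eqref{eq:conds4} hold as soon as $\gd$ is small enough; this is what defines $\gd_2=\gd_2(K,\rho,\alpha)$ and completes the proof of Theorem~\ref{th:spectral prop L disorder}. I do not expect a genuine obstacle at this stage, since the analytic content was already carried out in Propositions~\ref{prop:Apositivemu}, \ref{prop:BAboundedmu}, \ref{prop:Lcompactresolvent}, \ref{prop:semigroupAB} and Lemma~\ref{lem:estimRGamma}; the only point requiring care is checking that the radius $\frac{1+\rho}{2}\gap$ is chosen so that $\mathscr{C}$ simultaneously strictly separates $0$ from $\sigma(A)\setminus\{0\}$ and encloses the leftover set $\Theta_{\rho\gap,\alpha}^{+}$, so that the two localization arguments really do combine to keep the spectrum off the right half-plane.
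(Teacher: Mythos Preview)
Your proposal is correct and takes essentially the same approach as the paper: the paper's own proof simply says to sum up conditions \eqref{eq:condomega37}, \eqref{eq:condomega00} with $\gep=\rho\gap$, and \eqref{eq:condomega99}, which is precisely the bookkeeping you carry out in detail. Your extra care in noting that the second condition already forces $b_{K,\gd}<1$ (so that Proposition~\ref{prop:Lcompactresolvent} applies strictly), and in checking that the radius $\frac{1+\rho}{2}\gap$ of $\mathscr{C}$ encloses $\Theta_{\rho\gap,\alpha}^{+}$, is a welcome expansion of what the paper leaves implicit.
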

\medskip 

\begin{proof}
 One has simply to sum-up conditions \eqref{eq:condomega37}, \eqref{eq:condomega00} with $\gep=\rho\gap$ and
\eqref{eq:condomega99}. \eqref{eq:estim gd K} can be obtained by (long) estimations on the coefficients $a_{K, \gd}$ and $b_{K, \gd}$.
\end{proof}
\medskip

\begin{rem}\rm
The conditions in Proposition~\ref{prop:resume conditions delta_2}
can be simplified. For example one can exhibit an explicit
  constant $c$ such that if $\gd$
satisfies
\begin{equation}
\label{eq:estim gd K}
\begin{split}
 \gd e^{12\pi\gd}\,\leq\, c e^{-20K\bar{r_\gd}}\max&\left(1, \left( \frac{1-\sin(\alpha)}{2-\sin(\alpha)} \right),
\frac{\rho\gap(1-\sin(\alpha))e^{-4K\bar{r_\gd}}}{K^2},\right.\\
&\left.\ \ \ \ \frac{\gap}{K^2e^{4K\bar{r_\gd}}\ell(\rho) + \gap \left( 1+ \left( \frac{1+\rho}{2} \right)\ell(\rho) \right)}\right)
\end{split}
\end{equation}
the conditions in \eqref{eq:conds4} are fulfilled. Explicit  estimates on the spectral gap $\gl_K$
can be found 
in \cite[Sec.~2.5]{cf:BGP}. 
\end{rem}

\section*{Acknowledgments}
We  are grateful to K. Pakdaman and G. Wainrib for helpful discussions.
G. G. acknowledges the support of the ANR (projects Mandy and SHEPI) and 
the support of the Petronio Fellowship Fund at the Institute for Advanced Study
(Princeton, NJ) where this work has been completed.

\begin{appendix}
 \section{Regularity in the non-linear Fokker-Planck equation}
\label{sec:appendix regularity pt with disorder}
The purpose of this section is to establish regularity properties of the solution of the non-linear
equation \eqref{eq:AR} (where we fix $\gd=1$ for simplicity). Note that this case also captures the situation where $U(\cdot, \go)\equiv
\go$ (evolution \eqref{FKP kuramoto disorder}), as well as the situation where $U(\cdot, \cdot)\equiv 0$ (evolution \eqref{FKP kuramoto disorder
without drift}). In what follows we make the assumption that $U$ is bounded and that for all $\go\in\Supp(\mu)$, $\gtta\mapsto
U(\gtta, \go)\in C^\infty(\bbS; \, \bbR)$ with bounded derivatives.

The existence and uniqueness in $L^2(\gl\otimes\go)$ of a solution to \eqref{eq:AR} can be tackled using Banach fixed point arguments 
(see \cite[Section 4.7]{cf:SellYou}), but one can obtain more regularity from the theory of fundamental solutions of parabolic equations.

More precisely, it is usual to interpret Equation \eqref{eq:AR} as the strong formulation of the weak equation (where $\nu \in\cC([0, T],
\cM_{1}(\bbS\times\bbR))$ and $F$ is any bounded function on $\bbS\times\bbR$ with twice bounded derivatives w.r.t. $\gtta$):
\begin{align}
\intSR F(\gtta, \go)\nu_t(\dd\gtta, \dd\go)&= \intSR F(\gtta, \go)\nu_0(\dd\gtta, \dd\go) +\frac12\int_0^t \intSR F''(\gtta, \go)\nu_s(\dd\gtta,
\dd\go)\dd s\nonumber\\
&+ \int_0^t \intSR F'(\gtta, \go) \left(\intSR J(\gtta-\cdot)\dd\nu_s+ U(\gtta, \go)\right)\nu_s(\dd\gtta,\dd\go)\dd s,
\label{eq:weak formulation qt with U}
\end{align}
where the second marginal (w.r.t. to the disorder $\go$) of the initial condition $\nu_0(\dd\gtta, \dd\go)$ is $\mu(\dd\go)$ so that one can
write
\begin{equation}
 \label{eq:initial condition nu zero}
\nu_0(\dd\gtta, \dd\go) = \nu_0^\go(\dd\gtta) \mu(\dd\go)\, ,
\end{equation}
where $\nu_0^\go$ is a probability measure on $\bbS$, for $\mu$-a.e. $\go$.

As already mentioned, a proof of the existence of a solution on $[0, T]$ of \eqref{eq:weak formulation qt with U} can be obtained
from the almost-sure convergence of the empirical measure of the microscopic system \cite{cf:eric}. One can also find a proof of uniqueness of
such a solution relying on arguments introduced in \cite{cf:Oelschlager}. 

The regularity result can be stated as follows:
\begin{proposition}
 \label{prop:regularity solution qt}
For all probability measure $\nu_0(\dd\gtta, \dd\go)=\nu_0^\go(\dd\gtta)\mu(\dd\go)$ on $\bbS\times\bbR$, for all $T>0$, there exists a unique
solution $\nu$ to \eqref{eq:weak formulation qt with U} in
$\cC([0, T], \cM_1(\bbS\times\bbR))$ such that for all $F\in\cC(\bbS\times\bbR)$,
\begin{equation}
 \lim_{t\searrow0}\intSR F(\gtta, \go)\nu_t(\dd\gtta, \dd\go)= \intSR F(\gtta, \go) \nu_0^\go(\dd\gtta) \mu(\dd\go). 
\end{equation}

Moreover, for all $t>0$, $\nu_t$ is absolutely continuous with respect to $\lambda_1\otimes\mu$ and for $\mu$-a.e. $\go\in Supp(\mu)$, its
density $(t,\gtta, \go)\mapsto p_t(\gtta, \go)$ is strictly positive on $(0, T]\times \bbS$, is $\cC^\infty$ in $(t,
\gtta)$ and solves the Fokker-Planck equation \eqref{eq:AR}.\end{proposition}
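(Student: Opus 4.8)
\emph{Proof strategy.} The existence and uniqueness of a solution $\nu\in\cC([0,T],\cM_1(\bbS\times\bbR))$ to the weak equation \eqref{eq:weak formulation qt with U}, with the stated behaviour as $t\searrow0$, are taken from the literature: existence follows from the almost sure convergence of the empirical measure of the particle system \cite{cf:eric}, uniqueness from an adaptation of \cite{cf:Oelschlager} (alternatively, a Banach fixed point in the spirit of \cite[Section~4.7]{cf:SellYou}). What remains — and is the substance of the proposition — is the regularity of $\nu_t$ for $t>0$. The plan is to \emph{freeze the non-linearity}: once $\nu$ is fixed, the coefficient
\[
b_t(\gtta,\go)\,:=\,\langle J\ast\nu_t\rangle_\mu(\gtta)+U(\gtta,\go)\,=\,\intSR J(\gtta-\gp)\,\nu_t(\dd\gp,\dd\go')+U(\gtta,\go)
\]
becomes a \emph{given} function, and the problem reduces to a family of linear parabolic equations to which classical theory applies.

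First I would note that $\go$ plays the role of an inert label: taking in \eqref{eq:weak formulation qt with U} test functions $F(\gtta,\go)=g(\go)$ depending only on $\go$ kills the last two terms, so the $\go$-marginal of $\nu_t$ equals $\mu$ for every $t$; disintegrating, $\nu_t(\dd\gtta,\dd\go)=\nu_t^\go(\dd\gtta)\,\mu(\dd\go)$ with $\nu_t^\go$ a probability on $\bbS$ for $\mu$-a.e.\ $\go$, measurably in $\go$. Because $J(\cdot)=-K\sin(\cdot)$ is smooth with bounded derivatives and $\nu_t$ is a probability measure, $\gtta\mapsto\langle J\ast\nu_t\rangle_\mu(\gtta)$ is $\cC^\infty$ with all derivatives bounded uniformly in $t$, and it is continuous in $t$ by the weak continuity of $t\mapsto\nu_t$; together with the hypotheses on $U$, every $\partial_\gtta^k b$ is then bounded on $[0,T]\times\bbS\times\Supp(\mu)$ and continuous in $(t,\gtta)$. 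Next, inserting product test functions $F(\gtta,\go)=f(\gtta)g(\go)$ with $f\in\cC^\infty(\bbS)$, $g\in\cC_b(\bbR)$ into \eqref{eq:weak formulation qt with U}, and using that $g$ is arbitrary (together with a countable dense family of $f$'s, to obtain a single $\mu$-null set), one finds that for $\mu$-a.e.\ $\go$ the curve $t\mapsto\nu_t^\go$ is a weak solution on $[0,T]$ of the linear Fokker--Planck equation $\partial_t p=\tfrac12\gD p-\partial_\gtta\big(p\,b_t(\cdot,\go)\big)$ with initial datum $\nu_0^\go$.

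I would then invoke the classical theory of fundamental solutions for parabolic equations. Written in non-divergence form, $\partial_t p=\tfrac12\,p''-b_t\,p'-(\partial_\gtta b_t)\,p$, this equation has coefficients that are $\cC^\infty$ in $\gtta$ and continuous in $t$, hence admits a fundamental solution $\gG^\go(t,\gtta;s,\gp)$ that is strictly positive, satisfies Aronson-type Gaussian bounds, and is smooth in $(t,\gtta)$ for $t>s$; moreover the Cauchy problem with a probability measure as initial datum has a unique solution, necessarily $p_t(\gtta,\go)=\int_\bbS\gG^\go(t,\gtta;0,\gp)\,\nu_0^\go(\dd\gp)$ for $t>0$. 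By this uniqueness together with the uniqueness of $\nu$, one gets $\nu_t^\go(\dd\gtta)=p_t(\gtta,\go)\,\dd\gtta$ for $t>0$, so $\nu_t$ is absolutely continuous with respect to $\gl_1\otimes\mu$ with strictly positive density $p_t$, which solves \eqref{eq:AR} classically; the weak convergence $\nu_t\to\nu_0$ as $t\searrow0$ is built into the weak formulation. The cases $U(\cdot,\go)\equiv\go$ and $U\equiv0$, i.e.\ \eqref{FKP kuramoto disorder} and \eqref{FKP kuramoto disorder without drift}, are covered by the very same argument.

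The step that requires real care is improving the \emph{time} regularity to $\cC^\infty$: a priori $b$ is only continuous in $t$, so the first pass yields only limited smoothness in $t$. I would close this by a bootstrap. De Giorgi--Nash--Moser estimates for the divergence-form equation (uniformly elliptic, bounded drift) give joint local Hölder continuity of $p$; then $b_t(\gtta,\go)=\intSR J(\gtta-\gp)\,p_t(\gp,\go')\,\dd\gp\,\mu(\dd\go')+U(\gtta,\go)$, being a $\gtta$-convolution of $p$ with the smooth kernel $J$, is Hölder in $(t,\gtta)$; parabolic Schauder estimates then upgrade $p$ to $\cC^{1+\alpha/2,\,2+\alpha}_{\mathrm{loc}}$; and iterating — using that $b$ is always $\cC^\infty$ in $\gtta$ and inherits from $p$, through that smooth $\gtta$-convolution, whatever time regularity $p$ has acquired — produces $p\in\cC^\infty$ in $(t,\gtta)$. (Equivalently one may argue with the Duhamel formula $p_t=e^{t\gD/2}\nu_0-\int_0^t e^{(t-s)\gD/2}\partial_\gtta(p_s b_s)\,\dd s$ and the smoothing of the heat semigroup on $\bbS$.) This is the only genuinely delicate point; everything else is routine bookkeeping.
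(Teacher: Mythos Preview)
Your proposal is correct and follows essentially the same route as the paper: freeze the nonlinearity by setting $R(t,\gtta,\go)=\intSR J(\gtta-\cdot)\,\dd\nu_t+U(\gtta,\go)$, reduce to a linear parabolic equation in $\gtta$ for each $\go$, invoke the fundamental solution with Aronson bounds, and identify with $\nu_t$ via uniqueness of the weak formulation. You are more explicit than the paper about the disintegration step and about the bootstrap needed to upgrade time regularity from continuous to $\cC^\infty$ (the paper dismisses the latter as ``standard bootstrap methods''), but the architecture of the argument is the same.
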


\begin{proof}[Proof of Proposition \ref{prop:regularity solution qt}]
Let us fix $T>0$, $\go\in\Supp(\mu)$ and $t\mapsto \nu_t$ the unique solution in $\cC([0, T], \cM_1(\bbS\times\bbR))$ to \eqref{eq:weak
formulation qt with U}. Let us define $R(t, \gtta, \go):= \intSR J(\gtta-\cdot)\dd\nu_t+ U(\gtta, \go)$ and consider the linear equation
\begin{equation}
 \label{eq:linear equation qt with Rt}
\partial_t p_t(\gtta,\go)\, =\, \frac{1}{2} \gD p_t(\gtta,\go) -\partial_\gtta \Big(p_t(\gtta,\go)R(t, \gtta, \go)\Big)\, ,
\end{equation}
such that for $\mu$-a.e. $\go$, for all $F\in\cC(\bbS)$, 
\begin{equation}
\label{eq:linear equation qt with Rt cond limit}
\intS F(\gtta)p_t(\gtta, \go) \dd\gtta\,
\stackrel{t\searrow 0}{\longrightarrow}
 %{\underset{t\searrow0}{\rightarrow}}\, 
 \intS F(\gtta) \nu_0^\go(\dd\gtta)\, . 
\end{equation}

For fixed $\go\in\Supp(\mu)$, $R(\cdot, \cdot, \go)$ is continuous in time and $\cC^\infty$ in $\gtta$.

Suppose for a moment that we have found a weak solution $p_t(\gtta,\go)$ to \eqref{eq:linear equation qt with Rt}-\eqref{eq:linear equation qt
with Rt cond limit} such that for
$\mu$-a.e. $\go$, $p_t(\cdot, \go)$ is strictly positive on $(0,T]\times\bbS$. In particular for such a solution $p$, the quantity
$\intS p_t(\gtta, \go)\dd\gtta$ is conserved for $t>0$, so that $p_t(\cdot, \go)$ is indeed a probability density for all $t>0$. Then
both
probability measures $\nu_t(\dd\gtta, \dd\go)$ and
$p_t(\gtta,\go)\dd\gtta\mu(\dd\go)$ solve

\begin{align}
\intSR F(\gtta, \go)\nu_t(\dd\gtta, \dd\go)&= \intSR F(\gtta, \go)\nu_0(\dd\gtta, \dd\go) +\frac12\int_0^t \intSR F''(\gtta, \go)\nu_s(\dd\gtta,
\dd\go)\dd s\nonumber\\
&+ \int_0^t \intSR F'(\gtta, \go) R(t, \gtta, \go)\nu_s(\dd\gtta,\dd\go)\dd s.
\label{eq:weak formulation qt with Rt}
\end{align}
By \cite{cf:eric} or \cite[Lemma 10]{cf:Oelschlager}, uniqueness in \eqref{eq:weak formulation qt with U} is precisely a consequence
of uniqueness in \eqref{eq:weak formulation qt with Rt}. Hence, by uniqueness in \eqref{eq:weak formulation qt with Rt},
$\nu_t(\dd\gtta, \dd\go)=p_t(\gtta,\go)\dd\gtta\mu(\dd\go)$, which is the result. So it suffices to exhibit a weak solution
$p_t(\gtta,\go)$ to \eqref{eq:linear equation qt with Rt} such that \eqref{eq:linear equation qt with Rt cond limit} is satisfied. 

\medskip

This fact can be deduced from standard results for uniform parabolic PDEs (see \cite{cf:Aronson} and \cite{cf:Friedman} for precise definitions).
In particular, a usual result, which can be found in \cite[\S 7 p.658]{cf:Aronson}, states that \eqref{eq:linear equation qt with Rt} admits a
fundamental solution $\Gamma(\gtta, t; \gtta', s, \go)$ ($t>s$),
which is bounded above and below (see \cite[Th.7, p.661]{cf:Aronson}):
\begin{equation}
 \label{eq:control Gamma}
\frac{1}{C\sqrt{t-s}} \exp\left( \frac{-C(\gtta - \gtta')^2}{\sqrt{t-s}}\right)\leq \Gamma(\gtta, t; \gtta', s, \go) \leq \frac{C}{\sqrt{t-s}}
\exp\left( \frac{-(\gtta - \gtta')^2}{C\sqrt{t-s}}\right)\, .
\end{equation}
Note that the constant $C>0$ only depends on $T$ and the \emph{structure} of the linear operator in \eqref{eq:linear equation qt with
Rt} (see \cite[Th.7, p.661]{cf:Aronson} and \cite[\S 1, p.615]{cf:Aronson}). In particular, since $(\gtta, \go)\mapsto U(\gtta, \go)$ is
bounded, this constant does not depend on $\go$.

Note that the proof given in \cite{cf:Aronson} is done for $\gtta\in\bbR$ but can be readily adapted to our case ($\gtta\in\bbS$). 

Moreover, thanks to Corollary 12.1, p.690 in \cite{cf:Aronson}, the following expression of $p_t(\gtta, \go)$
\begin{equation}
\label{eq:pt Gamma}
p_t(\gtta, \go)=\intS \Gamma(\gtta, t; \gtta', 0, \go)\nu_0^\go(\dd\gtta')
\end{equation}
defines a weak solution of \eqref{eq:linear equation qt with Rt} on $(0,T]\times\bbS$ (namely a weak solution on $(\tau, T]\times \bbS$, for all
$0<\tau<T$) such that \eqref{eq:linear equation qt with Rt cond limit} is satisfied. The positivity and boundedness of $p_t(\cdot, \go)$ for
$t>0$ is an easy consequence of \eqref{eq:control Gamma}. The smoothness of $p_\cdot(\cdot, \go)$ on $(0, T]\times \bbS$ can be derived by
standard bootstrap methods.
\end{proof}

We focus now on the regularity of the solution $p_t(\gtta, \go)$ of \eqref{eq:AR} with respect to the disorder $\go$. We assume here that the
initial condition $\nu_0$ is such that for all $\go\in\Supp(\mu)$, $\nu_0^\go(\dd\gtta)$ is absolutely continuous with respect to the Lebesgue
measure $\lambda_1$ on $\bbS$: there exists a positive integrable function $\gamma(\cdot, \go)$ of integral $1$ on $\bbS$ such that
$\nu_0^\go(\dd\gtta)=\gamma(\gtta, \go)\dd\gtta$. Then we have

\begin{lemma}[Regularity w.r.t. the disorder]
\label{lem:regularity density disorder}
For every $(t_0, \gtta_0)\in(0, \infty)\times\bbS$, for every $\go_0$ which is an accumulation point in $\Supp(\mu)$ such that the following
holds
\begin{equation}
\label{eq:condition continuity gamma}
\intS |\gamma(\gtta, \go)-\gamma(\gtta, \go_0)|\dd\gtta \to 0, \quad\text{as $\go\to\go_0$}\, ,
\end{equation}
then the solution $p$ of \eqref{eq:AR} defined on $(0, \infty)\times\bbS\times \Supp(\mu)$ is continuous at the point $(t_0, \gtta_0, \go_0)$.
\end{lemma}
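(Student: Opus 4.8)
The plan is to read the continuity of $p$ off the representation \eqref{eq:pt Gamma} of the density, controlling separately the two sources of $\go$-dependence: the initial datum $\gamma(\cdot,\go)$, handled via hypothesis \eqref{eq:condition continuity gamma} together with the Gaussian upper bound \eqref{eq:control Gamma}, and the coefficients of the linearized equation, handled via continuous dependence of fundamental solutions on their coefficients. Fix an accumulation point $\go_0\in\Supp(\mu)$ (this is what makes the statement non-vacuous, since the point has to be approachable within $\Supp(\mu)$) and $(t_0,\gtta_0)\in(0,\infty)\times\bbS$, and let $(t,\gtta,\go)\to(t_0,\gtta_0,\go_0)$ with $\go\in\Supp(\mu)$. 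Write $\Gamma_\go(\gtta,t;\gtta',s):=\Gamma(\gtta,t;\gtta',s,\go)$ for the fundamental solution of \eqref{eq:linear equation qt with Rt} with coefficient $R(\cdot,\cdot,\go)$, so that, using $\nu_0^\go(\dd\gtta')=\gamma(\gtta',\go)\dd\gtta'$ and \eqref{eq:pt Gamma}, $p_t(\gtta,\go)=\intS \Gamma_\go(\gtta,t;\gtta',0)\gamma(\gtta',\go)\dd\gtta'$. The first step is the decomposition
\begin{multline*}
p_t(\gtta,\go)-p_{t_0}(\gtta_0,\go_0)
=\intS \big(\Gamma_\go(\gtta,t;\gtta',0)-\Gamma_{\go_0}(\gtta_0,t_0;\gtta',0)\big)\,\gamma(\gtta',\go)\dd\gtta'\\
+\intS \Gamma_{\go_0}(\gtta_0,t_0;\gtta',0)\,\big(\gamma(\gtta',\go)-\gamma(\gtta',\go_0)\big)\dd\gtta'\, .
\end{multline*}

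The second integral is bounded in absolute value by $\big(\sup_{\gtta'\in\bbS}\Gamma_{\go_0}(\gtta_0,t_0;\gtta',0)\big)\intS|\gamma(\gtta',\go)-\gamma(\gtta',\go_0)|\dd\gtta'$; the supremum is finite because $t_0>0$ (upper bound in \eqref{eq:control Gamma}) and the integral tends to $0$ as $\go\to\go_0$ by \eqref{eq:condition continuity gamma}. Since $\gamma(\cdot,\go)$ is a probability density on $\bbS$, the first integral is bounded by $\sup_{\gtta'\in\bbS}|\Gamma_\go(\gtta,t;\gtta',0)-\Gamma_{\go_0}(\gtta_0,t_0;\gtta',0)|$, so it remains to prove that $\Gamma_\go(\gtta,t;\cdot,0)\to\Gamma_{\go_0}(\gtta_0,t_0;\cdot,0)$ uniformly on $\bbS$ as $(t,\gtta,\go)\to(t_0,\gtta_0,\go_0)$.

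This uniform convergence splits into a base-point part and a coefficient part. The base-point part, namely that $(\gtta,t)\mapsto\Gamma_{\go_0}(\gtta,t;\gtta',0)$ is continuous (indeed smooth) for $t>0$, uniformly in $\gtta'\in\bbS$, is part of the standard regularity theory for fundamental solutions of uniformly parabolic equations already invoked in the proof of Proposition \ref{prop:regularity solution qt} (\cite{cf:Aronson,cf:Friedman}), using compactness of $\bbS$ and $t_0>0$. For the coefficient part we use the crucial observation that the $\go$-dependence of the drift $R(s,\gtta,\go)=\langle J\ast p_s\rangle_\mu(\gtta)+U(\gtta,\go)$ enters only through $U(\gtta,\go)$, since $\langle J\ast p_s\rangle_\mu$ does not depend on $\go$; by the regularity assumptions on $U$, $U(\cdot,\go)\to U(\cdot,\go_0)$ uniformly on $\bbS$ as $\go\to\go_0$. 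Moreover $R(\cdot,\cdot,\go)$ carries $\go$-uniform bounds near $\go_0$: it is bounded (boundedness of $J$, of $U$, and of the mass of $\nu_s$), smooth in $\gtta$ with $\go$-uniform bounds on derivatives (smoothness of $J$ and of $U(\cdot,\go)$), and Lipschitz in $s$ on $[0,T]$ with a $\go$-uniform constant — the last point following from the weak formulation \eqref{eq:weak formulation qt with U}, tested against functions of $\gtta$ alone, together with the weak continuity of $s\mapsto\nu_s$ up to $s=0$. Hence the structural constants governing the Levi parametrix construction of $\Gamma_\go$ can be chosen uniformly in $\go$ near $\go_0$, and the classical continuous dependence of that construction on uniformly convergent coefficients (\cite[Ch.~1]{cf:Friedman}) gives $\Gamma_\go(\gtta,t;\gtta',0)\to\Gamma_{\go_0}(\gtta,t;\gtta',0)$ uniformly in $(\gtta,\gtta')\in\bbS\times\bbS$ and in $t$ on compact subsets of $(0,T]$. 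Combining the two parts yields the required uniform convergence, and the lemma follows. The main obstacle is exactly the coefficient part: one must verify that $R(\cdot,\cdot,\go)\to R(\cdot,\cdot,\go_0)$ in a topology strong enough (uniform convergence with $\go$-uniform parabolic Hölder/Lipschitz control) for the continuous-dependence theory of fundamental solutions to apply, which is where the drop-out of the nonlocal term and the Lipschitz-in-time estimate extracted from the weak formulation do the work; everything else reduces to the Gaussian bounds \eqref{eq:control Gamma} and dominated convergence.
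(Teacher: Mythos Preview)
Your proof is correct but takes a different route from the paper. You decompose $p_t(\gtta,\go)-p_{t_0}(\gtta_0,\go_0)$ by varying the fundamental solution $\Gamma_\go$ with $\go$ and then invoke continuous dependence of $\Gamma_\go$ on its coefficients (together with joint continuity in the base point). The paper instead freezes the fundamental solution at $\go_0$ and writes the difference $u=p_t(\cdot,\go)-p_t(\cdot,\go_0)$ as the solution of an \emph{inhomogeneous} equation with forcing $\mathcal R=\partial_\gtta\big[p_t(\cdot,\go)(R(\cdot,\cdot,\go)-R(\cdot,\cdot,\go_0))\big]$, so that Duhamel gives
\[
u(t,\gtta,\go)=\intS \Gamma_{\go_0}(\gtta,t;\gtta',0)\big(\gamma(\gtta',\go)-\gamma(\gtta',\go_0)\big)\dd\gtta' - \int_0^t\intS \Gamma_{\go_0}(\gtta,t;\gtta',s)\,\mathcal R(s,\gtta',\go)\dd\gtta'\dd s\, .
\]
The first term is handled exactly as you do; the second only needs the Gaussian bounds on $\Gamma_{\go_0}$ and the $\go$-uniform boundedness of $p$ and $\partial_\gtta p$ (obtained from \eqref{eq:control Gamma} and gradient estimates for $\Gamma$), together with $U(\cdot,\go)\to U(\cdot,\go_0)$. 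The trade-off: the paper's Duhamel argument is more self-contained, relying only on the bounds already established for a \emph{single} fundamental solution, whereas your argument outsources the work to a continuous-dependence statement for the Levi parametrix which, while true under the uniform H\"older control you carefully set up, is less directly citable as a ready-made theorem. At a technical level the two are close cousins, since the standard proof of continuous dependence of $\Gamma_\go$ on coefficients is itself a Duhamel argument of exactly this type.
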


\begin{proof}[Proof of Lemma \ref{lem:regularity density disorder}]
 For any $\go$ in the support of $\mu$, let for all $t>0$, $\gtta\in\bbS$
\begin{equation}
 \label{eq:def u diff q1 q2}
u(t, \gtta, \go) := p_t(\gtta, \go) - p_t(\gtta, \go_0),
\end{equation}
where $(p_t(\cdot, \cdot))_{t\geq 0}$ is the unique solution of \eqref{eq:linear equation qt with Rt}. It is easy to see that $u$ is a strong
solution to the following PDE
\begin{equation}
 \label{eq:pde verified by u}
\partial_t u(t, \gtta, \go) - \left[\frac12 \gD u(t, \gtta) - \partial_\gtta\left( u(t, \gtta) R(t, \gtta, \go_0)\right)\right] =
\mathcal{R}(t, \gtta, \go),
\end{equation}
where $\mathcal{R}(t, \gtta, \go):= \partial_\gtta\left[ p_t(\gtta, \go)\left( R(t, \gtta, \go) -R(t, \gtta, \go_0) \right)\right]$ and
with initial condition (since $\nu_0^\go(\dd\gtta)=\gamma(\gtta, \go)\dd\gtta$ for all $\go$)
\begin{equation}
 \label{eq:initial condition u}
u(t, \gtta, \go)|_{t\searrow0} =\gamma(\gtta, \go) - \gamma(\gtta, \go_0).
\end{equation}
Then applying \cite[Th. 12 p.25]{cf:Friedman}, $u(t, \gtta, \go)$ can be expressed as 
\begin{equation}
 \label{eq:u Gamma}
u(t, \gtta, \go)=\intS \Gamma(\gtta, t; \gtta', 0, \go_0)(\gamma(\gtta, \go) - \gamma(\gtta, \go_0))\dd\gtta' - \int_0^t \intS \Gamma(\gtta,
t;\gtta', s, \go_0)\mathcal{R}(s, \gtta', \go)\dd\gtta'\dd s.
\end{equation}
For the first term of the RHS of \eqref{eq:u Gamma}, we have
\begin{equation}
 \left|\intS \Gamma(\gtta, t; \gtta', 0, \go_0)(\gamma(\gtta, \go) - \gamma(\gtta, \go_0))\dd\gtta'\right|\leq
\frac{C}{\sqrt{t}}\intS|\gamma(\gtta, \go) - \gamma(\gtta, \go_0)|\dd\gtta'\, ,
\end{equation}
which converges to $0$, for fixed $t>0$, by hypothesis \eqref{eq:initial condition u}.

Secondly, it is easy to see from the definition \eqref{eq:pt Gamma} of the density $p$ and the estimates \eqref{eq:control Gamma} and \cite[Th.9
p.263]{cf:Friedman} concerning the fundamental solution $\Gamma$ that both $p_t(\gtta, \go)$ and $\partial_\gtta
p_t(\gtta, \go)$ are bounded uniformly on $(t, \gtta, \go)\in [0, T]\times \bbS \times \Supp(\mu)$. In particular, a
standard result shows that for fixed $(t, \gtta)$, the second term of the RHS of \eqref{eq:u Gamma} goes to $0$ as
$\go\to\go_0$. But then the joint continuity of $p$ at $(t_0,\gtta_0, \go_0)$ follows from \eqref{eq:pt Gamma} and uniform estimates on $\Gamma$
(see \cite[Th.9 p.263]{cf:Friedman}).
\end{proof}

\end{appendix}

\end{document}